\newcommand{\wei}[1]{{\color{blue}#1}}
\newcommand{\wei}[1]{#1}
\newcommand{\rev}[1]{{\color{blue}#1}}
\newcommand{\rev}[1]{#1}
\newcommand{\rtwo}[1]{{\color{blue}#1}}
\newcommand{\rtwo}[1]{#1}
\newtheorem{proposition}{Proposition}
\newtheorem{lemma}{Lemma}
\newtheorem{problem}{Problem}
\newtheorem{theorem}{Theorem}
\newtheorem{definition}{Definition}
\newtheorem{corollary}{Corollary}
\def\ud{\mathrm{d}}
\begin{document}

\title{A Novel Mobile Data Contract Design with Time Flexibility}

\author{Yi~Wei,~\IEEEmembership{Student Member,~IEEE,}
        Junlin~Yu,~\IEEEmembership{Student Member, ~IEEE,}
        Tat M.~Lok,~\IEEEmembership{Senior~Member, ~IEEE,}
        and~Lin~Gao,~\IEEEmembership{Senior~Member,~IEEE}% <-this % stops a space
\IEEEcompsocitemizethanks{
%\IEEEcompsocthanksitem
%This work is supported by the National Natural Science Foundation of China (Grant Number
%61771162).
\IEEEcompsocthanksitem Yi Wei, Junlin Yu and Tat M. Lok are with the Department
of Information Engineering, The Chinese University of Hong Kong, Shatin, N.T.,
Hong Kong. Email: {\{wy012, yj112, tmlok\}@ie.cuhk.edu.hk}
% note need leading \protect in front of \\ to get a newline within \thanks as
% \\ is fragile and will error, could use \hfil\break instead.
\IEEEcompsocthanksitem
Lin Gao (corresponding author) is with the Department of Electronic and Information Engineering, Harbin Institute of Technology, Shenzhen, China.
Email: {gaol@hit.edu.cn}

Part of the results have appeared in IEEE ICCS 2016 \cite{7833585}.
}}% <-this % stops a space
%\thanks{Manuscript received April 19, 2005; revised August 26, 2015.}}

% The paper headers
%\markboth{Journal of \LaTeX\ Class Files,~Vol.~14, No.~8, August~2015}%
%{Shell \MakeLowercase{\textit{et al.}}: Bare Advanced Demo of IEEEtran.cls for IEEE Computer Society Journals}

\IEEEtitleabstractindextext{%
\begin{abstract}
In conventional mobile data plans, the data is associated with a fixed period (e.g., one month) and the unused data will be cleared at the end of each period.
To take advantage of consumers' heterogeneous demands across different periods and meanwhile to provide more time flexibility, some mobile data service providers (SP) have offered data plans with different lengths of period. In this paper, we consider the data plan design problem for a single SP, who provides data plans with different lengths of period for consumers with different characteristics of data demands.
We propose a contract-theoretic approach, wherein the SP offers a period-price data plan
contract which consists of a set of period and price combinations, indicating the prices for data with different periods.
We study the optimal data plan contract designs under two different models: discrete and continuous consumer-type models, depending on whether the consumer type is discrete or continuous.
In the former model, each type of consumers are assigned with a specific period-price combination.
In the latter model, the consumers are first categorized into a finite number of groups, and each group of consumers (possibly with different types) are assigned with a specific period-price combination.
We systematically analyze the incentive compatibility (IC) constraint and individual rationality (IR) constraint, which ensure each consumer to choose the data plan with the period-price combination intended for his type.
We further derive the optimal contract that maximizes the SP's expected profit, meanwhile satisfying the IC and IR constraints of consumers.
Our numerical results show that the proposed optimal contract can increase the SP's profit by $35\%$, comparing with the conventional fixed monthly-period data plan.
\end{abstract}

% Note that keywords are not normally used for peerreview papers.
\begin{IEEEkeywords}
Mobile Data Plan; Time Flexibility; Data Contract Design; Contract Theory
\end{IEEEkeywords}}

% make the title area
\maketitle

% To allow for easy dual compilation without having to reenter the
% abstract/keywords data, the \IEEEtitleabstractindextext text will
% not be used in maketitle, but will appear (i.e., to be "transported")
% here as \IEEEdisplaynontitleabstractindextext when compsoc mode
% is not selected <OR> if conference mode is selected - because compsoc
% conference papers position the abstract like regular (non-compsoc)
% papers do!
\IEEEdisplaynontitleabstractindextext
% \IEEEdisplaynontitleabstractindextext has no effect when using
% compsoc under a non-conference mode.

\IEEEpeerreviewmaketitle

\ifCLASSOPTIONcompsoc
\IEEEraisesectionheading{\section{Introduction}\label{sec:introduction}}
\else
\section{Introduction}
\label{sec:introduction}
\fi

\subsection{Background and Motivation}
The fast development and wide adoption of smart phones and tablet devices not only drive the explosive growth of mobile data consumption, but also increase the consumption fluctuation over different plan periods \cite{joe2015mobile}. Conventionally, each data plan specifies data cap for a specific period, which is usually one month. The unused data will be cleared at the end of the period, and the overused data will be charged an additional fee. Hence, the consumers with large consumption fluctuation over different periods (e.g., those having frequent trips) will suffer a large utility loss, because the overused data cannot be compensated by the leftover data in the previous periods.

To deal with this problem, researchers in both academia and industry have proposed many data pricing schemes \cite{sen2013survey, 6849296, 6848090, 6562872, 6849295, 7218537, 7151094}.
However, these pricing schemes do not
fully take advantage of users'  heterogeneous demands across periods.
Seizing this opportunity, some major mobile data service providers (SP) including AT\&T \cite{TP-toolbox-web} and T-mobile \cite{Tmobile} have launched a novel data plan called \emph{rollover data plan}, where unused data from the monthly plan allowance rolls over for \emph{one} billing period.

The rollover data plan provides customers more time flexibility by decreasing the frequency of clearing unused data from once per month to once every two months.
However, the two-month's time flexibility may not be enough for the consumers with highly varying data demand.
\rev{In order to provide time flexibility to more types of customers, we propose a new type of data plan, which specifies the length of period.
As a simple example, an SP can offer multiple data plans to consumers, e.g., $1$GB for every month with a price of \$$30$ per month, $6$GB for every six months (i.e., with average data cap $1$GB per month) with a price of \$$32$ per month, and $12$GB for every year (i.e., with average data cap $1$GB per month) with a price of \$$35$ per month.}
Such data plans can benefit different types of consumers.
On one hand, the consumers with highly varying data demand may prefer the data plan with a long period (which provides more time flexibility and can potentially reduce the uncertainty of data demand).
On the other hand, the consumers with rarely varying data demand may prefer the data plan with a smaller period (which can reduce the total cost due to the lower unit price).
In such a scenario, a natural problem for the SP is how to design a proper set of data plans to maximize its expected profit.
The problem is challenging due to (a) the information asymmetry between the SP and consumers  and (b) the difficulty in discriminating consumers.

\subsection{Key Results and Contributions}

In the first part of this paper, we propose a contract-theoretic mechanism for a single SP for discrete-consumer-type model.
The SP offers a \emph{contract} consisting of a set of period-price combinations, where each period-price combination is designed for a specific type of consumers with a specific data demand distribution.
Contract theory has been widely applied in solving economics, marketing and network problems \cite{6342942} \cite{6464648}, and is a useful tool in designing incentive compatible (IC) and individual rational (IR) mechanism \cite{bolton2005contract} to elicit the private information of end users.
In this work, we adopt the contract theory to solve the SP's profit maximization problem under information asymmetry.
Specifically, we first provide the IC and IR constraints for the feasible contract to guarantee the truthful demand information revelation of consumers, based on which we further derive the optimal contract that maximizes the SP's expected profit.
%The disadvantage of this mechanism is that we need to provide a period-price combination for each type of users, which requires infinite combinations. Hence, to utilize this mechanism, we need some additional data processing tools, for example, using \emph{clustering} \cite{bishop2006pattern} to project the user types into limited ones.

In the second part of the paper, we extend our study to a more general mechanism, which is designed for continuous-consumer-type model.
In this case, providing a period-price combination for each consumer type is equivalent to providing infinite combinations, which is not realistic and not consumer friendly.
Therefore, our mechanism for continuous consumer types includes the procedure of dividing users into groups according to their types.
Then, we design limited pairs of period-price combinations, where each combination is designed for a group of consumers.
It is very challenging to use a limited period-price combination to model the infinite consumer types, which usually leads to an NP-hard problem \cite{li2014dynamic}.
Therefore, an alternative maximizing algorithm is introduced to find a sub-optimal solution.
In the algorithm, we alternatively update the period assignments and group boundaries in order to maximize the SP's total profit.
The main challenge of this method lies in the step of updating group boundaries with fixed period assignment due to the non-convexity of the problem.
However, by exploiting the unimodal structure of the objective function, we can obtain the sufficient condition for the optimal solution and show that sufficient condition is satisfied for different scenarios.
%\wei{Specifically, we first provide the IC and IR constraints for any division of groups in consumers. Then, we derive the (sub-optimal) contract on both the group division and the price and period assignments.}

The main contributions of the paper are as follow.
\begin{enumerate}
    \item \emph{Novel Model:} We study the SP's mobile data plan design problem from the perspective of data period, which provides consumers with more time flexibility.
    To our best knowledge, this is the first work that systematically studies such a new data plan design perspective.
	\item \emph{Novel Method:} We propose a novel period-price data plan based on the contract theory.
	Rather than specifying a price for each data cap in conventional data plans, our proposed data plan specifies a data price for each data period. A higher price is associated with a longer data period.
	\item \emph{Systematic Solution:} \rev{We first analyze the period-price contract for discrete-consumer-type model, and then extend the analysis to a more general model with continuous-consumer-type.
	In continuous-consumer-type model, we assume that the consumer type follows a continuous distribution but the SP offers only a limited number of contract items, which is different from traditional continuous modeling in contract theory.
	In both cases, we analyze the feasibility (incentive compatibility and individual rationality) of the proposed period-price contract systematically, based on which we further derive the optimal contract that maximizes the SP's profit.}
	\item \emph{Performance Evaluation:} We compare our proposed optimal contract with the conventional monthly-period scheme through numerical simulations. Numerical results show that our proposed contract can increase the SP's profit over $35\%$.
	%, and can increase the SP's profit by 25\% comparing with the monthly-period benchmark scheme.
\end{enumerate}

\subsection{Related Literature on Data Pricing Schemes}

The survey by Sen \emph{et al.} in \cite{sen2013survey} reviewed the past pricing proposals and discussed several potential research problems. There are mainly three categories of methods to alleviate the problem of monthly data plan inflexibility: (1) \emph{Shared Data Plan} \cite{6849296} \cite{6848090} allows sharing data quota among multiple devices or users, and hence to decrease the average unit usage cost.
(2) \emph{Sponsored Data} \cite{6562872} \cite{6849295} is offered by the content service providers, to sponsor the end users for the traffic of viewing their content.
(3) \emph{Secondary Data Trading} \cite{7218537} \cite{7151094} is proposed by the service providers, which allows users to trade their unused mobile data with each other.
However, all the above methods have their own disadvantages. Shared data plan does not fully take advantage of the heterogeneous demands across plan periods, because there exists possibility that everyone in the shared data is in the peak month. Sponsored data is too specific to the contents, because not every content provider is willing to provide this sponsorship. Secondary data trading is not convenient for operation, since the consumer has to buy or sell every time when he is running out of data or has data left unused.

The papers \cite{7562159,shidi,shidi2,add-1,add-2} are the pioneer works that study the rollover data plan. Zheng \emph{et al.} in \cite{7562159} evaluated the benefits of rollover data for both SPs and users as well as identify the types of users who would upgrade to rollover data plans.
Wang \emph{et al.} in \cite{shidi} and \cite{shidi2} analyzed the interactions between an SP and its subscribed users under both traditional and rollover
data plans.
In \cite{add-1} and \cite{add-2}, they further analyzed the competitive market with multiple SPs offering rollover data plans with fixed rollover period (i.e., one month).
However, none of them considers the design of data plans from the dimension of length of period.
To the best of our knowledge, this work is the first paper that systematically studies a data plan design regarding the length of period.

The remainder of this paper is organized as follows.
We first analyze the optimal contract for discrete-consumer-type model in Section \ref{sec:systemmodel} and Section \ref{sec:contractfando}.
Specifically, we present the system model and formulate the problem in Section \ref{sec:systemmodel}.
We analyze the feasibility of the contract and propose the optimal contract in Section \ref{sec:contractfando}.
We analyze the generalized contract for continuous-consumer-type model with group division in Section \ref{sec:groupdivision}.
Performance evaluation is illustrated in Section \ref{sec:simulation}.
Finally, Section \ref{sec:conclusion} concludes the paper.

%%%%%%%%%%%%%%%%%%%%%%%%%%%%%%%%%%%%%%%%%%%%%%%%%%%%%%%%%%%%%%%%
%%%%%%%%%%%%%%%%%%%%%%%%%%%%%%%%%%%%%%%%%%%%%%%%%%%%%%%%%%%%%%%%
\section{System Model}
%%%%%%%%%%%%%%%%%%%%%%%%%%%%%%%%%%%%%%%%%%%%%%%%%%%%%%%%%%%%%%%%
%%%%%%%%%%%%%%%%%%%%%%%%%%%%%%%%%%%%%%%%%%%%%%%%%%%%%%%%%%%%%%%%
\label{sec:systemmodel}

%%%%%%%%%%%%%%%%%%%%%%%%%%%%%%%%%%%%%%%%%%%%%%%%%%%%%%%%%%%%%%%%
\subsection{Service Provider Modeling}
%%%%%%%%%%%%%%%%%%%%%%%%%%%%%%%%%%%%%%%%%%%%%%%%%%%%%%%%%%%%%%%%
In the conventional data plans, the SP provides a unique period choice (e.g., one month). In those data plans, each consumer can consume data up to a quantity of $q$ during one period. In our proposed data plans, the SP offers multiple data plans with different plan periods, and we denote the length of the period as $t$ ($t \in (0,+\infty)$). \footnote{
For presentation convenience, in the rest of the paper, we use ``period $t$" to refer to ``period with length $t$", and use ''unit period" to refer to ''period with length 1".
We assume that $t$ can be any positive number, so that it is possible to provide any time flexibility.
}

To sharpen the insights of plan periods, we assume that all the data plans are with the same data cap $q$ for a unit time period.
Figure 1 is an example of the contract of data plans provided by the SP.
In the contract, the SP offers a set of combinations, where each combination (also called a contract item) corresponds to one data plan.
In each combination, there is a period $t$ and a corresponding unit period \emph{price} $\pi(t)$.
In other words, a consumer who chooses the contract item $\{t,\pi(t)\}$, needs to pay a price $t\pi(t)$, and can consume data up to a quantity of $tq$ in a period of $t$.
Intuitively, the unit period price $\pi(t)$ is an increasing function of $t$, because a larger period provides more time flexibility for consumers.

\begin{figure}[tbp]
	\centering
		\includegraphics[width=85mm]{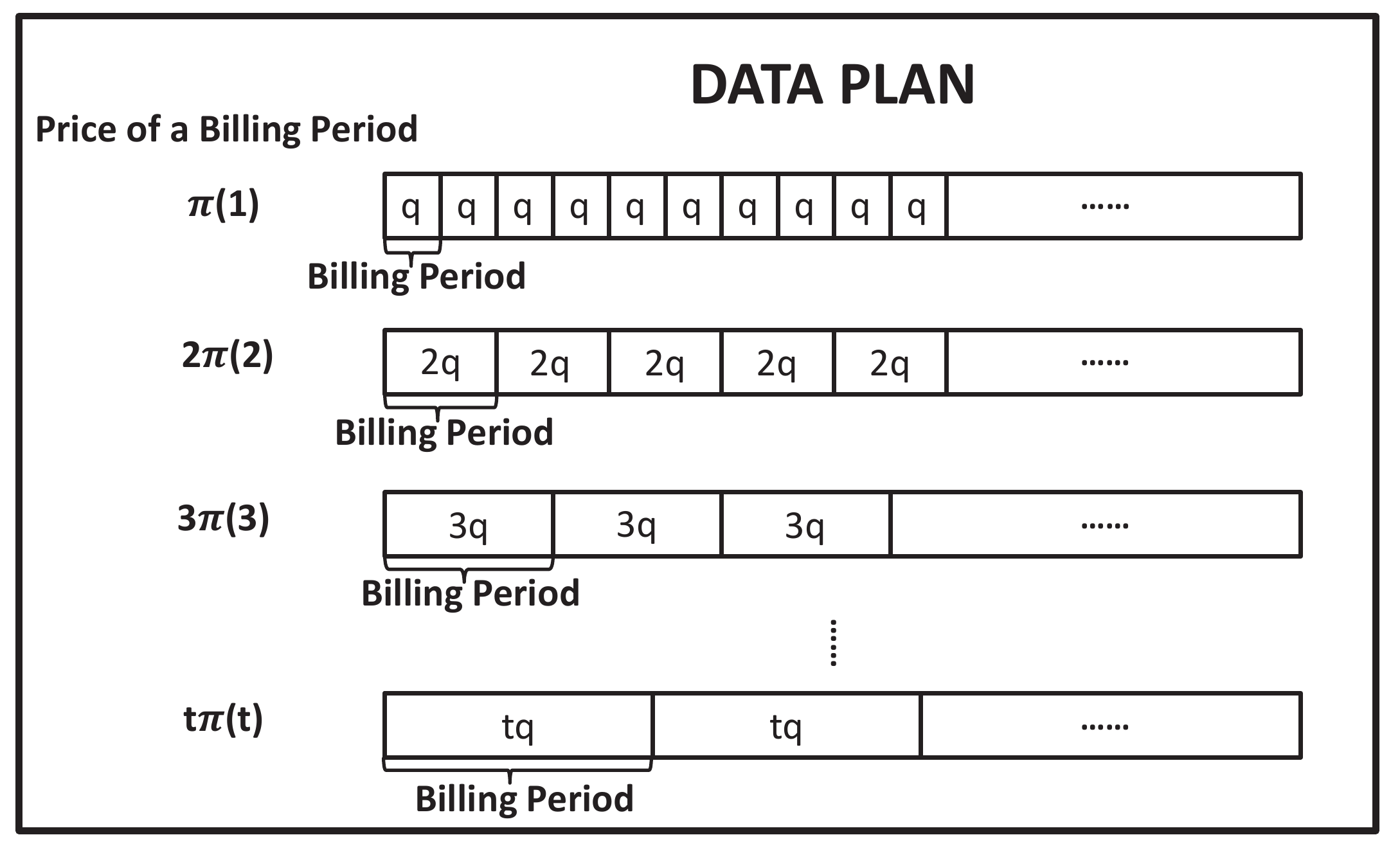}
	\caption{An example of the contract of data plans.}
	\label{fig:dataplan}
	\vspace{-2mm}
\end{figure}

We define the \emph{cost} for the SP as the average expense of providing a data plan of period $t$ with data cap $tq$.
Here, we denote the total expense of a data plan of period $t$ as $tC(t)$, where the cost (i.e., average expense for one period) $C(t)$ is formulated as follows
\begin{equation}
C(t) = W(t) + C_0,\notag
\end{equation}
where $C_0$ is the fixed cost (e.g., the fixed monthly spectrum license fee for providing service and the infrastructure maintenance cost) and $W(t)$ is the time-specific cost. We can show that $W(t)$ is monotone increasing on $t$. This is because with a smaller $t$, the SP can better predict and hence schedule the demand of consumers. We further assume that $W(t)$ grows more rapidly with larger period $t$, which means $W_{tt}(t)\geq 0$.\footnote{
$g_{x}(.)$ denotes the first order derivative of $g(.)$ \rtwo{with} respect to $x$ ($\partial g(.)/\partial x$). $g_{xx}(.)$ and $g_{xy}(.)$ denotes $\partial^2 g(.)/\partial x^2$ and $\partial^2 g(.)/\partial x \partial y$ respectively.
}
Then, we can see $C_t(t) > 0$ and $C_{tt}(t) \geq 0$.

The SP's \emph{profit} comes from selling data plans with different periods. We use $R(t)$ to denote the unit period profit of SP from the data plan $\{t, \pi(t)\}$, which is the gap between the unit period price $\pi(t)$ and the unit period cost $C(t)$, i.e.,
\begin{equation}
R(t) = \pi(t) - C(t).\notag
\end{equation}

%%%%%%%%%%%%%%%%%%%%%%%%%%%%%%%%%%%%%%%%%%%%%%%%%%%%%%%%%%%%%%%%
\subsection{Consumer Modeling}
%%%%%%%%%%%%%%%%%%%%%%%%%%%%%%%%%%%%%%%%%%%%%%%%%%%%%%%%%%%%%%%%
\label{sec:consumermodeling}
We assume consumer $i$'s data demand per unit period follows normal distribution with density function $f(x|\mu_i, \sigma_i)$, where $\mu_i$ is the mean and $\sigma_i$ is the standard deviation \footnote{
\rev{Normal distribution is widely adopted in modeling consumers' demand in different areas.
For example, \cite{b13} applied normal distribution to model consumers' connectivity of mobile network, while \cite{b14} applied normal distribution to model consumers' demand of electricity.}}.
\rev{In our paper, we focus on the demand fluctuation over unit periods, so our design of contract is based on $\sigma$, i.e., standard deviation of consumers' data demand per unit period.
Therefore, in our modeling, we assume that consumers are divided into different groups with different average monthly demand $\mu$, and we only design contract for a particular group with a certain $\mu$.
Hence, we assume that every consumer has the same $\mu$ and different $\sigma$.}
%In our paper, we focus on the demand fluctuation over unit periods, so we assume every consumer has the same $\mu$ and different $\sigma$.
For writing convenience, we call a consumer as a type-$\sigma$ consumer if the standard deviation of his data demand is $\sigma$.
We first assume that the consumer types follow a discrete distribution, and the SP aims to design a specific contract item for each consumer type.\footnote{
Mathematically, when we choose a large enough number of types, the discrete-consumer-type model can well approximate a continuous-consumer-type model.
In reality, since consumers are heterogeneous, it is more reasonable to assume that consumers' types follow a continuous distribution \cite{a2}. Hence, we will introduce the case of continuous consumer type distribution in Section \ref{sec:groupdivision}.
}
%Here, we assume that the consumer data is being processed \footnote{For example, the clustering method.}
%before we study our mechanism, and the resulted data set only contains limited consumer groups and each group of consumers are with identical consumer type.
We denote the set containing all consumer types as $\Sigma$.
%In reality, processing data might be very complicated and lead to additional cost. Hence, we will discuss the mechanism that contains group division in Sec. IV.
\rev{Due to the properties of the normal distribution, when a period of the data plan is changed from $1$ unit period to a period of $t$, a consumer's total data demand within a period still follows normal distribution.
The parameters of this normal distribution are as follows: the mean value of the consumer's total data demand within $t$ months is $t\mu$, and the standard deviation is $\sqrt{t}\sigma$.
\footnote{
Intuitively, a consumer with a larger $\sigma$ is with a higher data fluctuation, and naturally needs a data plan with higher time flexibility.
}}

%Due to the properties of the normal distribution, when the period is changed from $1$ to $t$, the data demand still follows normal distribution with mean equal to $t\mu$, and the standard deviation equal to $\sqrt{t}\sigma$.
%\footnote{
%Intuitively, a consumer with a larger $\sigma$ is with a higher data fluctuation, and naturally needs a data plan with higher time flexibility.
%}
We use $V(\sigma, t)$ to denote the \emph{valuation} of a type-$\sigma$ consumer for the contract with period $t$.
\rev{Similar to \cite{b1}, for a given period $t$ with data cap $tq$, we define the unit period valuation $V(\sigma,t)$ as a linear function on the average data consumption per unit period:}
\begin{align}
V(\sigma,t) \!&=\! \alpha\left(\mu-\frac{1}{t}\int_{tq}^{+\infty}(x-tq)f(x|t\mu,\sqrt{t}\sigma)\ud{x}\right)\notag\\
&=\!\alpha\left(\mu - \frac{1}{t}\int_{\frac{\sqrt{t}\Delta{q}}{\sigma}}^{+\infty}(\sqrt{t}\sigma x-t\Delta{q})f(x|0,1)\ud{x}\right),\label{eq:valuation}
\end{align}
where $\alpha >0$ is a predefined parameter, which represents the valuation of unit data, and is identical for all consumers.
For writing convenience, we define $\Delta{q} = q - \mu$.
%\rev{The modeling of the consumer's valuation of consuming data per unit period in \ref{eq:valuation} is }
We assume the cost of usage exceeding data cap is very large, so that the consumption will not exceed the data cap $tq$.
\rev{Hence, the average data consumption per unit period equals to the consumer's average data demand $\mu$, which is the consumer's maximum average data consumption per unit period, minus average unsatisfied data demand.
In a $t$ period data plan, the total unsatisfied demand of a type-$\sigma$ consumer in a period of $t$ is $\int_{tq}^{+\infty}(x-tq)f(x|t\mu,\sqrt{t}\sigma)\ud{x}$, then the average unsatisfied data demand per unit period is $\frac{1}{t}\int_{tq}^{+\infty}(x-tq)f(x|t\mu,\sqrt{t}\sigma)\ud{x}$.
For example, if a consumer with average data demand $\mu =9$GB and standard deviation $\sigma = 2$ consume monthly data plan with a quota of $10$GB, and his demands of consecutive two months are $11$GB and $7$GB, then his unsatisfied data demand of these two months are $(11-10)^+=1$GB and $(7-10)^+=0$GB, respectively.
Since we assume the consumer's demand per unit period follows normal distribution with $\mu=9$ and $\sigma=2$, the average unsatisfied data demand equals to $\int_{-\infty}^{+\infty}(x-10)^{+}f(x|9,2)\ud{x}=\int_{10}^{+\infty}(x-10)f(x|9,2)\ud{x} = 0.5$GB,
which means his average total data consumption is $9\text{GB}-0.5\text{GB}=8.5\text{GB}$.}
%Hence, in \eqref{eq:valuation}, we should subtract the valuation of unsatisfied demand $\int_{tq}^{+\infty}(x-tq)f(x|t\mu,\sqrt{t}\sigma)\ud{x}$ in period $t$.
As shown in Figure 2, if all the consumers choose the plan of period $1$ and $q = \mu$, only the consumer with $\sigma = 0$ (the black dashed line) can reach an average consumption of $\mu$.
On the contrary, the consumer with $\sigma = 4$ (the red dashed line) can satisfy his demand in the $1^{st}$, $3^{rd}$, $6^{th}$ and $12^{th}$ periods, but only consumes $q$ in the other periods due to the data cap.

\begin{figure}[tbp]
	\centering
	\vspace{-4mm}
		\includegraphics[width=83mm]{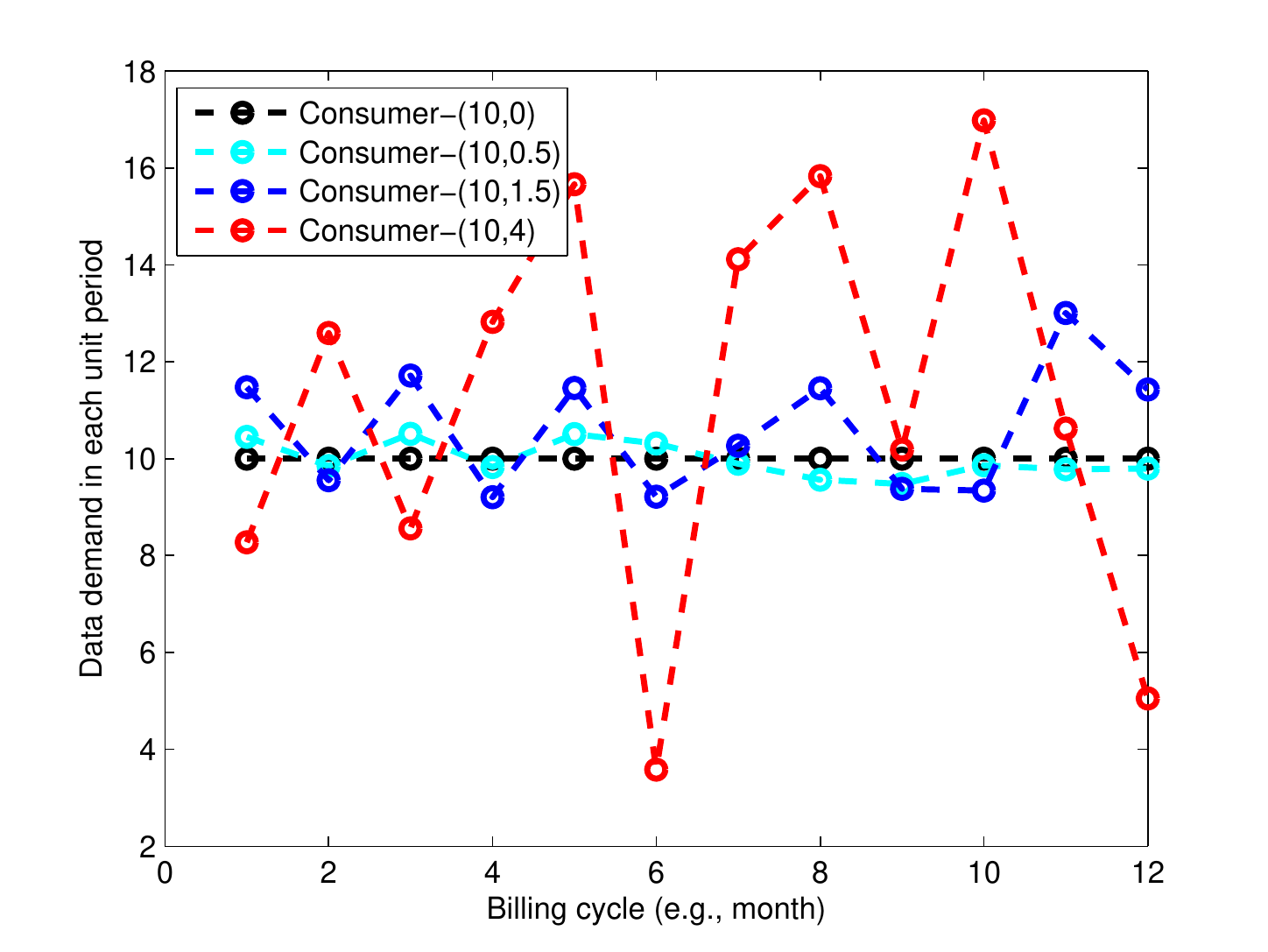}	
		\vspace{-3mm}
	\caption{A data demands example of consumers with different types-($\mu$, $\sigma$).}
	\vspace{-3mm}
	\label{fig:demand}
\end{figure}

From \eqref{eq:valuation}, we can find that
\begin{equation}
V_t(\sigma,t)= \frac{\alpha\sigma}{2t^{1.5}}\int_{\frac{\sqrt{t}\Delta{q}}{\sigma}}^{+\infty}xf(x|0,1)\ud{x}>0\notag
\end{equation}
and
\begin{equation}
V_{\sigma}(\sigma,t)=-\frac{\alpha}{\sqrt{t}}\int_{\frac{\sqrt{t}\Delta{q}}{\sigma}}^{+\infty}xf(x|0,1)\ud{x}<0,\notag
\end{equation}
which means that 1) without considering price, every consumer prefers a larger period and 2) the consumer with larger type has a smaller valuation.
Furthermore, $V_{tt}(\sigma,t)$
%\footnote{
%$V_{tt}(\sigma,t) = -\frac{3\sigma}{4(t\!+\!1)^{\frac{5}{2}}}\!\!\int_{\frac{\Delta{q}\sqrt{t\!+\!1}}{\sigma}}^{+\infty}\!\!f(x|0,\!1)x\ud{x}-\frac{{\Delta{q}}^2}{4\sigma (t\!+\!1)^{\frac{3}{2}}}f(\frac{\Delta{q}\sqrt{t\!\!+\!\!1}}{\sigma}|0,\!1) < 0$
%}
is negative through direct calculation, meaning that $V(\sigma,t)$ grows more slowly in a larger period.

The \emph{utility} of the consumer with type-$\sigma$ who accepts the data plan with period $t$ is defined as the gap between his valuation and payment of the data plan:
\begin{equation}
U(\sigma,t) = V(\sigma,t) - \pi(t).\notag
\end{equation}

%%%%%%%%%%%%%%%%%%%%%%%%%%%%%%%%%%%%%%%%%%%%%%%%%%%%%%%%%%%%%%%%
\subsection{Contract Formulation}
%%%%%%%%%%%%%%%%%%%%%%%%%%%%%%%%%%%%%%%%%%%%%%%%%%%%%%%%%%%%%%%%
In this paper, we aim to design an optimal contract for the SP to maximize its expected profit. The contract contains a set of combinations, each of which includes a period $t$ and a corresponding unit period price $\pi(t)$.
Each consumer can only select one combination. Therefore, for each consumer type $\sigma \in \Sigma$, the SP will assign a period $t(\sigma)$ with unit period price $\pi(t(\sigma))$. The set of period-price combinations shown above is a \emph{period-price contract}. We denote the contract as $\mathscr{C}_d = \{\big(t(\sigma),\pi(t(\sigma))\big)|~\forall \sigma \in \Sigma\}$.

A feasible contract should satisfy the following two constraints: 1) For any type-$\sigma$ consumer, he prefers the contract item with period $t(\sigma)$ at the price $\pi(t(\sigma))$ than any other contract items; 2) The SP should guarantee that the contract designed for any type-$\sigma$ consumer leads to non-negative utility so that the consumer is willing to accept the contract designed for him. These two constraints are named as incentive compatibility (IC) constraint and individual rationality (IR) constraint correspondingly. Specifically, we define,
\begin{definition}\label{def:definition1}
IC constraint:
\begin{equation}
V(\sigma,t(\sigma))-\pi(t(\sigma)) \geq V(\sigma,t(\sigma'))-\pi(t(\sigma')), ~~\forall \sigma' \neq \sigma.\notag
\end{equation}
\end{definition}
\begin{definition}\label{def:definition2}
IR constraint:
\begin{equation}
V(\sigma,t(\sigma)) - \pi(t(\sigma)) \geq 0,~~\forall \sigma \in \Sigma.\notag
\end{equation}
\end{definition}
\noindent Any feasible contract satisfies IC and IR constraints, and any contract satisfying IC and IR constraints is feasible. The overall profit of the SP from a feasible contract $\mathscr{C}_d = \{t(\sigma), \pi(t(\sigma))|~\forall \sigma \in \Sigma\}$ can be written as:
\begin{equation}
R = \sum_{\sigma \in \Sigma}N_{\sigma}\Big(\pi(t(\sigma))-C\big(t(\sigma)\big)\Big), \label{eq:overallre}
\end{equation}
where $N_{\sigma}$ is the number of consumers with type-$\sigma$.

\begin{table}[tbp]
\centering
\caption{Notation}
\label{table:1}
\begin{tabular}{|c|c|}
\hline
Symbol&Meanings\\
\hline
\hline
$q$ & the average data cap per unit period\\
\hline
$t_i$& the time length of the $i^{th}$ data plan's period\\
\hline
$\pi_i$&the unit period price of the $i^{th}$ data plan\\
\hline
\multirow{2}*{$\{t_i,\pi_i\}$}& the data item with price $t_i\pi_i$\\
 & and data quota $t_iq$ in a period of $t_i$\\
\hline
\multirow{3}*{$C(t_i)$} &the SP's unit period expense of offering data plan\\
&with period $t_i$, i.e., the total expense of a data plan\\
&of period $t_i$ is $t_iC(t_i)$\\
\hline
\multirow{2}*{$R(t_i)$}&the unit period profit of the SP\\
 &from the data plan item $\{t_i, \pi_i\}$\\
\hline
$\mu$&consumers' average data demand in a unit period\\
\hline
\multirow{7}*{$\sigma_i$}& In discrete-consumer-type model:\\
 &$\sigma_i$ represents the standard deviation\\
 & of the unit period data demand of consumer $i$;\\
& In continuous-consumer-type model:\\
& $\sigma_i$ represents the largest standard deviation\\
&among the unit period data demand\\
&of the consumers in the $i^{th}$ group\\
\hline
\multirow{2}*{$V(\sigma, t)$}&the valuation of a type-$\sigma$ consumer\\
& for the contract with period $t$\\
\hline
$N$& the total number of consumers\\
\hline
$N_i$& the number of consumers in group $i$\\
\hline
\multirow{2}*{$g(\cdot)$}&the probability density function\\
& of the distribution of consumer type\\
\hline
\multirow{2}*{$G(\cdot)$}&the cumulative distribution function\\
& of the distribution of consumer type\\
\hline
\end{tabular}
\end{table}

%%%%%%%%%%%%%%%%%%%%%%%%%%%%%%%%%%%%%%%%%%%%%%%%%%%%%%%%%%%%%%%%
%%%%%%%%%%%%%%%%%%%%%%%%%%%%%%%%%%%%%%%%%%%%%%%%%%%%%%%%%%%%%%%%
\section{Contract Feasibility and Optimality}
\label{sec:contractfando}
%%%%%%%%%%%%%%%%%%%%%%%%%%%%%%%%%%%%%%%%%%%%%%%%%%%%%%%%%%%%%%%%
%%%%%%%%%%%%%%%%%%%%%%%%%%%%%%%%%%%%%%%%%%%%%%%%%%%%%%%%%%%%%%%%
In this section, we first show the necessary and sufficient conditions for the contract to be feasible. Then we derive the best period assignments and price assignments for the optimal contract that maximizes the SP's overall profit, which is defined in \eqref{eq:overallre}.

%Recall that the propose of this paper is to maximize the overall profit of the SP, which is defined in \eqref{eq:overallre}.

According to our assumption in Sec. \ref{sec:consumermodeling}, there is a finite number of consumer types $I$. Without loss of generality, we let $\sigma_1 < \sigma_2 < \cdots < \sigma_I$. Then, we rewrite the period $t(\sigma_i)$ assigned to the type-$\sigma_i$ consumers as $t_i$, and rewrite the price $\pi(t(\sigma_i))$ corresponding to the period $t(\sigma_i)$ as $\pi_i$ for simplicity.
Accordingly, we can rewrite the SP's profit function $R(t(\sigma_i))$ and cost function as $C(t(\sigma_i))$ as $R(t_i)$ and cost function as $C(t_i)$, respectively. For   convenience, we summarize the key notations in Table \ref{table:1}.

Therefore, the contract optimization problem can be written as:
\begin{problem}\label{pro:problem1}
\begin{equation}
\begin{aligned}
&\max_{\substack{\{t_i\} \\ \{\pi_i\}}}\ \sum_{i = 1}^{I}N_i(\pi_i - C(t_i)),\\
&~\text{s.t.}\ \left\{
\begin{aligned}
&V(\sigma_i,t_i) - \pi_i \geq V(\sigma_i,t_j) - \pi_j, ~\forall i,j\!\in\! \mathcal{I},  j\!\neq\! i  &\textbf{(IC)}\\
&V(\sigma_i,t_i) - \pi_i \geq 0,~~\forall i \in\mathcal{I} &\textbf{(IR)}\notag\\
\end{aligned}
\right.
\end{aligned}
\end{equation}
\end{problem}
where $\mathcal{I}\!=\!\{1,2,\ldots, I\}$.

%%%%%%%%%%%%%%%%%%%%%%%%%%%%%%%%%%%%%%%%%%%%%%%%%%%%%%%%%%%%%%%%
\subsection{Feasibility}
%%%%%%%%%%%%%%%%%%%%%%%%%%%%%%%%%%%%%%%%%%%%%%%%%%%%%%%%%%%%%%%%
According to \eqref{eq:valuation}, we have the following property: for a given period length increment, the consumers with larger type will have a larger valuation increment than the consumers with smaller type. We call this property as increasing preference (IP) property.\footnote{Due to space limit, we put all of the detailed proofs in the online technical report \cite{report}.}

\begin{proposition}[IP property]\label{pop:proposition1}
For any consumer types $\sigma > \sigma'$ and any data plan periods $t > t'$, the following condition holds:
\begin{equation}
V(\sigma,t)-V(\sigma,t') > V(\sigma',t)-V(\sigma',t').\label{eq:ip1}
\end{equation}
\end{proposition}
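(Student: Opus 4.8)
The plan is to prove \eqref{eq:ip1} by establishing the stronger ``increasing differences'' statement that the mixed second partial derivative $V_{\sigma t}(\sigma,t)$ is strictly positive everywhere on $\{\sigma>0,\ t>0\}$, and then integrating. Indeed, once $V$ is known to be $C^{2}$ with $V_{\sigma t}>0$, the fundamental theorem of calculus gives
\begin{equation}
\big(V(\sigma,t)-V(\sigma,t')\big)-\big(V(\sigma',t)-V(\sigma',t')\big)=\int_{\sigma'}^{\sigma}\int_{t'}^{t}V_{\sigma t}(s,r)\,\ud r\,\ud s>0,\notag
\end{equation}
which is exactly \eqref{eq:ip1}. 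So the entire argument reduces to computing $V_{\sigma t}$ and reading off its sign.

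For that computation I would start from the first-order derivative already recorded above,
\begin{equation}
V_{\sigma}(\sigma,t)=-\frac{\alpha}{\sqrt{t}}\int_{\frac{\sqrt{t}\Delta{q}}{\sigma}}^{+\infty}x f(x|0,1)\,\ud x,\notag
\end{equation}
and differentiate it with respect to $t$. Two elementary identities for the standard normal density carry the whole thing: $\int_{a}^{+\infty}x f(x|0,1)\,\ud x=f(a|0,1)$ (because $-f(x|0,1)$ is an antiderivative of $xf(x|0,1)$), and $\partial_x f(x|0,1)=-x f(x|0,1)$. Combining the contribution of the variable integration limit (Leibniz's rule) with that of the prefactor $t^{-1/2}$ yields
\begin{equation}
V_{\sigma t}(\sigma,t)=\frac{\alpha}{2\,t^{3/2}}\,f\big(\tfrac{\sqrt{t}\Delta{q}}{\sigma}\big|0,1\big)\left(1+\frac{t\,\Delta{q}^{2}}{\sigma^{2}}\right),\notag
\end{equation}
which is manifestly positive for all $t>0$ and $\sigma>0$; note in particular that the sign is robust to the sign of $\Delta q=q-\mu$, since only $\Delta q^{2}$ enters. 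Equivalently one may differentiate the displayed expression for $V_t(\sigma,t)$ with respect to $\sigma$; the same two identities give the same answer, which also confirms $V_{\sigma t}=V_{t\sigma}$ as needed for the iterated integral.

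The only point requiring a word of care is the regularity of $V$: we must know it is twice continuously differentiable so that Clairaut's theorem and the iterated-integral representation are legitimate. This is immediate from \eqref{eq:valuation}, since the integrand $(\sqrt{t}\sigma x-t\Delta q)f(x|0,1)$ together with its partials in $\sigma$ and $t$ is continuous and the relevant integrals converge locally uniformly on $\{\sigma>0,\ t>0\}$, so differentiation under the integral sign is valid. Consequently the ``hard part'' here is purely mechanical bookkeeping in the differentiation step; there is no conceptual obstacle. A calculus-light alternative that avoids second derivatives altogether is to observe that $V_t(\sigma,t)=\frac{\alpha\sigma}{2t^{3/2}}f(\tfrac{\sqrt{t}\Delta q}{\sigma}|0,1)$ and that, substituting $u=1/\sigma$, the map $u\mapsto u^{-1}f(cu|0,1)$ with $c=\sqrt{t}\,\Delta q$ is strictly decreasing, so $V_t(\cdot,t)$ is strictly increasing in $\sigma$; integrating $V_t$ over $r\in[t',t]$ with $t>t'$ then delivers \eqref{eq:ip1} directly.
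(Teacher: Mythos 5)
Your proposal is correct and follows essentially the same route as the paper: it computes the cross-partial $V_{\sigma t}$, verifies it is strictly positive (your closed form $\frac{\alpha}{2t^{3/2}}f(\frac{\sqrt{t}\Delta q}{\sigma}|0,1)(1+\frac{t\Delta q^2}{\sigma^2})$ agrees with the paper's expression after applying $\int_a^{+\infty}xf(x|0,1)\,\ud x=f(a|0,1)$), and then represents the double difference as the iterated integral $\int_{\sigma'}^{\sigma}\int_{t'}^{t}V_{\sigma t}\,\ud r\,\ud s>0$. The extra remarks on regularity and the single-derivative alternative are fine but not needed beyond what the paper already does.
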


Now, we try to find the necessary and sufficient conditions for the contract to be feasible, i.e., the necessary and sufficient conditions of IC and IR constraints.
We show the first necessary condition in the following lemma.
\begin{lemma}\label{lm:lemma1}
For any contract $\mathscr{C}_d =\{(t_i,\pi_i)\}$, if it is feasible, then the following condition holds:
\begin{equation}
\sigma_i>\sigma_j \Rightarrow t_i \geq t_j.\notag
\end{equation}
\end{lemma}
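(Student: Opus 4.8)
The plan is to derive the period monotonicity purely from the two ``crossing'' incentive-compatibility constraints relating types $\sigma_i$ and $\sigma_j$, together with the increasing-preference property of Proposition \ref{pop:proposition1}; the individual-rationality constraints play no role here.

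First I would write the IC constraint for a type-$\sigma_i$ consumer evaluated against the contract item designed for type $\sigma_j$, namely $V(\sigma_i,t_i)-\pi_i \geq V(\sigma_i,t_j)-\pi_j$, and the IC constraint for a type-$\sigma_j$ consumer evaluated against the item designed for type $\sigma_i$, namely $V(\sigma_j,t_j)-\pi_j \geq V(\sigma_j,t_i)-\pi_i$. Adding these two inequalities cancels $\pi_i$ and $\pi_j$ and leaves
\[
V(\sigma_i,t_i)-V(\sigma_i,t_j) \;\geq\; V(\sigma_j,t_i)-V(\sigma_j,t_j).
\]
Next I would argue by contradiction: suppose $\sigma_i>\sigma_j$ but $t_i<t_j$. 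Then I can apply the IP property of Proposition \ref{pop:proposition1} with the larger type $\sigma_i$, the smaller type $\sigma_j$, and the strictly ordered periods $t_j>t_i$, obtaining $V(\sigma_i,t_j)-V(\sigma_i,t_i) > V(\sigma_j,t_j)-V(\sigma_j,t_i)$, which is exactly $V(\sigma_i,t_i)-V(\sigma_i,t_j) < V(\sigma_j,t_i)-V(\sigma_j,t_j)$. This directly contradicts the inequality obtained by summing the two IC constraints, so the assumption $t_i<t_j$ is impossible and $t_i\geq t_j$ follows.

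I do not expect a genuine obstacle here, since this is the standard single-crossing/monotonicity step in screening models. The only points that need care are bookkeeping the directions of the inequalities when the two IC constraints are added, and noting that Proposition \ref{pop:proposition1} is stated for strictly ordered periods, so it must be invoked with $t_j>t_i$ strict --- which is precisely the contradiction hypothesis. An alternative to the contradiction form is to combine the same two IC inequalities to show that $V(\sigma_i,\cdot)-V(\sigma_j,\cdot)$ is nondecreasing across the assigned periods and then read off the conclusion, but the contradiction route is the most direct.
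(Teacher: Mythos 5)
Your proof is correct and is essentially identical to the paper's own argument: both sum the two crossed IC constraints to cancel the prices and then invoke the IP property (Proposition \ref{pop:proposition1}) under the contradiction hypothesis $t_j>t_i$ to reach the opposite strict inequality. No gaps.
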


%The proof of Lemma \ref{lm:lemma1} is given in Appendix \ref{sec:prooflm1}  of our online technical report \cite{report}.

Lemma \ref{lm:lemma1} shows that the consumer with larger type $\sigma$ should be assigned a longer period.

We show the second necessary condition in the following lemma.

\begin{lemma}\label{lm:lemma2}
For any contract $\mathscr{C}_d =\{(t_i,\pi_i)\}$, if it is feasible, then the following condition holds:
\begin{equation}
t_i>t_j \Leftrightarrow \pi_i>\pi_j.\notag
\end{equation}
\end{lemma}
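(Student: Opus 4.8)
The plan is to obtain both implications directly from the two IC inequalities attached to the pair $(i,j)$, using only the strict monotonicity $V_t(\sigma,t)>0$ that was already established from \eqref{eq:valuation}; neither Lemma~\ref{lm:lemma1} nor the IP property (Proposition~\ref{pop:proposition1}) is needed here.

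For the forward direction $t_i>t_j \Rightarrow \pi_i>\pi_j$, I would start from the IC constraint for the type-$\sigma_j$ consumer, who must (weakly) prefer item $j$ to item $i$:
\begin{equation}
V(\sigma_j,t_j)-\pi_j \;\ge\; V(\sigma_j,t_i)-\pi_i,\notag
\end{equation}
which rearranges to $\pi_i-\pi_j \ge V(\sigma_j,t_i)-V(\sigma_j,t_j)$. Since $t_i>t_j$ and $V_t(\sigma_j,\cdot)>0$, the right-hand side equals $\int_{t_j}^{t_i}V_t(\sigma_j,s)\,\ud s>0$, hence $\pi_i>\pi_j$. For the reverse direction $\pi_i>\pi_j \Rightarrow t_i>t_j$, I would argue by contraposition: assume $t_i\le t_j$ and show $\pi_i\le\pi_j$. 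If $t_i<t_j$, apply the forward direction with the roles of $i$ and $j$ interchanged to get $\pi_j>\pi_i$. If $t_i=t_j$, the IC constraint for type $\sigma_i$ reads $V(\sigma_i,t_i)-\pi_i\ge V(\sigma_i,t_i)-\pi_j$, i.e. $\pi_i\le\pi_j$ (and symmetrically $\pi_j\le\pi_i$, so in fact $\pi_i=\pi_j$). In either subcase $\pi_i\le\pi_j$, which is the contrapositive of the claim; combining the two directions gives the stated equivalence.

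Since the argument is just two invocations of IC together with an elementary use of $V_t>0$, there is no genuine obstacle; the only point deserving care is the $t_i=t_j$ subcase of the reverse direction, where one must note that equal periods force \emph{equal} prices rather than a strict inequality. This observation is precisely what makes the lemma an equivalence between the strict inequalities $t_i>t_j$ and $\pi_i>\pi_j$, rather than, say, between $t_i\ge t_j$ and $\pi_i\ge\pi_j$.
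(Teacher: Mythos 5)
Your proof is correct and follows essentially the same route as the paper: the forward direction is identical (IC for type $\sigma_j$ plus $V_t>0$), and your contrapositive argument for the reverse direction is just a minor stylistic variant of the paper's direct argument, which applies IC for type $\sigma_i$ to get $V(\sigma_i,t_i)-V(\sigma_i,t_j)\ge\pi_i-\pi_j>0$ and concludes $t_i>t_j$ from strict monotonicity. Your extra care with the $t_i=t_j$ subcase is sound but not needed in the paper's direct formulation.
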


% The proof of Lemma \ref{lm:lemma2} is given in Appendix \ref{sec:prooflm2} of our online technical report \cite{report}.

Lemma \ref{lm:lemma2} shows that a longer period must be assigned with a higher price. If there is a service with longer period and a lower price, then everyone will select this data plan, and the data plans with shorter periods are meaningless.
\rev{Together with the observations from Lemma \ref{lm:lemma1}, we can find that the data plan with higher price will be assigned to the consumer with larger consumer type (i.e., standard deviation).}

From the above two lemmas and IP property, we have the following theorem, which shows the necessary and sufficient conditions for a feasible contract.

\begin{theorem}[Necessary and Sufficient conditions for a feasible contract] \label{th:theorem1}
For any contract $\mathscr{C}_d =\{(t_i,\pi_i)\}$, its IC and IR constraints are equivalent to the following conditions:
\begin{align}
\bullet~ &0 \leq t_1 \leq t_2 \leq \cdots \leq t_I.\label{eq:t11}\\
\bullet~ &\pi_I \leq V(\sigma_I,t_I).\label{eq:t12}\\
\bullet~&\pi_{i} \geq \pi_{i+1} + V(\sigma_{i+1},t_i)- V(\sigma_{i+1},t_{i+1}).\label{eq:t13}\\
\bullet~&\pi_{i} \leq \pi_{i+1} + V(\sigma_{i},t_i)- V(\sigma_{i},t_{i+1}).\label{eq:t14}
\end{align}
\end{theorem}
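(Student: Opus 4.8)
The plan is to prove the two directions of the equivalence separately. For the forward direction (IC and IR $\Rightarrow$ \eqref{eq:t11}--\eqref{eq:t14}), I would first invoke Lemma~\ref{lm:lemma1} to obtain the monotonicity $t_1 \le t_2 \le \cdots \le t_I$ (together with the trivial $t_1 \ge 0$), which is exactly \eqref{eq:t11}. Condition \eqref{eq:t12} is immediate from the IR constraint applied to type $\sigma_I$: $V(\sigma_I,t_I) - \pi_I \ge 0$. For \eqref{eq:t13}, I would write the IC constraint that prevents a type-$\sigma_{i+1}$ consumer from deviating to the item $(t_i,\pi_i)$: $V(\sigma_{i+1},t_{i+1}) - \pi_{i+1} \ge V(\sigma_{i+1},t_i) - \pi_i$, and rearrange to $\pi_i \ge \pi_{i+1} + V(\sigma_{i+1},t_i) - V(\sigma_{i+1},t_{i+1})$. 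Symmetrically, for \eqref{eq:t14} I would use the IC constraint preventing a type-$\sigma_i$ consumer from deviating to item $(t_{i+1},\pi_{i+1})$: $V(\sigma_i,t_i) - \pi_i \ge V(\sigma_i,t_{i+1}) - \pi_{i+1}$, which rearranges to $\pi_i \le \pi_{i+1} + V(\sigma_i,t_i) - V(\sigma_i,t_{i+1})$. So the forward direction is essentially just selecting the "local" (adjacent-type) IC constraints plus the single binding IR constraint.

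The substantive work is the reverse direction: assuming \eqref{eq:t11}--\eqref{eq:t14}, recover all IC constraints (for non-adjacent pairs) and all IR constraints. For IR: from \eqref{eq:t12} we have $U(\sigma_I,t_I) \ge 0$; for $i<I$, I would telescope the lower bounds \eqref{eq:t13}, namely $\pi_i - \pi_{i+1} \ge V(\sigma_{i+1},t_i) - V(\sigma_{i+1},t_{i+1})$, and combine with the fact (to be checked) that $V(\sigma_i,t_i) - V(\sigma_i,t_{i+1}) \ge V(\sigma_{i+1},t_i) - V(\sigma_{i+1},t_{i+1})$ when $t_i \le t_{i+1}$ — this is precisely the IP property of Proposition~\ref{pop:proposition1} applied with periods $t_{i+1} > t_i$ (or with equality when $t_i = t_{i+1}$). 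Chaining these inequalities down from $i=I$ shows $U(\sigma_i,t_i) \ge U(\sigma_{i+1},t_{i+1}) \ge \cdots \ge U(\sigma_I,t_I) \ge 0$, giving IR for every type.

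For the full IC constraints, the standard argument is to show that the adjacent IC constraints \eqref{eq:t13}--\eqref{eq:t14} plus the single-crossing / IP property imply global incentive compatibility. Concretely, fix a type $\sigma_i$ and a target item $j$; I would argue by induction on $|i-j|$. If $j > i$: \eqref{eq:t14} gives $\pi_{j-1} - \pi_j \le V(\sigma_{j-1},t_{j-1}) - V(\sigma_{j-1},t_j)$, and by IP (since $\sigma_i < \sigma_{j-1}$ and $t_{j-1} \le t_j$) the right-hand side is at most $V(\sigma_i,t_{j-1}) - V(\sigma_i,t_j)$, so $\pi_{j-1} - \pi_j \le V(\sigma_i,t_{j-1}) - V(\sigma_i,t_j)$; this says type $\sigma_i$ weakly prefers item $j-1$ to item $j$. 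Iterating from $j$ down to $i$ shows type $\sigma_i$ prefers its own item to item $j$. The case $j < i$ is symmetric, using \eqref{eq:t13} and IP in the other direction (now $\sigma_i > \sigma_{j+1}$, so IP gives the inequality with the reversed sense that is needed). The main obstacle — and the only place any real care is required — is getting these monotonicity/IP comparisons to point the right way, being careful about the boundary cases where consecutive periods coincide ($t_i = t_{i+1}$), in which case the relevant valuation differences vanish and the IC inequalities degenerate to $\pi_i = \pi_{i+1}$, consistent with Lemma~\ref{lm:lemma2}; everything else is routine telescoping.
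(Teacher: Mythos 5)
Your overall architecture is sound and matches the paper's: necessity comes from Lemma~\ref{lm:lemma1}, the top type's IR, and the two adjacent IC constraints; sufficiency comes from propagating the adjacent constraints to all pairs via the IP property. Your global-IC chaining argument (induction on $|i-j|$, using \eqref{eq:t14} plus IP for deviations to longer periods and \eqref{eq:t13} plus IP for deviations to shorter ones) is correct and is just a different bookkeeping of the paper's proof, which instead inducts on the contract set by adding one lower type at a time.

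The one step that fails as written is the IR chain. You propose to get $U(\sigma_i,t_i)\ge U(\sigma_{i+1},t_{i+1})$ by combining the \emph{lower} bound \eqref{eq:t13}, namely $\pi_i-\pi_{i+1}\ge V(\sigma_{i+1},t_i)-V(\sigma_{i+1},t_{i+1})$, with the IP inequality $V(\sigma_i,t_i)-V(\sigma_i,t_{i+1})\ge V(\sigma_{i+1},t_i)-V(\sigma_{i+1},t_{i+1})$. Both of these bound different quantities from below by the same right-hand side, so they cannot be combined into the \emph{upper} bound on $\pi_i-\pi_{i+1}$ that $U(\sigma_i,t_i)\ge U(\sigma_{i+1},t_{i+1})$ requires. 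The correct ingredients are \eqref{eq:t14} together with the monotonicity $V_\sigma(\sigma,t)<0$, not the IP property: from \eqref{eq:t14}, $V(\sigma_i,t_i)-\pi_i\ge V(\sigma_i,t_{i+1})-\pi_{i+1}$, and since $V(\sigma_i,t_{i+1})\ge V(\sigma_{i+1},t_{i+1})$ this is at least $U(\sigma_{i+1},t_{i+1})$; chaining down to $U(\sigma_I,t_I)\ge 0$ gives IR. (Equivalently, once your global IC is established, IR is immediate from $U(\sigma_i,t_i)\ge V(\sigma_i,t_I)-\pi_I\ge V(\sigma_I,t_I)-\pi_I\ge 0$.) This is exactly how the paper closes its induction step, invoking $V_\sigma<0$ rather than IP at that point; the rest of your sketch, including the degenerate case $t_i=t_{i+1}$ forcing $\pi_i=\pi_{i+1}$, is fine.
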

%The proof of Theorem \ref{th:theorem1} is given in Appendix \ref{sec:app1} of our online technical report \cite{report}.

The feasible regions of price assignments are then
\begin{align}
&\pi_i \in [\pi_{i+1}+V(\sigma_{i+1},t_i)- V(\sigma_{i+1},t_{i+1}), \notag\\
&~~~~~~~~~~~~~~~~~~~~~~~~~~~~~\pi_{i+1} + V(\sigma_{i},t_i)- V(\sigma_{i},t_{i+1})].\notag
\end{align}
From IP property, we have $V(\sigma_{i+1},t_i)- V(\sigma_{i+1},t_{i+1}) < V(\sigma_{i},t_i)- V(\sigma_{i},t_{i+1})$. Therefore, the feasible regions of price assignments are not empty.

Then, the IC and IR constraints in Problem \ref{pro:problem1} can be substituted by the conditions shown in Theorem \ref{th:theorem1}.
%\begin{problem}
%\begin{equation}
%\begin{aligned}
%&\max_{\substack{\{t_i\} \\ \{\pi_i\}}} \ \sum_{i = 1}^{I}N_i(\pi_i - C(t_i))\\
%&~\text{s.t.}\ \left\{
%\begin{aligned}
%&0 \leq t_1 \leq \ldots \leq t_I.\\
%&0 \leq \pi_1 \leq V(\sigma_1,t_1).\\
%&\pi_{i} \geq \pi_{i-1} + V(\sigma_{i-1},t_i)- V(\sigma_{i-1},t_{i-1}), \forall i.\\
%&\pi_{i} \leq \pi_{i-1} + V(\sigma_{i},t_i)- V(\sigma_{i},t_{i-1}),\forall i.\notag\\
%\end{aligned}
%\right.
%\end{aligned}
%\end{equation}
%\end{problem}

%%%%%%%%%%%%%%%%%%%%%%%%%%%%%%%%%%%%%%%%%%%%%%%%%%%%%%%%%%%%%%%%
\subsection{Optimality}
%%%%%%%%%%%%%%%%%%%%%%%%%%%%%%%%%%%%%%%%%%%%%%%%%%%%%%%%%%%%%%%%
To solve the contract optimization problem, we first solve the optimal price assignments given the fixed period assignments, and then solve the optimal period assignments by substituting the derived price assignments. From the conditions in Theorem \ref{th:theorem1}, we can get the following lemma, which leads to the optimal price assignments.

\begin{lemma}\label{lm:lemma3}
For any feasible contract $\mathscr{C}_d =\{(t_i,\pi_i)\}$ with fixed periods $t_1 \leq t_2\leq \ldots \leq t_I$, the set of optimal price assignments $\{\bar{\pi}_i\}$ that maximizes $\sum_{i = 1}^{I}N_i(\pi_i - C(t_i))$ under the conditions in Theorem \ref{th:theorem1} is given by:
\begin{subequations}
\label{eq:lemma3}
\begin{align}
&\bar{\pi}_I = V(\sigma_I,t_I).\\
&\bar{\pi}_i = \bar{\pi}_{i+1}\! +\!V(\sigma_i,t_i)\!-\!V(\sigma_i,t_{i+1}), \forall i \in \{1,\ldots,I\!-\!1\}.
\end{align}
\end{subequations}
\end{lemma}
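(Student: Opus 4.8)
The plan is to prove Lemma \ref{lm:lemma3} by exploiting the chained structure of the feasibility conditions in Theorem \ref{th:theorem1}. The key observation is that the profit objective $\sum_{i=1}^{I} N_i(\pi_i - C(t_i))$ is, for fixed period assignments $\{t_i\}$, a strictly increasing function of each price $\pi_i$ (since $N_i > 0$ and the $C(t_i)$ terms are constants). Hence we want to push every $\pi_i$ as high as the constraints \eqref{eq:t11}--\eqref{eq:t14} permit. The constraints form a ``downward'' chain: \eqref{eq:t12} caps $\pi_I$ from above, while \eqref{eq:t14} caps each $\pi_i$ from above in terms of $\pi_{i+1}$, and \eqref{eq:t13} bounds $\pi_i$ from below in terms of $\pi_{i+1}$.

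First I would argue that at an optimum we must have $\bar\pi_I = V(\sigma_I, t_I)$: any feasible contract with $\pi_I < V(\sigma_I,t_I)$ can be strictly improved by raising $\pi_I$ to this value, and I would check that doing so does not violate any other constraint --- raising $\pi_I$ only relaxes the lower bound \eqref{eq:t13} on $\pi_{I-1}$ and tightens the upper bound \eqref{eq:t14} on $\pi_{I-1}$, but since we will ultimately set all prices at their upper bounds, feasibility is preserved (and one can verify the resulting interval for $\pi_{I-1}$ remains nonempty, as already noted after Theorem \ref{th:theorem1} via the IP property). Then I would proceed by backward induction on $i$ from $I-1$ down to $1$: assuming $\bar\pi_{i+1}, \ldots, \bar\pi_I$ are fixed at the claimed values, the only binding upper bound on $\pi_i$ that still allows the downstream prices to stay at their optimal values is \eqref{eq:t14}, namely $\pi_i \le \pi_{i+1} + V(\sigma_i,t_i) - V(\sigma_i,t_{i+1})$. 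Since increasing $\pi_i$ strictly increases the objective and does not affect the constraints governing $\pi_{i+1},\ldots,\pi_I$ (those involve only indices $\ge i+1$) and only relaxes \eqref{eq:t13}/tightens \eqref{eq:t14} at level $i-1$, the optimum sets $\bar\pi_i = \bar\pi_{i+1} + V(\sigma_i,t_i) - V(\sigma_i,t_{i+1})$, which is exactly \eqref{eq:lemma3}b.

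The one point that needs care --- and which I expect to be the main obstacle --- is verifying that this greedy ``set everything to its upper bound'' assignment is actually \emph{feasible}, i.e.\ that the lower-bound constraints \eqref{eq:t13} are not violated. Concretely, with $\bar\pi_i - \bar\pi_{i+1} = V(\sigma_i,t_i) - V(\sigma_i,t_{i+1})$ we need $V(\sigma_i,t_i) - V(\sigma_i,t_{i+1}) \ge V(\sigma_{i+1},t_i) - V(\sigma_{i+1},t_{i+1})$, equivalently $V(\sigma_{i+1},t_{i+1}) - V(\sigma_{i+1},t_i) \ge V(\sigma_i,t_{i+1}) - V(\sigma_i,t_i)$. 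Since $\sigma_{i+1} > \sigma_i$ and $t_{i+1} \ge t_i$ by \eqref{eq:t11}, this is precisely the IP property of Proposition \ref{pop:proposition1} (with equality in the degenerate case $t_{i+1} = t_i$). So feasibility follows directly from Proposition \ref{pop:proposition1}, and no additional work is needed there. Finally, I would combine the base case and the inductive step to conclude that $\{\bar\pi_i\}$ as defined in \eqref{eq:lemma3} is feasible and that any feasible $\{\pi_i\}$ satisfies $\pi_i \le \bar\pi_i$ for all $i$ (again by backward induction using \eqref{eq:t12} and \eqref{eq:t14}), hence it is the unique profit-maximizing price assignment for the given periods.
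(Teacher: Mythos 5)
Your proposal is correct and follows essentially the same route as the paper: both establish feasibility of the candidate prices via the IP property (as noted after Theorem \ref{th:theorem1}) and then use the chained constraints \eqref{eq:t12} and \eqref{eq:t14} to show every feasible assignment satisfies $\pi_i \le \bar{\pi}_i$. The paper phrases this as a contradiction propagating $\hat{\pi}_j > \bar{\pi}_j$ forward until IR fails at index $I$, whereas you run the equivalent backward induction directly; the substance is identical.
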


\begin{proof} We can observe that the price assignments in Lemma \ref{lm:lemma3} satisfy the conditions in Theorem \ref{th:theorem1}.

Since the period assignments are fixed, the total cost of the SP $\sum_{i = 1}^{I}N_iC(t_i)$ is fixed. Therefore, if there is another set of price assignments $\{\hat{\pi}_{i}\}$ that leads to a larger profit (i.e., $\sum_{i = 1}^{I}N_i\hat{\pi}_{i} > \sum_{i = 1}^IN_i\bar{\pi}_i$), then there is at least one price $\hat{\pi}_j > \bar{\pi}_j$.
According to Theorem \ref{th:theorem1}, to guarantee the feasibility of the contract, the following constraint on $\{\hat{\pi}_i\}$ must be satisfied:
\begin{equation}
\hat{\pi}_{j+1}+V(\sigma_j,t_j) - V(\sigma_j,t_{j+1}) \geq \hat{\pi}_j \label{eq:l4_2}.
\end{equation}
From \eqref{eq:lemma3} we have
\begin{equation}
\hat{\pi}_j >\bar{\pi}_j = \bar{\pi}_{j+1} + V(\sigma_j,t_j) - V(\sigma_j,t_{j+1}).\label{eq:l4_1}
\end{equation}
By substituting \eqref{eq:l4_1} into \eqref{eq:l4_2}, we have $\hat{\pi}_{j+1} > \bar{\pi}_{j+1}$, which implies $\hat{\pi}_{j+2} > \bar{\pi}_{j+2} \Rightarrow \ldots \Rightarrow \hat{\pi}_{I} > \bar{\pi}_{I}$. Since $V(\sigma_I,t_I) = \bar{\pi}_I < \hat{\pi}_I$, the IR condition is violated. Therefore, there does not exist any set of feasible price assignments $\{\hat{\pi}_i\}$ with a larger profit than $\{\bar{\pi}_i\}$.
\end{proof}

From Lemma \ref{lm:lemma3}, we can find that for fixed period assignments, the optimal price assignments are:
\begin{equation}
\bar{\pi}_i\! =\! V(\sigma_I,t_I)\! +\! \sum_{n=i}^{I-1}\Big(V(\sigma_n,t_n)\!-\!V(\sigma_n,t_{n+1})\Big), \forall i \in \mathcal{I}. \label{eq:priceass}
\end{equation}

The maximum overall profit is obtained by solving the following optimization problem
\begin{equation}
\max_{\substack{\{t_i\}}} ~\bar{R}(\{t_i\}),~~~\text{s.t.}~~0 \leq t_1 \leq t_2 \leq \cdots \leq t_I, \label{eq:optunlimited}
\end{equation}
where $\bar{R}(\{t_i\})$ is the overall profit of the optimal contract with fixed period assignments $\{t_i\}$. By subsituting the derived optimal price assignments \eqref{eq:priceass} into \eqref{eq:overallre}, we have:
\begin{align}
\bar{R}(\{t_i\}) \!&=\!\! \sum_{i=1}^I N_i\Big(\bar{\pi}_{i}-C(t_i)\Big)\notag\\
%& =\!\!\! \sum_{i=1}^I\!\!N_i\!\Big(\!\sum_{n=2}^i\!\!\big(V\!(\sigma_n,\!t_n)\!\! -\!\! V\!(\sigma_n,\!t_{n\!-\!1})\big)\!\!+\!\!V\!(\sigma_1,\!t_1)\!\!-\!\!C(t_i)\!\Big)\notag\\
&=\!\!\sum_{i=1}^{I}\!\Big(N_iV(\sigma_i,t_i)  - N_iC(t_i)+ A_i\!\sum_{n=1}^{i-1} N_n\Big), \label{eq:overallofunlimited}
\end{align}
where $A_i = V(\sigma_i,t_i) - V(\sigma_{i-1},t_i)$ and $A_1=0$.

We define $P_i$ as $N_iV(\sigma_i,t_i)  - N_iC(t_i)+ A_i\sum_{n=1}^{i-1} N_n$ and find that $P_i$ is only based on the period $t_i$, which is designed for type $\sigma_i$ consumers. Therefore, the contract optimization problem can be divided into the following $I$ optimization problems.
\begin{equation}
\max_{\substack{t_i}} ~P_i, \forall i \in \{1,2,\ldots I\}.\label{p3}
\end{equation}
We use $\hat{t}_i$ to indicate the period that maximizes $P_i$, i.e., $\hat{t}_i = \arg \max_{\substack{t_i}}P_i$.
Since $P_i$ is concave for all $i$\footnote{It is because both $\partial^2A_i/\partial t_i^2$ and $V_{t_it_i}(\sigma_i,t_i)-C_{t_it_i}(t_i)$ are negative.
}, the optimal $\hat{t}_i$ is either \rtwo{of} the boundary points or the critical point (the point satisfying $\partial P_i/\partial t_i = 0$ and ${\partial}^2P_i/\partial {t_i}^2 \leq 0$).

We can show that if the period assignments $\{\hat{t}_k\}$ are in increasing order, then they are the optimal solution of problem \eqref{eq:optunlimited}. However, it is possible that $\{\hat{t}_i\}$ are not in increasing order, which means that they may not be feasible.
Each set of infeasible period assignments must have at least one infeasible sub-sequence, which is defined in the following definition:
\begin{definition}\label{def:definition3}
A sub-sequence $\{t_i, t_{i+1}, \ldots, t_{j}\}$ is an infeasible sub-sequence if it satisfies the following two conditions:
\begin{align}
&\bullet t_i \geq t_{i+1} \geq \ldots \geq t_j,\notag\\
&\bullet t_i > t_j.\notag
\end{align}
\end{definition}

Next, we design a mechanism to replace each infeasible sub-sequence by a feasible sub-sequence. We apply the following proposition to design the mechanism.

\begin{proposition}\label{pop:proposition2}
There are K concave functions $Y_k(y_k)$ and $\hat{y}_k = \arg\max_{\substack{y_k}}Y_k(y_k)$. If $\hat{y}_1 \geq \hat{y}_2 \geq \ldots \geq \hat{y}_K$, then the optimal solution
\begin{equation}
\{\bar{y}_k\} = \arg\max_{\substack{\{y_k\}}}\sum_{k=1}^KY_k(y_k),~~\text{s.t.} ~y_1 \leq y_2\leq \ldots \leq y_K \label{eq:algorithm1}
\end{equation}
satisfies $\bar{y}_1 = \bar{y}_2 = \ldots = \bar{y}_K$.
\end{proposition}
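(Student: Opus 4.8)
The plan is to exhibit a single optimal point of \eqref{eq:algorithm1} that is already constant, namely $\bar y_1=\dots=\bar y_K=c^*$, where $c^*$ maximizes the aggregate function $\Phi(y):=\sum_{k=1}^K Y_k(y)$. First I would note that $\Phi$ is concave as a sum of concave functions, so (by compactness of the relevant range, which in the paper's setting is a bounded interval) a maximizer $c^*$ exists and satisfies the first-order condition $\Phi'(c^*)=\sum_{k=1}^K Y_k'(c^*)=0$. The vector $(c^*,\dots,c^*)$ is trivially feasible for \eqref{eq:algorithm1}, so it remains to prove that it is optimal, i.e. $\sum_k Y_k(c^*)\ge \sum_k Y_k(y_k)$ for every feasible $\{y_k\}$ (every $y_1\le y_2\le\dots\le y_K$); the constancy claim $\bar y_1=\dots=\bar y_K$ then follows.

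For this inequality I would use the supporting-line property of each concave $Y_k$ at $c^*$: $Y_k(y_k)\le Y_k(c^*)+Y_k'(c^*)(y_k-c^*)$ for all $k$. Summing over $k$ and invoking $\sum_k Y_k'(c^*)=0$ collapses this to
\begin{equation}
\sum_{k=1}^K Y_k(y_k)\;\le\;\sum_{k=1}^K Y_k(c^*)\;+\;\sum_{k=1}^K Y_k'(c^*)\,y_k,\notag
\end{equation}
so the whole proposition reduces to showing that the correction term $\sum_k Y_k'(c^*)\,y_k$ is non-positive.

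This last step is the crux, and it is the only place where the hypothesis $\hat y_1\ge\dots\ge\hat y_K$ is used. Because $Y_k$ is concave with maximizer $\hat y_k$, the sign of $Y_k'(c^*)$ is controlled by the position of $c^*$ relative to $\hat y_k$: $Y_k'(c^*)\ge 0$ when $c^*\le\hat y_k$ and $Y_k'(c^*)\le 0$ when $c^*\ge\hat y_k$. Since the $\hat y_k$ are non-increasing in $k$, there is an index $m\in\{0,1,\dots,K\}$ with $Y_k'(c^*)\ge 0$ for $k\le m$ and $Y_k'(c^*)\le 0$ for $k>m$. Using that the feasible $y_k$ are non-decreasing — hence $y_k\le y_m$ for $k\le m$ and $y_k\ge y_m$ for $k>m$ — every term obeys $Y_k'(c^*)(y_k-y_m)\le 0$, so $\sum_k Y_k'(c^*)\,y_k\le y_m\sum_k Y_k'(c^*)=0$. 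The boundary cases $m=0$ and $m=K$ are immediate, since then all $Y_k'(c^*)$ share a sign and sum to zero, forcing each to vanish. Combining with the previous paragraph gives $\sum_k Y_k(y_k)\le\sum_k Y_k(c^*)$, so $(c^*,\dots,c^*)$ is optimal and $\bar y_1=\dots=\bar y_K$.

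The main obstacle I anticipate is one of hypotheses rather than of ideas: the argument as sketched uses differentiability of the $Y_k$ (which holds in the application, where $Y_k=P_i$ is twice differentiable), and for merely concave $Y_k$ one must replace $Y_k'(c^*)$ by a selection of supergradients $g_k\in\partial Y_k(c^*)$ with $\sum_k g_k=0$ — available from the Moreau--Rockafellar sum rule applied to $0\in\partial\Phi(c^*)$ — and then check that the sign pattern $g_k\ge 0$ for $k\le m$, $g_k\le 0$ for $k>m$ still holds for the chosen $g_k$ (it does, because any supergradient of a concave function at a point not to the right of its maximizer is non-negative, and symmetrically on the other side). A minor secondary point is the existence of the maximizer $c^*$, which I would dispatch via compactness of the variable's range. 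As an alternative, essentially equivalent route, one can run a pooling (``adjacent-violators'') argument: start from any optimal $\{\bar y_k\}$, break it into maximal blocks of equal consecutive values, observe that each block value must maximize the corresponding partial sum $\sum_{k\in B}Y_k$ and therefore lies between the smallest and largest $\hat y_k$ in that block $B$, and then contradict strict monotonicity of the block values against the non-increasing $\hat y_k$; this is more geometric but requires an extra step to exclude flat stretches of the partial sums.
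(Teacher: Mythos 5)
Your argument is correct, but note that the paper itself does not prove Proposition~\ref{pop:proposition2} at all: it simply defers to the cited reference \cite{5738226}, where the property is established by an exchange/pooling argument on the structure of an optimal solution --- essentially the ``adjacent-violators'' route you sketch as an alternative at the end. Your primary argument is genuinely different and, in my view, preferable here: the supporting-line bound $Y_k(y_k)\le Y_k(c^*)+Y_k'(c^*)(y_k-c^*)$ combined with $\sum_k Y_k'(c^*)=0$ reduces everything to the sign-rearrangement inequality $\sum_k Y_k'(c^*)y_k\le 0$, and the prefix/suffix sign pattern of the $Y_k'(c^*)$ (forced by the monotone ordering of the $\hat y_k$) together with the monotonicity of feasible $\{y_k\}$ closes it cleanly. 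A concrete bonus of your route is that it identifies the pooled value as $c^*=\arg\max_y\sum_k Y_k(y)$, which is exactly the quantity Algorithm~\ref{alg1} computes in its update step, so your proof doubles as a correctness argument for that line of the algorithm. Two small caveats, neither fatal: (i) you prove that \emph{a} constant optimizer exists, whereas the proposition's wording suggests \emph{the} optimizer is constant; these coincide under the strict concavity that holds for the $P_i$ in the application, and existence is all the algorithm needs; (ii) if $c^*$ lands on the boundary of the admissible range (e.g.\ $t\ge 0$, as in \eqref{eq:opt_limi}) the first-order condition becomes an inequality, but your correction-term estimate still goes through because the sign pattern of the $Y_k'(c^*)$ remains a nonnegative prefix followed by a nonpositive suffix and the feasible $y_k$ are nonnegative --- worth one sentence if you write this up.
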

The proposition is proved in \cite{5738226}.

%Based on Proposition \ref{pop:proposition2}, we can see that Algorithm \ref{alg1}, which is an iterative algorithm, can be used to solve the optimization problem \eqref{eq:algorithm1}. The details of Algorithm \ref{alg1} are shown as follows.
Based on Proposition \ref{pop:proposition2}, we can see that Algorithm \ref{alg1}, which is an iterative algorithm, can be used to adjust infeasible sub-sequences in $\{\hat{t}_i\}$ into feasible sub-sequences. The details of Algorithm \ref{alg1} are shown as follows.

%\begin{algorithm}[htp]
%\caption{Iterative Algorithm to deal with infeasible sub-sequences}
%\label{alg1}
%\begin{algorithmic}[1]
%\STATE{Initialization
%$\bar{y}_k = \hat{y}_k = \arg\max_{\substack{y_k}}Y_k, \forall k \in \{1,\ldots,K\}$.
%}
%\REPEAT
%\STATE{Find an infeasible sub-sequence $\{\bar{y}_i,\bar{y}_{i+1},\ldots,\bar{y}_{j}\}$.}
%\STATE{Let
%$\bar{y}_k = \arg\max_{\substack{y}}\sum_{k = i}^jY_k(y)$, for all $k \in \{i,i+1,\ldots,j\}$.
%}
%\UNTIL{$\{\bar{y}_k\}$ is feasible.}
%\end{algorithmic}
%\end{algorithm}

\begin{algorithm}[t]
\caption{Iterative Algorithm to deal with infeasible sub-sequences}
\label{alg1}
\begin{algorithmic}[1]
\STATE{Initialization
$\bar{t}_i = \hat{t}_i$ for all $i \in \{1,2,\cdots, I\}$.
}
\REPEAT
\STATE{Find an infeasible sub-sequence $\{\bar{t}_m,\bar{t}_{m+1},\ldots,\bar{t}_{n}\}$.}
\STATE{Let
$\bar{t}_i = \arg\max_{\substack{t}}\sum_{k = m}^nP_k(t)$, for all $k \in \{m,m+1,\cdots,n\}$.
}
\UNTIL{$\{\bar{t}_i\}$ are feasible.}
\end{algorithmic}
\end{algorithm}

%Next, we apply Algorithm 1 to solve the optimal period assignments problem in \eqref{eq:optunlimited}.
%Specifically, we let $\bar{y}_k = \bar{t}_k$, $\hat{y}_k = \hat{t}_k$, and $Y_k = P_k$ for all $k \in \{1,2,\cdots, I\}$.

%%%%%%%%%%%%%%%%%%%%%%%%%%%%%%%%%%%%%%%%%%%%%%%%%%%%%%%%%%%%%%%%
%%%%%%%%%%%%%%%%%%%%%%%%%%%%%%%%%%%%%%%%%%%%%%%%%%%%%%%%%%%%%%%%
\section{Continuous-Consumer-Type with Group Division}
\label{sec:groupdivision}
%%%%%%%%%%%%%%%%%%%%%%%%%%%%%%%%%%%%%%%%%%%%%%%%%%%%%%%%%%%%%%%%
%%%%%%%%%%%%%%%%%%%%%%%%%%%%%%%%%%%%%%%%%%%%%%%%%%%%%%%%%%%%%%%%
%In nowadays, as the increasing of the demand for ubiquitous connectivity, there are a huge number of potential consumers.
In general, since consumers are mutually independent, the probability that each two consumers have the same standard deviation (i.e., $\sigma$) approaches to zero.
Therefore, it is more realistic to assume that the consumer types follow a continuous distribution.
Under such an assumption, to design a contract item for each consumer type is equivalent to providing infinite contract items, which is not realistic and not consumer friendly.
%Although data processing may lead to limited consumer types (as we have discussed in Sec. \ref{sec:systemmodel} and \ref{sec:contractfando}), it sacrifices some accuracy and needs additional cost (e.g., time consuming cost).
Thus, we propose a novel contract design mechanism for continuous-consumer-type model.
The mechanism divides the consumers into limited number of groups according to their types and give a contract item for each group.
Specifically, in this mechanism, we optimize the group boundaries as well as the period and price assignments in order to maximize the SP's overall profit.

%%%%%%%%%%%%%%%%%%%%%%%%%%%%%%%%%%%%%%%%%%%%%%%%%%%%%%%%%%%%%%%%
\subsection{Contract Formulation}
%%%%%%%%%%%%%%%%%%%%%%%%%%%%%%%%%%%%%%%%%%%%%%%%%%%%%%%%%%%%%%%%

We assume that the SP divides the consumers into $K$ groups.
Instead of designing a distinct contract item for each consumer type, the SP offers a single contract item for each group of consumer types.
We denote the set of group indices as $\mathcal{K} = \{1, 2, \ldots,K\}$, where
the minimum consumer type in the $k^{th}$ ($k \in \mathcal{K}$) group is denoted as $\sigma_k^{[min]}$ and the maximum consumer type in the $k^{th}$ group is denoted as $\sigma_k^{[max]}$.
%Without loss of generality, we assume $0 \leq\sigma_1^{[max]}\leq\sigma_2^{[min]}\leq\sigma_2^{[max]}\leq \ldots \leq\sigma_K^{[min]}\leq\sigma_K^{[max]}$.
We assume that the consumer type $\sigma$ follows a continuous distribution and the probability density function is $g(\sigma)$.
We use $\sigma_{min}$ (where $\sigma_{min}\geq 0$) and $\sigma_{max}$ (where $\sigma_{max} \geq \sigma_{min}$) to denote the minimum and maximum value of the feasible interval, i.e., $g(\sigma)>0$ only when $\sigma \in [\sigma_{min}, \sigma_{max}]$.\footnote{
To better illustrate the insights, we assume that $g(\sigma)>0$ for all $\sigma \in [\sigma_{min},\sigma_{max}]$ in this paper. For the case that $g(\sigma)=0$ for some $\sigma$ in the feasible interval, we can also show that our following analysis is valid.
}
Without loss of generality, we assume $\sigma_{min} \leq \sigma_1^{[min]}\leq\sigma_1^{[max]}\leq\sigma_2^{[min]}\leq\sigma_2^{[max]}\leq \ldots \leq\sigma_K^{[min]}\leq\sigma_K^{[max]}\leq \sigma_{max}$.
Since the consumer types follow a continuous distribution, we have $\sigma_{k-1}^{[max]}=\sigma_{k}^{[min]}$ for all $k \in \{2,3,\ldots,K\}$. The set of variables $\{\sigma_k^{[max]}|~\forall k \in \mathcal{K}\}$ is named as group boundaries.
In this paper, we let $\sigma_1^{[min]} = \sigma_{min}$ for simplicity.
We use $N$ to denote the total number of consumers, and denote the number of consumers in the $k^{th}$ group as $N_k$, where
\begin{align}
N_k \!=\! N\!\!\int_{\sigma_{k-1}^{[max]}}^{\sigma_k^{[max]}}g(x)\ud{x} = N\big(G(\sigma_k^{[max]})-G(\sigma_{k-1}^{[max]})\big). \label{eq:numberingroup}
\end{align}
In \eqref{eq:numberingroup}, $G(\sigma)$ is the cumulative distribution function of consumer type $\sigma$, and
\begin{equation}
G(\sigma_k^{[max]}) = \frac{\sum_{s=1}^kN_s}{N}.\notag
\end{equation}
In other words, $NG(\sigma)$ denotes the number of consumers with type less than or equal to $\sigma$.

Similar to our discussion on the contract for discrete-consumer-type model in Sec. \ref{sec:systemmodel} and \ref{sec:contractfando}, for each consumer group $k \in \mathcal{K}$, the SP will assign a combination including a period $t_k$ and a unit period price $\pi_k$.
We denote the contract for continuous-consumer-type model as $\mathscr{C}_c =\{(t_k,\pi_k)\}$.
The expected profit of the SP can be written as:
\begin{align}
R &= \sum_{k \in \mathcal{K}}N_k (\pi_k - C(t_k)) \label{eq:overallrev}\\
&=\sum_{k \in \mathcal{K}}N\big(G(\sigma_k^{[max]})-G(\sigma_{k-1}^{[max]})\big)(\pi_k-C(t_k)).\notag
\end{align}

To guarantee the feasibility of the contract, it should satisfy the IC and IR constraints: 1) For any consumer in group $k$, he prefers the contract item that with period $t_k$ at the price $\pi_k$ than any other contract items; 2) The SP should guarantee that the contract item designed for any consumer group leads to non-negative utility for each consumer in this group so that the consumers are willing to accept the contract designed for them. Specifically, we define,
\begin{definition}\label{def:deficct}
IC constraint:
\begin{equation}
V(\sigma,t_k)-\pi_k \!\geq\! V(\sigma,t_{k'})-\pi_{k'}, ~\forall \sigma \in [\sigma_{k-1}^{[max]}\!\!, \sigma_k^{[max]}], ~k' \neq k.\notag
\end{equation}
\end{definition}
\begin{definition}\label{def:defirct}
IR constraint:
\begin{equation}
V(\sigma,t_k) - \pi_k \geq 0,~~\forall \sigma \in [\sigma_{k-1}^{[max]}, \sigma_k^{[max]}].\notag
\end{equation}
\end{definition}
\noindent Then, the SP's profit maximization problem becomes finding the optimal group boundaries, period assignments and price assignments, i.e.,
%is equivalent to design a group of contract items that solves the following problem:
\begin{align}
\{\sigma_k^{[max]}, (t_k, \pi_k)\} = \arg \max_{\substack{\{\sigma_k^{[max]}\} \{t_k\}  \{\pi_k\}}} R \label{eq:maxi3}
\end{align}
subject to the IC and IR constraints in Definition \ref{def:deficct}, \ref{def:defirct} and the following boundary condition:
\begin{equation}
\sigma_{min}\leq \sigma_1^{[max]}\leq\sigma_2^{[max]}\leq \cdots \leq \sigma_K^{[max]}\leq \sigma_{max}.\label{eq:thrcond}
\end{equation}

First, we can find that IP property in Proposition \ref{pop:proposition1} is still satisfied in continuous-consumer-type model.
Then, we try to find the necessary and sufficient conditions for the IC and IR constraints.
We show the first necessary condition in the following corollary.
\begin{corollary}\label{cor:cor1}
For any contract $\mathscr{C}_c =\{(t_k,\pi_k) |~ \forall k \in \mathcal{K}\}$, if it is feasible, then the following condition holds:
\begin{equation}
k>k' \Rightarrow t_k \geq t_{k'}.\notag
\end{equation}
\end{corollary}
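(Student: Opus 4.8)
The plan is to mimic the proof of Lemma~\ref{lm:lemma1}, replacing the two discrete consumer types there by one representative type drawn from each of the two groups in question. First I would note that the IP property of Proposition~\ref{pop:proposition1} carries over verbatim to the continuous-consumer-type model: inequality \eqref{eq:ip1} is a statement about the valuation function $V(\cdot,\cdot)$ only, and does not depend on how the consumer types are distributed.

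Next, fix $k>k'$. By the boundary ordering \eqref{eq:thrcond} (and $k\geq k'+1$), the type interval $[\sigma_{k-1}^{[max]},\sigma_k^{[max]}]$ of group $k$ lies to the right of the type interval $[\sigma_{k'-1}^{[max]},\sigma_{k'}^{[max]}]$ of group $k'$. Since $g(\cdot)>0$ on $[\sigma_{min},\sigma_{max}]$, every group that carries consumers is a nondegenerate interval (empty groups carry no consumers and can be dropped without loss of generality), so I can select a type $\sigma$ in group $k$ and a type $\sigma'$ in group $k'$ with $\sigma>\sigma'$. I then invoke the IC constraint of Definition~\ref{def:deficct} twice: once for the type-$\sigma$ consumer, who must not prefer group $k'$'s item, and once for the type-$\sigma'$ consumer, who must not prefer group $k$'s item:
\begin{align}
V(\sigma,t_k)-\pi_k &\geq V(\sigma,t_{k'})-\pi_{k'},\notag\\
V(\sigma',t_{k'})-\pi_{k'} &\geq V(\sigma',t_k)-\pi_k.\notag
\end{align}
Adding these two inequalities and cancelling the prices gives
\begin{equation}
V(\sigma,t_k)-V(\sigma,t_{k'})\;\geq\;V(\sigma',t_k)-V(\sigma',t_{k'}).\notag
\end{equation}

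Finally, I would argue by contradiction. Suppose $t_k<t_{k'}$. Applying \eqref{eq:ip1} with $t=t_{k'}>t'=t_k$ and $\sigma>\sigma'$ yields $V(\sigma,t_{k'})-V(\sigma,t_k)>V(\sigma',t_{k'})-V(\sigma',t_k)$, i.e., $V(\sigma,t_k)-V(\sigma,t_{k'})<V(\sigma',t_k)-V(\sigma',t_{k'})$, which contradicts the inequality derived above. Hence $t_k\geq t_{k'}$, as claimed. The only point requiring a little care is the selection of strictly ordered representatives $\sigma>\sigma'$ from the two groups; this is exactly where the standing assumption that $g$ is strictly positive on the feasible interval is used. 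Everything else is the same ``add two IC inequalities, then invoke the IP property'' computation as in Lemma~\ref{lm:lemma1}, so I do not expect any genuine obstacle here.
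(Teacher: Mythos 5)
Your proposal is correct and matches the paper's intent exactly: the paper omits this proof precisely because it is ``structurally the same as Lemma~\ref{lm:lemma1},'' and your argument is that same add-two-IC-inequalities-then-apply-IP contradiction, with the only new ingredient being the careful selection of strictly ordered representative types from the two groups. No issues.
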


\noindent Corollary \ref{cor:cor1} is directly obtained from Lemma \ref{lm:lemma1}.
Hence, the proof of the corollary is structurally the same as Lemma \ref{lm:lemma1} and is omitted.
%Specifically, $k > k'$ implies that any consumer that belongs to the $k^{th}$ group has a larger consumer type than any consumer in the ${k'}^{th}$ group.

The second necessary condition is shown in Lemma \ref{lm:lemma2}, which shows that a longer period must be assigned with a higher price.

From Lemma \ref{lm:lemma2}, Corollary \ref{cor:cor1} and IP property, we have the following theorem, which shows the necessary and sufficient conditions of the IC and IR constraints for the contract for continuous-consumer-type model with group division.
%\begin{lemma}
%The IC and IR constrains for the contract $\mathscr{C} =\{(t_k,\pi_k)\}$ with limited contract items are equivalent to the following conditions:
%\begin{align}
%&0 \leq t_1 \leq t_2 \leq \cdots \leq t_K.\label{eq:l21}\\
%&\pi_K \leq V(\sigma_K,t_K).\label{eq:l22}\\
%&\pi_{k} \geq \pi_{k+1} \!+\! V(\sigma,t_k)\!-\! V(\sigma,t_{k+1}), ~\forall \sigma \in [\sigma_{k+1}^{[min]}, \sigma_{k+1}^{max}].\label{eq:t3}\\
%&\pi_{k} \leq \pi_{k+1} \!+\! V(\sigma,t_k)\!-\! V(\sigma,t_{k+1}),  ~\forall \sigma \in [\sigma_{k}^{[min]}, \sigma_{k}^{max}].\label{eq:t4}
%\end{align}
%\end{lemma}
%\noindent Since the proof of Lemma 5 is similar to the proof of Theorem 1, due to space limitation, we skip the proof here.

\begin{theorem}\label{th:theorem2}
For any contract $\mathscr{C}_c =\{(t_k,\pi_k)\}$, its IC and IR constraints are equivalent to the following conditions:
\begin{align}
&\bullet~0 \leq t_1 \leq t_2 \leq \cdots \leq t_K.\label{eq:t21}\\
%&0 \leq \sigma_1^{[max]} \leq \sigma_2^{[max]} \leq \cdots \leq \sigma_K^{[max]}. \label{eq:t25}\\
&\bullet~\pi_K \leq V(\sigma_K^{[max]},t_K).\label{eq:t22}\\
&\bullet~\pi_{k} = \pi_{k+1} + V(\sigma_{k}^{[max]},t_k)- V(\sigma_{k}^{[max]},t_{k+1}).\label{eq:t234}
\end{align}
\end{theorem}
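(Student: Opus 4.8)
The plan is to prove the equivalence in both directions, mirroring the argument for Theorem~\ref{th:theorem1} but exploiting one structural feature that is new to the grouped model: the boundary type $\sigma_k^{[max]}$ lies simultaneously in group $k$'s type interval $[\sigma_{k-1}^{[max]},\sigma_k^{[max]}]$ and in group $k+1$'s interval $[\sigma_k^{[max]},\sigma_{k+1}^{[max]}]$. Since this single type must be incentive compatible \emph{for both} contract items, the two one-sided IC inequalities collapse to the \emph{equality} \eqref{eq:t234}, whereas in the discrete model the price was only confined to an interval (cf. \eqref{eq:t13}--\eqref{eq:t14}). Throughout I would use $V_\sigma(\cdot,\cdot)<0$ and $V_t(\cdot,\cdot)>0$ from Section~\ref{sec:consumermodeling} together with the IP property (Proposition~\ref{pop:proposition1}) in the equivalent form that, for fixed $t>t'$, the map $\sigma \mapsto V(\sigma,t)-V(\sigma,t')$ is strictly increasing (so $\sigma \mapsto V(\sigma,t')-V(\sigma,t)$ is strictly decreasing).

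For necessity, assume $\mathscr{C}_c$ satisfies the IC and IR constraints of Definitions~\ref{def:deficct} and~\ref{def:defirct}. Condition \eqref{eq:t21} is exactly Corollary~\ref{cor:cor1}. For \eqref{eq:t22}, apply IR to group $K$: $V(\sigma,t_K)\geq \pi_K$ for every $\sigma\in[\sigma_{K-1}^{[max]},\sigma_K^{[max]}]$, and since $V(\cdot,t_K)$ is decreasing the tightest instance is at $\sigma=\sigma_K^{[max]}$, which gives $\pi_K\leq V(\sigma_K^{[max]},t_K)$. For \eqref{eq:t234}, fix $k$ and evaluate IC at the shared boundary type $\sigma_k^{[max]}$: read as a member of group $k$ comparing with item $k+1$ it gives $V(\sigma_k^{[max]},t_k)-\pi_k \geq V(\sigma_k^{[max]},t_{k+1})-\pi_{k+1}$, and read as a member of group $k+1$ comparing with item $k$ it gives the reverse inequality; the two force equality, which rearranges to \eqref{eq:t234}. (If one prefers to treat the endpoints by continuity rather than by double membership, let $\sigma\uparrow\sigma_k^{[max]}$ and $\sigma\downarrow\sigma_k^{[max]}$ in the respective one-sided IC constraints and use continuity of $V$ in $\sigma$; the price ordering asserted by Lemma~\ref{lm:lemma2} then follows automatically from \eqref{eq:t234} and $V_t>0$.)

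For sufficiency, assume \eqref{eq:t21}, \eqref{eq:t22}, \eqref{eq:t234}. I would first verify IR. Telescoping \eqref{eq:t234} downward gives $\pi_k=\pi_K+\sum_{j=k}^{K-1}\big(V(\sigma_j^{[max]},t_j)-V(\sigma_j^{[max]},t_{j+1})\big)$, and I claim $\pi_k\leq V(\sigma_k^{[max]},t_k)$ for every $k$; by monotonicity of $V(\cdot,t_k)$ this yields $V(\sigma,t_k)\geq\pi_k$ for all $\sigma\leq\sigma_k^{[max]}$, i.e. IR on group $k$. The claim follows by downward induction: it holds at $k=K$ by \eqref{eq:t22}, and assuming $\pi_{k+1}\leq V(\sigma_{k+1}^{[max]},t_{k+1})$, the bound $\sigma_k^{[max]}\leq\sigma_{k+1}^{[max]}$ with $V(\cdot,t_{k+1})$ decreasing gives $\pi_{k+1}\leq V(\sigma_k^{[max]},t_{k+1})$, so \eqref{eq:t234} gives $\pi_k-V(\sigma_k^{[max]},t_k)=\pi_{k+1}-V(\sigma_k^{[max]},t_{k+1})\leq 0$. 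Next I would verify IC between adjacent groups: for $\sigma\in[\sigma_{k-1}^{[max]},\sigma_k^{[max]}]$, comparing item $k$ with item $k+1$ requires $\pi_k\leq\pi_{k+1}+V(\sigma,t_k)-V(\sigma,t_{k+1})$, which by \eqref{eq:t234} reduces to $V(\sigma_k^{[max]},t_k)-V(\sigma_k^{[max]},t_{k+1})\leq V(\sigma,t_k)-V(\sigma,t_{k+1})$; this holds because $t_k\leq t_{k+1}$, $\sigma\leq\sigma_k^{[max]}$, and $\sigma\mapsto V(\sigma,t_k)-V(\sigma,t_{k+1})$ is decreasing by the IP property. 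The comparison with item $k-1$ is symmetric, using $\sigma\geq\sigma_{k-1}^{[max]}$ and \eqref{eq:t234} at index $k-1$. Global IC then follows by chaining: for $\sigma$ in group $k$ and any $k'>k$, the ``deviate up by one'' inequality $V(\sigma,t_j)-\pi_j\geq V(\sigma,t_{j+1})-\pi_{j+1}$ is valid for every $j\in\{k,\ldots,k'-1\}$, because each such $j$ satisfies $\sigma\leq\sigma_k^{[max]}\leq\sigma_j^{[max]}$, and composing these gives $V(\sigma,t_k)-\pi_k\geq V(\sigma,t_{k'})-\pi_{k'}$; the case $k'<k$ is handled symmetrically by chaining the ``deviate down by one'' inequalities.

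The main obstacle is the bookkeeping in the sufficiency direction: one must check that each one-step IC inequality used in the chaining argument is valid on an interval actually containing the particular type $\sigma$ under consideration, and that the telescoped prices --- now rigidly determined rather than merely range-bounded --- still respect IR at the \emph{interior} types of each group and not just at the boundary. Both points are handled by the single-crossing structure furnished by the IP property together with $V_\sigma<0$, so once the ordering of types and periods in condition \eqref{eq:t21} is in force the remaining computations are routine; indeed, compared with Theorem~\ref{th:theorem1}, the fact that the price equalities pin $\pi_k$ down actually simplifies matters, since there is no feasible price interval left to exhibit.
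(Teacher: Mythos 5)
Your proposal is correct and follows essentially the same route as the paper: the equality \eqref{eq:t234} is obtained exactly as in the paper's proof, by observing that the boundary type $\sigma_k^{[max]}$ belongs to both adjacent groups so that the two one-sided IC inequalities collapse to an equality, and sufficiency rests on the same IP-property argument that transfers the boundary conditions to every type in each group. The only cosmetic difference is that you carry out the telescoping and chaining verification of IC and IR directly, whereas the paper converts the boundary conditions into interval-wide analogues of \eqref{eq:t13}--\eqref{eq:t14} and then cites the structure of Theorem \ref{th:theorem1}.
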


%The proof of Theorem \ref{th:theorem2} is given in Appendix \ref{sec:app2} of our online technical report \cite{report}.

The contract optimization problem is then to maximize the SP's overall profit $R$ under Theorem \ref{th:theorem2} and boundary condition \eqref{eq:thrcond}.
In this problem, the variables needed to be optimized are: 1) the period and price of each contract item, i.e., $t_k$ and $\pi_k$; 2) the boundary of each group, i.e., $\sigma_{k}^{[max]}$.

%%%%%%%%%%%%%%%%%%%%%%%%%%%%%%%%%%%%%%%%%%%%%%%%%%%%%%%%%%%%%%%%
\subsection{Contract Optimization}
%%%%%%%%%%%%%%%%%%%%%%%%%%%%%%%%%%%%%%%%%%%%%%%%%%%%%%%%%%%%%%%%

According to Theorem \ref{th:theorem2}, we can obtain the optimal price assignments as follows:
\begin{align}
&\bar{\pi}_K \!=\! V(\sigma_K^{[max]},t_K), \notag\\
&\bar{\pi}_k  = \bar{\pi}_{k+1}\!+\!V(\sigma_k^{[max]},t_k)\!-\!V(\sigma_k^{[max]},t_{k+1})~\forall k \neq K,\notag
\end{align}
which implies
\begin{equation}
\bar{\pi}_k = V(\sigma_K^{[max]}\!,t_K)\! +\! \sum_{s=k}^{K-1}\big(V(\sigma_s^{[max]},t_s)\!-\!V(\sigma_s^{[max]},t_{s+1})\big).
\label{eq:2price}
\end{equation}
%The proof is similar to Lemma \ref{lm:lemma3}, due to the space limitation, we will omit the details.
By substituting \eqref{eq:2price} into \eqref{eq:overallrev}, the overall profit of the SP can be rewritten as follows
\begin{subequations}
\label{eq:overalloflimited}
\begin{align}
R &= \sum_{k=1}^{K}N_k(\bar{\pi}_k-C(t_k))\\
& = \sum_{k=1}^KN_k\Big(V(\sigma_K^{[max]},t_K)\!-\!C(t_k)\notag\\
&~~~~~~+\!\sum_{s=k}^{K-1} \big(V(\sigma_s^{[max]},t_s)
-V(\sigma_s^{[max]},t_{s+1})\big)\Big) \label{eq:step0}\\
&=\sum_{k=1}^{K}\Big(N_kV(\sigma_k^{[max]},t_k)\! -\! N_kC(t_k)\!+\! A_k\!\sum_{s=1}^{k-1} N_s\Big)\label{eq:step2}\\
&=\sum_{k=1}^{K-1} NG(\sigma_k^{[max]})\big(V(\sigma_k^{[max]}, t_k)-V(\sigma_k^{[max]},t_{k+1})\big)\notag\\
&~~~+\sum_{k=1}^{K-1}NG(\sigma_k^{[max]})\big(C(t_{k+1})-C(t_k)\big)\notag\\
&~~~+NG(\sigma_K^{[max]})\big(V(\sigma_K^{[max]},t_K)-C(t_K)\big),\label{eq:step1}
\end{align}
\end{subequations}
where $A_k = V(\sigma_k^{[max]},t_k) - V(\sigma_{k-1}^{[max]},t_k)$ and $A_1=0$.

\setcounter{equation}{31}
\begin{figure*}[ht]
\normalsize
\begin{equation}
\frac{\partial Q_k(\sigma_k)}{\partial \sigma_k}=Ng(\sigma_k)\Big(V(\sigma_k,t_k)-V(\sigma_k,t_{k+1})
+\frac{G(\sigma_k)\big(V_{\sigma_k}(\sigma_k,t_k)-V_{\sigma_k}(\sigma_k,t_{k+1})\big)}{g(\sigma_k)}+C(t_{k+1})\!-\!C(t_k)\Big). \label{eq:firstorder}
\end{equation}
\end{figure*}

\setcounter{equation}{35}
\begin{figure*}[ht]
\normalsize
\begin{equation}
\frac{H_k(\sigma_k)}{\sigma_k} \!=\!\!\! \int_{t_{k}}^{t_{k+1}}\!
\frac{\alpha e^{-\frac{t\Delta q^2}{2\sigma_k^2}}(t\Delta q^2+\sigma_k^2)}
{2\sqrt{2\pi t}\sigma_k^2 g(\sigma_k)t}
\Big(
\frac{t\Delta q^2 (\sigma_k^2-t\Delta q^2)}{\sigma_k^3(\sigma_k^2+t\Delta q^2)}G(\sigma_k)
 -\frac{2g^2(\sigma_k)-g_{\sigma_k}(\sigma_k)G(\sigma_k)}{g(\sigma_k)}
\Big)\ud{t}.\label{eq:firsth}
\end{equation}
\end{figure*}

\setcounter{equation}{24}

\subsubsection{Introduction of The Alternative Maximizing Algorithm}
Finding the optimal period assignments $\{t_k\}$ and group boundaries $\{\sigma_k^{[max]}\}$ that can maximize the SP's overall profit with price assignments in \eqref{eq:2price} is very challenging because problem \eqref{eq:maxi3} is NP-hard \cite{li2014dynamic}.
Therefore, we introduce an alternative maximizing algorithm to find a sub-optimal solution.
In this algorithm, we divide the variables into two groups, where the first group contains all the period assignments $\{t_k\}$, and the other group contains all the group boundaries $\{\sigma_k^{[max]}\}$.
At the beginning of the algorithm, we divide the consumers into $K$ groups by randomly generating $K$ group boundaries $\{\sigma_k^{[max]}\}$ such that $\sigma_{min} < \sigma_1^{[max]} < \sigma_2^{[max]} < \cdots < \sigma_K^{[max]}<\sigma_{max}$.
%and $\sigma_K^{[max]}$ equals to the maximum consumer type.
Then, we iterate the following two steps.
In the first step, we keep the group boundaries $\{\sigma_k^{[max]}\}$ unchanged and maximize the overall profit $R$ by tuning period assignments $\{t_k\}$. In the second step, we keep period assignments $\{t_k\}$ (which are obtained by solving the problem in the previous step) unchanged and update group boundaries $\{\sigma_k^{[max]}\}$ to maximize $R$. For the rest of the paper, we will simply use $\sigma_k$ to denote $\sigma_k^{[max]}$ which is the threshold between group $k$ and group $k+1$.

The details of these two steps are as follows.

\begin{itemize}
\item \textbf{Step I}:
In this step, we find the optimal period assignments that can maximize the overall profit $R$ with fixed group boundaries $\{\sigma_k\}$.
Specifically, we have the following problem
\begin{equation}
\max_{\substack{\{t_i\}}}~ R,~~~\text{s.t.}~~0\leq t_1\leq t_2 \leq \cdots \leq t_K.\label{eq:optlimited}
\end{equation}

We can show that the overall profit in \eqref{eq:step2} is structurally similar to the overall profit \eqref{eq:overallofunlimited} in Sec. \ref{sec:contractfando}.
Therefore, this optimization problem can be solved through the same method that solves problem \eqref{eq:optunlimited}.

For writing convenience, we define
\begin{equation}
P_k=N_kV(\sigma_k,t_k) - N_kC(t_k)+ A_k\sum_{s=1}^{k-1}N_s, \label{eq:pc}
\end{equation}
where $A_k = V(\sigma_k,t_k) - V(\sigma_{k-1},t_k)$ and $A_1 = 0$.
Then, we divide the problem \eqref{eq:optlimited} into $K$ optimization problems
\begin{equation}
\max_{\substack{t_k}} P_k, ~~~\forall k \in \mathcal{K}. \label{eq:ppplimited}
\end{equation}
Since $P_k$ is concave for all $k$, the period assignment $\hat{t}_k$ that can maximize $P_k$ is at the boundary points or at the critical point, i.e.,
\begin{equation}
\label{eq:opt_limi}
\hat{t}_k = \left\{ \begin{array}{ll}
0 & \textrm{if $\tilde{t}_k<0$},\\
\tilde{t}_k & \textrm{if $\tilde{t}_k \geq 0$},
\end{array} \right.
\end{equation}
where $\tilde{t}_k$ is the solution of $\partial P_k/\partial t_k = 0$.

By denoting the optimal solution of problem \eqref{eq:optlimited} as $\{\bar{t}_k\}$, we can see that if the period assignments $\{\hat{t}_k\}$ from \eqref{eq:ppplimited} are in increasing order, then $\bar{t}_k = \hat{t}_k$ for all $k$.
However, if $\{\hat{t}_k\}$ are not in increasing order, which means that they may not be feasible, we need to use Algorithm \ref{alg1} to adjust infeasible period assignments to make them feasible.
%Specifically, we let $\bar{y}_k = \bar{t}_k$, $\hat{y}_k = \hat{t}_k$ and $Y_k = P_k$ for all $k \in \mathcal{K}$.
In the input of the algorithm, we define $P_k$ as in \eqref{eq:pc} and let $I = K$.

\item \textbf{Step II}:
In this step, we find the optimal group boundaries that can maximize the overall profit $R$ with fixed period assignments $\{t_k\}$.
%we consequently update the groups' boundaries in order to maximize $R$.
%Specifically, when we optimize $\sigma_k$, we fix the period assignment and the other group boundaries $\{\sigma_s ~|~\forall s \in \mathcal{K},~ s\neq k\}$.
%From \eqref{eq:step1}, we can see that the group boundary $\sigma_k^{[max]}$ only exists in the formula $\sum_{m=k}^{k+1}N_m(V(\sigma_m^{max},t_m)-C(t_m)+\sum_{s=m+1}^KA_s)$
Specifically, we have the following optimization problem:
\begin{equation}
\max_{\substack{\{\sigma_k\}}}\ R,
~~\text{s.t.}~\sigma_{min}\leq\sigma_{1}\leq \cdots \leq\sigma_K \leq \sigma_{max}. \label{eq:problemstep1}
\end{equation}
%where $\sigma_{min}$ and $\sigma_{max}$ are lower bound and upper bound of the feasible interval, i.e., $g(\sigma)> 0$ for all $\sigma \in [\sigma_{min},\sigma_{max}]$.

By defining $Q_k(\sigma_k)$ as
\begin{subequations}
\label{eq:Q}
\begin{align}
&Q_{k}(\sigma_k)\! = \! NG(\sigma_k)\big(V(\sigma_k, t_k)\!-\!V(\sigma_k,t_{k+1})\notag\\
&~~~~~~~~~~~~~+\!C(t_{k+1})\!-\!C(t_k)\big),~\forall k \in \{1,2,\cdots,K\!-\!1\},\label{eq:qk}\\
&Q_{K}(\sigma_K) \!\!= \!N G(\sigma_K)\big(V(\sigma_{K},t_K)\!-\!C(t_K)\big),\label{eq:qkk}
\end{align}
\end{subequations}
we can find that the overall revenue $R$ in \eqref{eq:step1} can be represented as the summation of $Q_{k}(\sigma_k)$.
\eqref{eq:Q} implies that $Q_k(\sigma_k)$ is only related to $\sigma_k$, i.e., the group boundary between group $k$ and group $k+1$, and independent of the other group boundaries $\{\sigma_s |~ \forall s \in \mathcal{K},~ s \neq k\}$.
Therefore, the best group boundaries for \eqref{eq:problemstep1}, denoted by $\{\bar{\sigma}_k\}$, can be computed
by separately maximizing each of $Q_k(\sigma_k)$, $\forall k \in \mathcal{K}$.

We use $\hat{\sigma}_k$ to denote the group boundary that maximizes $Q_k(\sigma_k)$, i.e.,
\begin{equation}
\hat{\sigma}_k = \arg\max_{\substack{\sigma_k}}\ Q_k(\sigma_k), \forall k \in \mathcal{K}. \label{eq:hatsigma}
\end{equation}
If the group boundaries $\{\hat{\sigma}_k\}$ obtained by solving \eqref{eq:hatsigma} are in increasing order, $\{\hat{\sigma}_{k}\}$ are exactly the solution of \eqref{eq:problemstep1}, i.e., $\bar{\sigma}_k=\hat{\sigma}_k$ $\forall k \in \mathcal{K}$. If $\{\hat{\sigma}_k\}$ are not in increasing order, some further steps are needed to obtain the optimal group boundaries of \eqref{eq:problemstep1} from $\{\hat{\sigma}_k\}$.

%When $g(\sigma_k) = 0$, the first order derivative of $Q_k(\sigma_k)$
In order to solve the problem \eqref{eq:hatsigma}, we find the first order derivative of $Q_k(\sigma_k)$ \rtwo{with} respect to $\sigma_k$, which is shown in \eqref{eq:firstorder} on the top of this page.
%\begin{align}
%\bar{R}_{\sigma_k^{[max]}}&\!= g(\sigma_k^{[max]})\Big(V(\sigma_k^{[max]},t_k)\!-\!V(\sigma_k^{[max]},t_{k+1})\Big)\notag\\
%&\!+\!\!\!\sum_{s=1}^k\!\!N_s\!\Big(\!V_{\sigma_k^{[max]}}(\sigma_k^{[max]}\!\!,t_k)\!-\!\!V_{\sigma_k^{[max]}}(\sigma_k^{[max]}\!\!,t_{k+1})\Big)\notag\\
%&\!+\!g(\sigma_k^{[max]})\Big(C(t_{k+1})-C(t_k)\Big)\notag
%\end{align}
Although the form of $\partial Q_k(\sigma_k)/\partial \sigma_k$ is complicated, we obtain the unimodality of $Q_k(\sigma_k)$ in the following Theorem.
\setcounter{equation}{32}

\begin{theorem}\label{th:theorem3}
If the distribution of the consumer types $\sigma$ satisfies the condition
\begin{equation}
\frac{2g^2(\sigma)-g_{\sigma}(\sigma)G(\sigma)}{g(\sigma)} \geq \left\{ \begin{array}{ll}
0 & \textrm{if $\sigma=0$},\\
\frac{3-2\sqrt{2}}{\sigma}G(\sigma) & \textrm{if $\sigma > 0$},
\end{array} \right. \label{eq:distribution}
\end{equation}
\noindent then with the fixed period assignments, the formula $Q_k(\sigma_k)$ is unimodal with respect to $\sigma_k$ for all $k \in \mathcal{K}$.
\end{theorem}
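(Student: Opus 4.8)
The plan is to reduce the claim to a monotonicity statement. By \eqref{eq:firstorder} we may write $\partial Q_k/\partial\sigma_k = Ng(\sigma_k)\,\Phi_k(\sigma_k)$, where $\Phi_k$ denotes the bracketed factor there (and the analogous factor obtained from \eqref{eq:qkk} when $k=K$). Since $g>0$ on the support, the sign of $\partial Q_k/\partial\sigma_k$ equals the sign of $\Phi_k$, so it suffices to prove that $\Phi_k$ is non-increasing on $[\sigma_{min},\sigma_{max}]$: a continuous non-increasing $\Phi_k$ is $\geq 0$ on an initial subinterval and $\leq 0$ afterwards, hence $Q_k$ is non-decreasing and then non-increasing, i.e.\ \emph{unimodal}. (The case $t_k=t_{k+1}$ is trivial, since then $Q_k\equiv 0$ by \eqref{eq:qk}.)

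First I would compute $\Phi_k'$. Using the closed forms of $V_t$ and $V_\sigma$ recorded in Section~\ref{sec:consumermodeling} together with the Gaussian tail identity $\int_a^{+\infty}xf(x|0,1)\,\ud x=e^{-a^2/2}/\sqrt{2\pi}$, one obtains explicit expressions for $V_t$, $V_{\sigma t}$, $V_{\sigma\sigma t}$; in particular $V_{\sigma t}>0$ and $V_{\sigma\sigma t}=-V_{\sigma t}\cdot\frac{t\Delta q^2(\sigma^2-t\Delta q^2)}{\sigma^3(\sigma^2+t\Delta q^2)}$. Writing the finite differences in $\Phi_k$ as integrals, $V_{\sigma}(\sigma,t_k)-V_{\sigma}(\sigma,t_{k+1})=-\int_{t_k}^{t_{k+1}}V_{\sigma t}\,\ud t$ and similarly for the $V_{\sigma\sigma}$ terms, and then differentiating under the integral sign (the derivative of $G/g$ producing the coefficient $2-\frac{G g_\sigma}{g^2}$ in front of the $\int V_{\sigma t}$ term), and finally substituting the ratio $V_{\sigma\sigma t}/V_{\sigma t}$, one collects precisely the right-hand side of \eqref{eq:firsth}; that is, $\Phi_k'(\sigma_k)=H_k(\sigma_k)/\sigma_k$. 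For $k=K$ the same differentiation performed directly on \eqref{eq:qkk} (no integral) gives $\Phi_K'=\frac{2g^2-g_\sigma G}{g^2}V_\sigma+\frac{G}{g}V_{\sigma\sigma}$, which is already $\leq 0$ because $V_\sigma<0$, $V_{\sigma\sigma}<0$ and $2g^2-g_\sigma G\geq 0$ under the hypothesis.

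It remains to show $H_k(\sigma_k)\leq 0$ for $k<K$. In \eqref{eq:firsth} the coefficient multiplying the inner bracket is exactly $V_{\sigma t}(\sigma_k,t)/g(\sigma_k)>0$, so the sign of the integrand is that of
\[
\Psi(\sigma_k,t):=\frac{t\Delta q^2\left(\sigma_k^2-t\Delta q^2\right)}{\sigma_k^3\left(\sigma_k^2+t\Delta q^2\right)}G(\sigma_k)-\frac{2g^2(\sigma_k)-g_\sigma(\sigma_k)G(\sigma_k)}{g(\sigma_k)}.
\]
I would bound the first term by maximizing it over all $t>0$: with $u=t\Delta q^2\geq 0$ it equals $\sigma_k^{-3}G(\sigma_k)\cdot\frac{u(\sigma_k^2-u)}{\sigma_k^2+u}$, whose unique maximizer on $u\geq 0$ is $u^\ast=(\sqrt2-1)\sigma_k^2$ with maximal value $(3-2\sqrt2)\sigma_k^2$; hence the first term is $\leq\frac{3-2\sqrt2}{\sigma_k}G(\sigma_k)$ for $\sigma_k>0$. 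Condition \eqref{eq:distribution} is then exactly the inequality forcing $\Psi(\sigma_k,t)\leq 0$ for every admissible $t$, whence $H_k(\sigma_k)\leq 0$ and $\Phi_k'\leq 0$. At the boundary $\sigma_k=0$ the factor $e^{-t\Delta q^2/(2\sigma_k^2)}$ in \eqref{eq:firsth} decays faster than the polynomial blow-up of $\Psi$, so $\Phi_k'(0)=0\leq 0$, and the $\sigma=0$ branch of \eqref{eq:distribution} (automatic since $G(0)=0$) covers this degenerate case. Combining everything, $\Phi_k$ is non-increasing on $[\sigma_{min},\sigma_{max}]$ for every $k\in\mathcal{K}$, so each $Q_k$ is unimodal.

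The step I expect to be the main obstacle is the sharp optimization of the $t$-dependent coefficient — pinning down $u^\ast=(\sqrt2-1)\sigma_k^2$ and the precise constant $3-2\sqrt2$ — and matching this worst case term by term against the distribution hypothesis \eqref{eq:distribution}; a secondary delicate point is the simultaneous degeneration of several factors at $\sigma_k=0$, which must be handled through the exponential-decay estimate rather than the bound used for $\sigma_k>0$.
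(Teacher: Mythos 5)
Your proposal is correct and follows essentially the same route as the paper: reduce unimodality of $Q_k$ to monotonicity of the bracketed factor $H_k$, compute its derivative to arrive at \eqref{eq:firsth}, and bound the $t$-dependent coefficient by its maximum $(3-2\sqrt{2})/\sigma$ attained at $t\Delta q^2=(\sqrt{2}-1)\sigma^2$ (this is exactly the paper's Lemma~\ref{lm:lemma5}), so that hypothesis \eqref{eq:distribution} forces $H_k'\leq 0$. Your explicit treatment of the $k=K$ case and of the degeneration at $\sigma_k=0$ fills in details the paper leaves implicit, but the argument is the same.
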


\begin{proof}
By defining $H_k(\sigma_k)$ as
\begin{align}
H_k(\sigma_k) \!&= \!V(\sigma_k,t_k)-V(\sigma_k,t_{k+1})\notag\\
&~~~~~~~~+\frac{G(\sigma_k)\big(V_{\sigma_k}(\sigma_k,t_k)-V_{\sigma_k}(\sigma_k,t_{k+1})\big)}{g(\sigma_k)}\notag\\
&=\!\!\int_{t_{k+1}}^{t_k}\!\!V_t(\sigma_k,t)\!+\!\frac{G(\sigma_k)}{g(\sigma_k)}V_{\sigma_k,t}(\sigma_{k},t)\ud{t},\label{eq:H}
\end{align}
the first order derivation of $Q_k(\sigma_k)$ \rtwo{with} respect to $\sigma_k$ can be rewritten as:
\begin{equation}
\frac{\partial Q_k(\sigma_k)}{\partial \sigma_k} = Ng(\sigma_k)(H_k(\sigma_k)+C(t_{k+1})-C(t_{k})).
\end{equation}
Since $g(\sigma_k)$ is positive and $C(t_{k+1})-C(t_k)$ is a constant for fixed period assignments, $Q_k(\sigma_k)$ is unimodal if $H_k(\sigma_k)$ is monotonic \rtwo{with} respect to $\sigma_k$.

To study the monotonicity of $H_k(\sigma_k)$, we need to find the first order derivative of it \rtwo{with respect} to $\sigma_k$, which is shown in \eqref{eq:firsth} on the top of this page.
Before we find the sign of the first order derivation, we first show the following lemma.
\setcounter{equation}{36}
\begin{lemma}\label{lm:lemma5}
For any $t \geq 0$ and $\sigma > 0$, we have
\begin{equation}
\frac{t\Delta q^2(\sigma^2-t\Delta q^2)}{\sigma^3(\sigma^2+t\Delta q^2)} \leq \frac{3-2\sqrt{2}}{\sigma}.\label{eq:max}
\end{equation}
\end{lemma}

%The proof of Lemma \ref{lm:lemma5} is given in Appendix \ref{sec:app3} of our online technical report \cite{report}.

We can show that $\frac{t\Delta q^2(\sigma^2-t\Delta q^2)}{\sigma^3(\sigma^2+t\Delta q^2)}G(\sigma) = 0$ when $\sigma=0$, and $\frac{t\Delta q^2(\sigma^2-t\Delta q^2)}{\sigma^3(\sigma^2+t\Delta q^2)}G(\sigma) \leq \frac{3-2\sqrt{2}}{\sigma}G(\sigma)$ when $\sigma>0$.
Hence, if the distribution of the consumer type satisfies \eqref{eq:distribution} in Theorem \ref{th:theorem3} for all $\sigma\geq 0$, then $\partial H_k(\sigma_k)/ \partial \sigma_k$ is always non-positive and $H_k(\sigma_k)$ crosses zero at most once.
Therefore, $Q_k(\sigma_k)$ is unimodal with respect to $\sigma_k$.
\end{proof}

%The iterative algorithm introduced in this paper is designed for the cases that the distribution of consumers' types satisfy \eqref{eq:distribution}.
In next subsection, we will show that \eqref{eq:distribution} applies for some typical distributions.
%In this paper we assume the distribution of consumer types satisfy \eqref{eq:distribution}.
Then, according to Theorem \ref{th:theorem3}, $Q_k(\sigma_k)$ is an unimodal function with respect to $\sigma_k$.
Therefore, the optimal $\hat{\sigma}_k$ is at the boundary point or the critical point, i.e.,
\begin{equation}
\label{eq:opsig_limi}
\hat{\sigma}_k = \left\{ \begin{array}{ll}
0 & \textrm{if $\tilde{\sigma}_k<0$},\\
\tilde{\sigma}_k & \textrm{if $\tilde{\sigma}_k \geq 0$},
\end{array} \right.
\end{equation}
where $\tilde{\sigma}_k$ is the solution of $\partial Q_k(\sigma_k)/\partial \sigma_k = 0$.

The group boundaries $\{\hat{\sigma}_k\}$, which are obtained by separately maximizing each of $Q_k(\sigma_k)$, may not be in increasing order, which means they may not be feasible.
Each set of infeasible group boundaries must have at least one infeasible sub-sequence, which is defined in Definition \ref{def:definition3}.
In order to adjust an infeasible sequence $\{\hat{\sigma}_i, \hat{\sigma}_{i+1}, \cdots \hat{\sigma}_j\}$ to a feasible sub-sequence, we first show a property of $Q_k(\sigma_k)$ in the following proposition.

\begin{proposition}\label{pop:proposition3}
For any $i,j \in \mathcal{K}$ and $j \geq i$, the function $\sum_{k=i}^jQ_{k}(\sigma)$ is a unimodal function with respect to $\sigma$.
\end{proposition}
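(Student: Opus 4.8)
The plan is to differentiate $\Phi(\sigma):=\sum_{k=i}^{j}Q_k(\sigma)$ term by term and show that its derivative changes sign at most once, from nonnegative to nonpositive. From the proof of Theorem~\ref{th:theorem3}, each summand satisfies $\partial Q_k(\sigma)/\partial\sigma = Ng(\sigma)\big(H_k(\sigma)+c_k\big)$, where $c_k=C(t_{k+1})-C(t_k)$ for $k<K$, $c_K=-C(t_K)$, and $H_k$ is the function in \eqref{eq:H} (for $k=K$ one reads off $H_K(\sigma)=V(\sigma,t_K)+\tfrac{G(\sigma)}{g(\sigma)}V_{\sigma}(\sigma,t_K)$ directly from \eqref{eq:qkk}). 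Since every summand carries the common, strictly positive factor $Ng(\sigma)$ on the support of $g$, I would write
\begin{equation}
\frac{\partial \Phi(\sigma)}{\partial\sigma}=Ng(\sigma)\,\Psi(\sigma),\qquad \Psi(\sigma):=\sum_{k=i}^{j}H_k(\sigma)+\sum_{k=i}^{j}c_k,\notag
\end{equation}
noting that the constant part $\sum_{k=i}^{j}c_k$ telescopes to a single constant.

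The core step is to show that $\Psi$ is monotone non-increasing on $[\sigma_{min},\sigma_{max}]$. The proof of Theorem~\ref{th:theorem3} already establishes that, under the distributional condition \eqref{eq:distribution}, $\partial H_k(\sigma)/\partial\sigma\le 0$ for every $\sigma\ge 0$ — this is exactly \eqref{eq:firsth} together with Lemma~\ref{lm:lemma5}. Because the sign analysis of the integrand in \eqref{eq:firsth} holds for \emph{every} value of the period $t$, it applies verbatim to $H_K$ as well (equivalently, $H_K$ is the limit of $H_k$ as its upper period tends to $+\infty$). A finite sum of non-increasing functions is non-increasing, and shifting by a constant preserves this, so $\Psi$ is non-increasing.

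To conclude, I would invoke the elementary fact that a continuous non-increasing function changes sign at most once, passing from nonnegative to nonpositive values; multiplying by the strictly positive factor $Ng(\sigma)$ does not alter this sign pattern. Hence $\partial\Phi/\partial\sigma\ge 0$ on an initial subinterval of $[\sigma_{min},\sigma_{max}]$ and $\le 0$ thereafter, i.e.\ $\Phi$ first increases and then decreases, which is the asserted unimodality. The main obstacle I anticipate is the bookkeeping for the last index $k=K$: since $Q_K$ in \eqref{eq:qkk} has a different algebraic form than $Q_k$ for $k<K$ in \eqref{eq:qk}, one must check that its associated function $H_K$ is still non-increasing under \eqref{eq:distribution}, which amounts to re-running the computation behind \eqref{eq:firsth} with the interval $[t_k,t_{k+1}]$ replaced by $[t_K,+\infty)$ and observing that the integrand keeps the same sign throughout. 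Everything else follows directly from the single-term estimates already proved for Theorem~\ref{th:theorem3}.
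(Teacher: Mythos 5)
Your proposal is correct and follows essentially the same route as the paper's proof: differentiate term by term, observe that the cost terms telescope to a constant, use the non-increasing property of each $H_k$ established in the proof of Theorem~\ref{th:theorem3} to conclude that the bracketed factor crosses zero at most once, and then multiply by the positive factor $Ng(\sigma)$. Your explicit bookkeeping for the $k=K$ summand is in fact slightly more careful than the paper, which writes the telescoped constant as $C(t_{j+1})-C(t_i)$ without addressing the differently shaped $Q_K$ term.
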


%The proof of Proposition \ref{pop:proposition3} is given in Appendix \ref{sec:app4} of our online technical report \cite{report}.

Then we apply the following proposition to design a mechanism to deal with the infeasible sub-sequence in $\{\sigma_k\}$.
\begin{proposition}\label{pop:proposition4}
There are $2$ unimodal functions $Q_1(\sigma_1)$ and $Q_2(\sigma_2)$. If $\hat{\sigma}_1 \geq \hat{\sigma}_2$, where $\hat{\sigma}_1 = \arg\max_{\substack{\sigma1}}Q_1(\sigma_1)$ and $\hat{\sigma}_2 = \arg\max_{\substack{\sigma_2}}Q_2(x_2)$, then the optimal solution
\begin{equation}
\{\bar{\sigma}_k\} = \arg\max_{\substack{\{\sigma_k\}}}\sum_{k=1}^2Q_k(\sigma_k),~~\text{s.t.} ~\sigma_1 \leq \sigma_2\notag
\end{equation}
satisfies $\bar{\sigma}_1 = \bar{\sigma}_2$.
\end{proposition}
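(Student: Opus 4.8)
The plan is to argue by contradiction, exploiting the unimodality of the two functions together with their maximizers' ordering. Suppose, for contradiction, that the optimal solution $\{\bar\sigma_1,\bar\sigma_2\}$ of the constrained problem satisfies $\bar\sigma_1 < \bar\sigma_2$. Since the constraint $\sigma_1 \le \sigma_2$ is then not active, a small perturbation argument applies: the pair $(\bar\sigma_1,\bar\sigma_2)$ must be an unconstrained local maximizer of $Q_1(\sigma_1)+Q_2(\sigma_2)$ over the open region $\{\sigma_1<\sigma_2\}$. Because the objective separates additively, this forces $\bar\sigma_1$ to be a local maximizer of $Q_1$ and $\bar\sigma_2$ to be a local maximizer of $Q_2$. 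By unimodality, each of $Q_1,Q_2$ has a unique maximizer, namely $\hat\sigma_1$ and $\hat\sigma_2$ respectively, so $\bar\sigma_1=\hat\sigma_1$ and $\bar\sigma_2=\hat\sigma_2$. But then $\bar\sigma_1=\hat\sigma_1 \ge \hat\sigma_2 = \bar\sigma_2$, contradicting $\bar\sigma_1<\bar\sigma_2$. Hence $\bar\sigma_1 \ge \bar\sigma_2$, and combined with the feasibility constraint $\bar\sigma_1 \le \bar\sigma_2$ we conclude $\bar\sigma_1=\bar\sigma_2$.

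To make the ``local maximizer'' step rigorous without assuming differentiability, I would instead phrase it directly in terms of unimodality: a unimodal function $Q$ with maximizer $\hat\sigma$ is nondecreasing on $(-\infty,\hat\sigma]$ and nonincreasing on $[\hat\sigma,+\infty)$. Given any feasible pair $(\sigma_1,\sigma_2)$ with $\sigma_1 < \sigma_2$, I would explicitly construct a feasible pair with objective value at least as large and with the two coordinates closer together (or equal). Concretely, let $m=\max\{\sigma_1,\hat\sigma_1,\text{something}\}$ — more cleanly: if $\sigma_1 < \hat\sigma_1$, replace $\sigma_1$ by $\min\{\hat\sigma_1,\sigma_2\}$, which does not decrease $Q_1$ (monotonicity on the increasing branch) and keeps feasibility; similarly if $\sigma_2 > \hat\sigma_2$, replace $\sigma_2$ by $\max\{\hat\sigma_2,\sigma_1\}$, which does not decrease $Q_2$. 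After these replacements one checks that the new pair $(\sigma_1',\sigma_2')$ still satisfies $\sigma_1'\le\sigma_2'$ and that, using $\hat\sigma_1\ge\hat\sigma_2$, the gap has shrunk; iterating (or taking the appropriate limit/supremum) drives the two coordinates to a common value $\sigma^\star \in [\hat\sigma_2,\hat\sigma_1]$. Evaluating the objective at $(\sigma^\star,\sigma^\star)$ shows it is at least the original value, so the constrained optimum can be taken on the diagonal.

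The main obstacle I anticipate is handling the boundary cases and the degenerate situation where a maximizer is not attained at an interior point or where $Q_1$, $Q_2$ are only weakly unimodal (constant on a plateau), so that ``the unique maximizer'' should be read as ``the set of maximizers is an interval.'' In that case the contradiction step needs the slightly more careful statement: if $\bar\sigma_1<\bar\sigma_2$ then $\bar\sigma_1$ lies in the set of maximizers of $Q_1$ and $\bar\sigma_2$ in that of $Q_2$, hence $\bar\sigma_1 \ge \inf(\arg\max Q_1) \ge \sup(\arg\max Q_2) \ge \bar\sigma_2$ is not immediate unless $\hat\sigma_1\ge\hat\sigma_2$ is interpreted with the right choice of representatives; the monotone-rearrangement argument of the previous paragraph sidesteps this cleanly because it never needs uniqueness. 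Everything else is routine, and the result is the two-group specialization of Proposition \ref{pop:proposition2} with ``concave'' relaxed to ``unimodal,'' so I would also remark that the same proof extends verbatim to $K$ unimodal functions, which is what is actually invoked when adjusting infeasible sub-sequences via Algorithm \ref{alg1}.
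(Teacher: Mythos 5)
Your proposal is correct, and your second (monotone\hyp{}rearrangement) argument is essentially the paper's proof: the paper shows that for any feasible pair with $\sigma_1<\sigma_2$ the single diagonal point $\sigma^*=\min\{\sigma_2,\hat\sigma_1\}$ does at least as well, splitting into the case $\sigma_2\ge\hat\sigma_1$ (where $Q_1(\hat\sigma_1)\ge Q_1(\sigma_1)$ by optimality and $Q_2(\hat\sigma_1)\ge Q_2(\sigma_2)$ because $Q_2$ is nonincreasing beyond $\hat\sigma_2\le\hat\sigma_1\le\sigma_2$) and the case $\sigma_2<\hat\sigma_1$ (where $Q_1(\sigma_2)\ge Q_1(\sigma_1)$ on the increasing branch and $Q_2$ is untouched). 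Your first, contradiction\hyp{}via\hyp{}inactive\hyp{}constraint argument is a genuinely different and shorter route, but as you note it leans on the maximizer of each $Q_k$ being unique; the paper's constructive argument avoids that. One concrete caveat on your rearrangement step: performing the two replacements $\sigma_1\mapsto\min\{\hat\sigma_1,\sigma_2\}$ and $\sigma_2\mapsto\max\{\hat\sigma_2,\sigma_1\}$ simultaneously with respect to the original coordinates can yield an infeasible pair (take $\hat\sigma_1=5$, $\hat\sigma_2=1$, $(\sigma_1,\sigma_2)=(0,10)$: you get $(5,1)$), so the replacements must be done sequentially using the updated first coordinate --- or, more cleanly, collapsed into the paper's one\hyp{}shot choice $\sigma^*=\min\{\sigma_2,\hat\sigma_1\}$, which lands in $[\hat\sigma_2,\hat\sigma_1]$ directly and makes the iteration/limit you gesture at unnecessary. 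Your closing remark that the argument extends to $K$ functions is consistent with how the paper actually uses the result (via Proposition~\ref{pop:proposition3}).
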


%The proof of Proposition \ref{pop:proposition4} is given in Appendix \ref{sec:app5} of our online technical report \cite{report}.

Since the unimodal functions $\{Q_k(\sigma_k)\}$ have the property shown in Proposition \ref{pop:proposition3}, Proposition \ref{pop:proposition4} can be extended to a more general form: for any $j \geq i$, if $\hat{\sigma}_i \geq \hat{\sigma}_{i+1} \geq \cdots \geq \hat{\sigma}_j$, where $\hat{\sigma}_k = \arg\max_{\substack{\sigma_k}}Q_k(\sigma_k)$, then the optimal solution $\{\bar{\sigma}_k\} = \arg\max_{\substack{\sigma_k}}\sum_{k=i}^j Q_k(\sigma_k)$ subject to $\sigma_i\leq \sigma_{i+1} \leq\cdots \leq \sigma_j$ satisfies $\bar{\sigma}_i =\bar{\sigma}_{i+1}=\cdots =\bar{\sigma}_j$.
%Further, the optimal solution ˆxi can be easily obtained by maximizing

By means of Proposition \ref{pop:proposition4}, we can use an algorithm similar to Algorithm \ref{alg1}, to adjust infeasible sub-sequences in $\{\sigma_k\}$ to feasible sub-sequences.
%Specifically, we let $\bar{y}_k = \bar{\sigma}_k$, $\hat{y}_k = \hat{\sigma}_k$, and $Y_k = Q_k$ for all $k \in \{1,2,\cdots, K\}$.
%\wei{In the algorithm, we let $\bar{t}_k = \bar{\sigma}_k$, $\hat{t}_k = \hat{\sigma}_k$, and $P_k = Q_k$ for all $k \in \{1,2,\cdots, K\}$.}

 %Without loss of generality, we assume $\sigma_0 = 0$.

%In this step, we iteratively update the groups' boundaries $\{\sigma_k\}$ until convergence.

\end{itemize}

The details of the alternative maximizing algorithm are illustrated in Algorithm \ref{alg2}.

\begin{algorithm}[t]
\caption{Alternative Maximizing Algorithm}
\label{alg2}
\begin{algorithmic}[1]
\STATE{Initialize $K$ arbitrary groups�boundaries $\{\sigma_k\}$, where $\sigma_k \geq 0$ $\forall k$ and $\sigma_{min}<\sigma_1 < \sigma_2 < \cdots < \sigma_K < \sigma_{max}$.}
\REPEAT
\STATE{\textbf{Step I}:}
\FORALL{$k \in \mathcal{K}$}
\STATE{Define $P_k$ as $N_kV(\!\sigma_k^{[max]}\!\!,t_k\!) \!-\! N_kC(t_k)\!+\! A_k\!\sum_{s=1}^{k-1}\!N_s$.
}
\STATE{Solve the problem: $\max_{\substack{t_k}} P_k$, where the optimal point $\hat{t}_k$ can be obtained from \eqref{eq:opt_limi}.}
\ENDFOR
\STATE{If the period assignments obtained are not feasible, using Algorithm \ref{alg1} to adjust infeasible sub-sequences into feasible sub-sequences (the definition of infeasible sub-sequence is in Definition \ref{def:definition3}).}

\STATE{\textbf{Step II:}}
\FORALL{$k \in \mathcal{K}$}
\STATE{Calculate $Q_k(\sigma_k)$ according to \eqref{eq:Q}.}
\STATE{Solve the problem: $\max_{\substack{\sigma_k}}Q_k(\sigma_k)$, where the optimal point $\hat{\sigma}_k$ can be obtained from \eqref{eq:opsig_limi}.}
\ENDFOR
%\STATE{If the set of group boundaries $\{\sigma_k\}$ obtained above is not feasible, using Algorithm \ref{alg1} to adjust infeasible sub-sequences into feasible sub-sequences.}
\STATE{Initialization $\bar{\sigma}_k = \hat{\sigma}_k$ for all $k \in \mathcal{K}$.}
\REPEAT
\STATE{Find an infeasible sub-sequence $\{\bar{\sigma}_i, \bar{\sigma}_{i+1}, \cdots, \bar{\sigma}_{j}\}$.}
\STATE{Let $\bar{\sigma}_{k} = \arg\max_{\substack{\sigma}}\sum_{k=i}^jQ_k(\sigma_k)$, for all $k \in \{i,i+1,\cdots,j\}$.}
\UNTIL{$\{\bar{\sigma}_k\}$ are feasible.}
\UNTIL{Convergence.}
\end{algorithmic}
\end{algorithm}

\subsubsection{Convergence of The Alternative Maximizing Algorithm}
In the alternative maximizing algorithm, with arbitrary initialized group boundaries, we alternatively update the period assignments $\{t_k\}$ and group boundaries $\{\sigma_k\}$ in order to maximize the overall profit $R$.

In each step of the alternative maximizing algorithm, we try to adjust the period assignments or group boundaries to maximize $R$. Hence, the value of $R$ is monotonically increasing.
Since the value of overall profit $R$ is upper bounded, the algorithm will finally converge.

%%%%%%%%%%%%%%%%%%%%%%%%%%%%%%%%%%%%%%%%%%%%%%%%%%%%%%%%%%%%%%%%
\subsection{Analysis of Some Typical Distributions of The Consumer Types}
%%%%%%%%%%%%%%%%%%%%%%%%%%%%%%%%%%%%%%%%%%%%%%%%%%%%%%%%%%%%%%%%

In this subsection, we will show that \eqref{eq:distribution} in Theorem \ref{th:theorem3} applies for some typical distributions of consumer types, including uniform distribution, exponential distribution and truncated normal distribution.

\subsubsection{Uniform Distribution}
We first study the case that the consumer types follow a uniform distribution.
Specifically,
%we use $\sigma_{max}$ to denote the maximum consumer type and use $\sigma_{min}$ to denote the minimum consumer type. Hence,
we have
\begin{align}
&g(\sigma) = \frac{1}{\sigma_{max}-\sigma_{min}},\label{eq:g}\\
&g_{\sigma}(\sigma) = 0, \label{eq:g_s}\\
&G(\sigma) = \frac{\sigma-\sigma_{min}}{\sigma_{max}-\sigma_{min}}.\label{eq:gg}
\end{align}
%Without loss of generality, we assume $0\leq \sigma_{min}\leq \sigma_{max}$.
By substituting \eqref{eq:g}, \eqref{eq:g_s} and \eqref{eq:gg} into $\frac{2g^2(\sigma)-g_{\sigma}(\sigma)G(\sigma)}{g(\sigma)G(\sigma)}$ (i.e., the left hand of \eqref{eq:distribution}), we have
\begin{align}
&\frac{2g^2(\sigma)-g_{\sigma}(\sigma)G(\sigma)}{g(\sigma)G(\sigma)} = \frac{2g(\sigma)}{G(\sigma)}\notag\\
&= \frac{2}{\sigma-\sigma_{min}}= \frac{2}{\sigma}\frac{\sigma}{\sigma-\sigma_{min}} \overset{(a)}{\geq} \frac{2}{\sigma} \geq \frac{3-2\sqrt{2}}{\sigma},\notag
\end{align}
where $(a)$ is from the fact that $\frac{\sigma}{\sigma-\sigma_{min}} \geq 1$ for all $\sigma \in [\sigma_{min}, \sigma_{max}]$.
In conclusion, \eqref{eq:distribution} holds for the case of uniform distribution in consumer types.

\subsubsection{Exponential Distribution}
Next, we study the case that the consumer types follow an exponential distribution.
Specifically, we have
\begin{align}
&g(\sigma) = \lambda e^{-\lambda \sigma},\label{eq:exg}\\
&g_{\sigma}(\sigma) = -\lambda^2e^{-\lambda \sigma}, \label{eq:exg_s}\\
&G(\sigma) = 1-e^{-\lambda \sigma}. \label{eq:exgg}
\end{align}
Here, $\lambda>0$ is the rate parameter of the exponential distribution. By substituting \eqref{eq:exg}, \eqref{eq:exg_s} and \eqref{eq:exgg} into $\frac{2g^2(\sigma)-g_{\sigma}(\sigma)G(\sigma)}{g(\sigma)}$, we have
\begin{equation}
\frac{2g^2(\sigma)-g_{\sigma}(\sigma)G(\sigma)}{g(\sigma)} = \lambda (1+e^{-\lambda \sigma}).\notag
\end{equation}

When $\sigma = 0$, we have $\lambda(1+e^{-\lambda\sigma}) = 2\lambda >0$, which satisfies \eqref{eq:distribution} in Theorem \ref{th:theorem3}.

When $\sigma > 0$, we have
\begin{align}
\frac{2g^2(\sigma)-g_{\sigma}(\sigma)G(\sigma)}{g(\sigma)} &= \lambda (1+e^{-\lambda \sigma})\notag\\
 &= \frac{\lambda\sigma}{1-e^{-\lambda\sigma}}\frac{1+e^{-\lambda\sigma}}{\sigma}(1-e^{-\lambda\sigma})\notag\\
 &=\frac{x}{1-e^{-x}}\frac{1+e^{-\lambda\sigma}}{\sigma}(1-e^{-\lambda\sigma}),
\label{eq:expyuanshi}
\end{align}
where $x = \lambda \sigma$ and $x > 0$.
Before finding the lower bound of \eqref{eq:expyuanshi}, we first derive the following proposition.
%\begin{proposition}
%\label{pop:proposition5}
%The formula $\frac{x}{1-e^{-x}}$ is a convex function in $x$. The critical point $x_0$, which leads to the minimum of the convex function, satisfies
%\begin{equation}
%x_0 = 0.\label{eq:expyijie}
%\end{equation}
%\end{proposition}
\begin{proposition}
\label{pop:proposition5}
For $x>0$, the formula $\frac{x}{1-e^{-x}}$ is lower bounded by $1$.
\end{proposition}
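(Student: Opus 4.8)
The plan is to show that $h(x) = \dfrac{x}{1-e^{-x}} > 1$ for all $x > 0$, which is equivalent to showing $x > 1 - e^{-x}$ since the denominator $1-e^{-x}$ is positive for $x>0$. First I would define the auxiliary function $\phi(x) = x - (1 - e^{-x}) = x - 1 + e^{-x}$ and observe that $\phi(0) = 0$. Then I would differentiate to get $\phi'(x) = 1 - e^{-x}$, which is strictly positive for all $x > 0$, so $\phi$ is strictly increasing on $(0,+\infty)$. Combined with $\phi(0)=0$, this yields $\phi(x) > 0$, i.e. $x > 1 - e^{-x}$, for all $x > 0$.

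To finish, I would divide both sides of $x > 1 - e^{-x}$ by the positive quantity $1 - e^{-x}$, obtaining $\dfrac{x}{1-e^{-x}} > 1$ for all $x > 0$, which is exactly the claim. One should take a brief moment to note that $1 - e^{-x} > 0$ whenever $x > 0$, so no division-by-zero or sign issue arises; this is immediate from $e^{-x} < e^{0} = 1$.

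I do not anticipate a genuine obstacle here; the statement is elementary and the only thing to be careful about is the strictness of the inequality and the positivity of the denominator. An alternative route would be to use the well-known inequality $e^{x} \geq 1 + x$ (strict for $x \neq 0$): substituting $x \mapsto -x$ gives $e^{-x} > 1 - x$, hence $1 - e^{-x} < x$, which is the same conclusion. I would likely present the monotonicity argument since it is self-contained and does not require quoting an external inequality. If a limiting remark is desired, one can also note that $\lim_{x \to 0^+} \dfrac{x}{1-e^{-x}} = 1$, so the bound $1$ is tight (approached but not attained), though this is not needed for the proposition as stated.
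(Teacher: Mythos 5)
Your proof is correct and is essentially the same as the paper's: both reduce the claim to showing $e^{-x}+x-1\geq 0$ (your $\phi$) via the derivative $1-e^{-x}>0$ and the value $0$ at $x=0$. The only cosmetic difference is that the paper writes $\frac{x}{1-e^{-x}}=1+\frac{e^{-x}+x-1}{1-e^{-x}}$ rather than dividing the inequality through by $1-e^{-x}$.
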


%The proof of Proposition \ref{pop:proposition5} is given in Appendix \ref{sec:proofpp5} of our online technical report \cite{report}.

According to Proposition \ref{pop:proposition5}, we have
%\begin{align}
%&\frac{2g^2(\sigma)-g_{\sigma}(\sigma)G(\sigma)}{g(\sigma)G(\sigma)} = \frac{x(1+e^{-x})}{1-e^{-x}}\frac{1}{\sigma}\notag\\
%&\geq \frac{x_0(1+e^{-x_0})}{1-e^{-x_0}}\frac{1}{\sigma} = \frac{(1+e^{-x_0})(1-e^{-2x_0})}{2e^{-x_0}(1-e^{-x_0})}\frac{1}{\sigma} \notag\\
%&\overset{(b)}{\geq} \frac{1-e^{-2x_0}}{1-e^{-x_0}}\frac{1}{\sigma}
%\overset{(c)}{\geq} \frac{1}{\sigma}\geq \frac{3-2\sqrt{2}}{\sigma},
%\end{align}
%where $(b)$ follows from the fact $e^{-x} < 1$ for any $x\geq 0$, and $(c)$ follows from the fact $e^{-2x}\leq e^{-x}$ for any $x \geq 0$.
\begin{align}
&\frac{2g^2(\sigma)-g_{\sigma}(\sigma)}{g(\sigma)G(\sigma)} = \frac{x}{1-e^{-x}}\frac{1+e^{-\lambda\sigma}}{\sigma}(1-e^{-\lambda\sigma})\notag\\
&\geq \frac{1+e^{-\lambda\sigma}}{\sigma}(1-e^{-\lambda\sigma})\overset{(b)}{\geq} \frac{3-2\sqrt{2}}{\sigma}G(\sigma),
\end{align}
where (b) is from the fact $1+e^{-\lambda\sigma}>1$ for any $\sigma>0$.
In conclusion, \eqref{eq:distribution} holds for the case of exponential distribution in consumer types.

\subsubsection{Truncated Normal Distribution}

In truncated normal distribution, we can not obtain the expression of the cumulative function $G(\sigma)$ in closed-form due to the non-integrability of the formula.
Hence, we are not able to show \eqref{eq:distribution} analytically.
In this case, we use numerical results to illustrate that \eqref{eq:distribution} holds for various parameters.

When $\sigma = 0$, we have $\frac{2g^2(\sigma)-g_{\sigma}(\sigma)G(\sigma)}{g(\sigma)} = 2g(\sigma) \geq 0$, which means that \eqref{eq:distribution} holds for the case $\sigma=0$.

When $\sigma >0$, to simplify the notations, we define a function $F(\sigma)$ as
\begin{equation}
F(\sigma) = \frac{2g^2(\sigma)-g_{\sigma}(\sigma)G(\sigma)}{g(\sigma)}- \frac{3-2\sqrt{2}}{\sigma}G(\sigma).\notag
\end{equation}
If we can show that $F(\sigma) \geq 0$ for various parameters, which means $\frac{2g^2(\sigma)-g_{\sigma}(\sigma)G(\sigma)}{g(\sigma)} \geq \frac{3-2\sqrt{2}}{\sigma}G(\sigma)$, then we can see that \eqref{eq:distribution} holds for truncated normal distribution with various parameters.

In our simulation settings, we let the minimum value of truncated normal distribution $a=0$ and the maximum value of truncated normal distribution $b=6$.
We use $M$ and $W$ to denote the mean and the standard deviation of the corresponding normal distribution.
\begin{figure}[tbp]
	\centering
		\includegraphics[width=85mm]{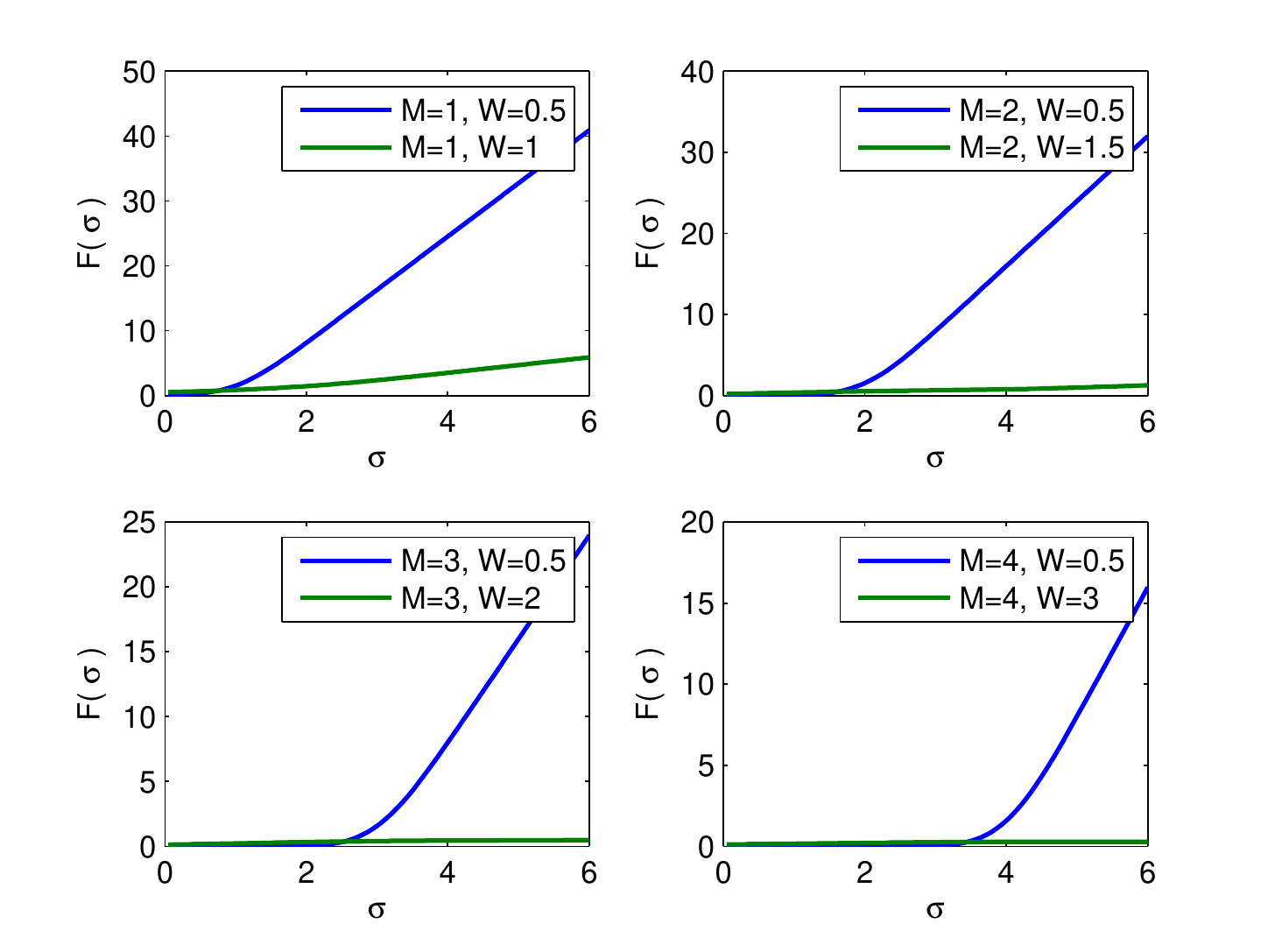}
	\caption{Examples of $F(\sigma)$ with different truncated normal distributions.}
	\label{fig:truncateddis}
\end{figure}
As shown in Figure \ref{fig:truncateddis}, the value of $F(\sigma)$ is positive for different combinations of $M$ and $W$ values.
Therefore, \eqref{eq:distribution} holds for truncated normal distribution with various parameters.

%the the number of consumer types is $I$ and the number of consumers in each type is equals to $N_i$.\footnote{
%In realistic, $N_i=1$ in most of time for all $i\in \{1,2,\ldots,L\}$.} Without loss of generality, we can see that $\sigma_1<\sigma_2<\ldots<\sigma_I$.

%In order to divide the consumers into $K$ groups according to their consumer types (i.e., $\sigma$), we need to set $K+1$ thresholds $\sigma_k^{th}$ ($i \in \{1,2,\ldots,K+1\}$), where $\sigma_1^{th}=\sigma_1$ and $\sigma_{K+1}^{th}=\sigma_I$. The consumer belongs to the $k^{th}$ group if his user type $\sigma_{1}<\sigma_i$

%number of consumer in each type  with large numbers of consumers, the user type may distribution of consumers' standard deviation is continues. Therefore, in our

%%%%%%%%%%%%%%%%%%%%%%%%%%%%%%%%%%%%%%%%%%%%%%%%%%%%%%%%%%%%%%%%
%%%%%%%%%%%%%%%%%%%%%%%%%%%%%%%%%%%%%%%%%%%%%%%%%%%%%%%%%%%%%%%%
\section{Simulation Results}
\label{sec:simulation}
%%%%%%%%%%%%%%%%%%%%%%%%%%%%%%%%%%%%%%%%%%%%%%%%%%%%%%%%%%%%%%%%
%%%%%%%%%%%%%%%%%%%%%%%%%%%%%%%%%%%%%%%%%%%%%%%%%%%%%%%%%%%%%%%%
In the simulation, we first implement the proposed period-price contract in discrete-consumer-type model, and then implement the proposed period-price contract in continuous-consumer-type model.
Without loss of generality, we set the predefined parameter $\alpha = 1$ and the average data demand per unit period as $\mu=13$.
We assume the data cap of the unit period data plan is $q = 15$ and the cost function of the SP is $C(t) = 0.5t+10$ \footnote{
According to \cite{a8, a9, a10}, a consumer usually chooses a data plan with monthly data cap larger than his average consumption.
Period of data plan helps an SP to manage its network capacity, because an SP should make sure a corresponding network capacity is prepared during the whole period in case that the consumers consume all data quota for the whole period in a very short time.
Hence, a larger period requires the SP to prepare more network capacity and will lead to a higher cost. Here for simplicity, we consider a linear-form cost, which has been widely used to model an operator's operational cost (e.g., \cite{a11, a12}).
}.

%%%%%%%%%%%%%%%%%%%%%%%%%%%%%%%%%%%%%%%%%%%%%%%%%%%%%%%%%%%%%%%%
\subsection{Discrete-Consumer-Type}
%%%%%%%%%%%%%%%%%%%%%%%%%%%%%%%%%%%%%%%%%%%%%%%%%%%%%%%%%%%%%%%%

\begin{figure}[tbp]
	\centering
		\includegraphics[width=83mm]{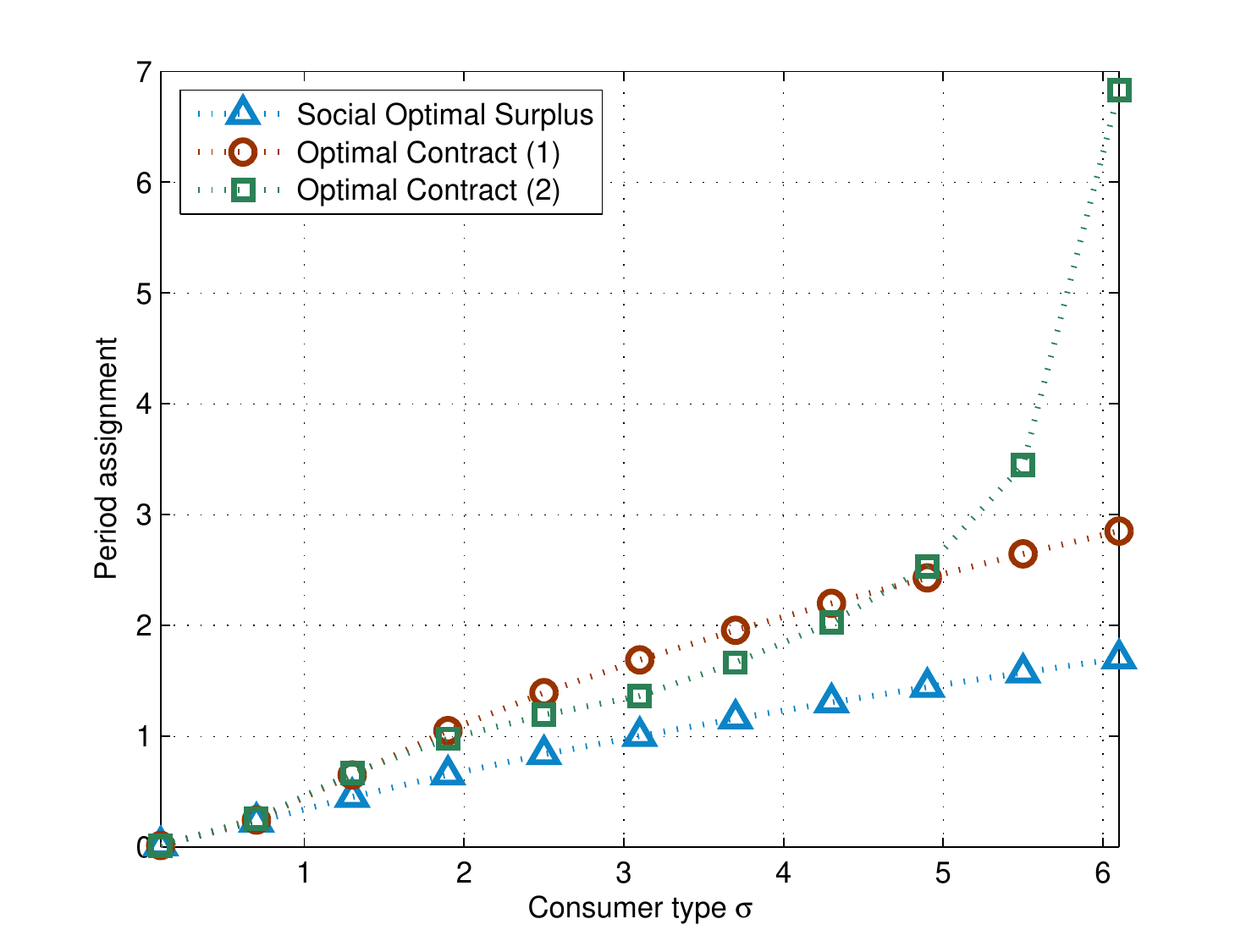}
		\vspace{-2mm}
	\caption{Period assignments for discrete-consumer-type model.}
	\label{fig:d-period}
\end{figure}

\begin{figure}[tbp]
	\centering
		\includegraphics[width=83mm]{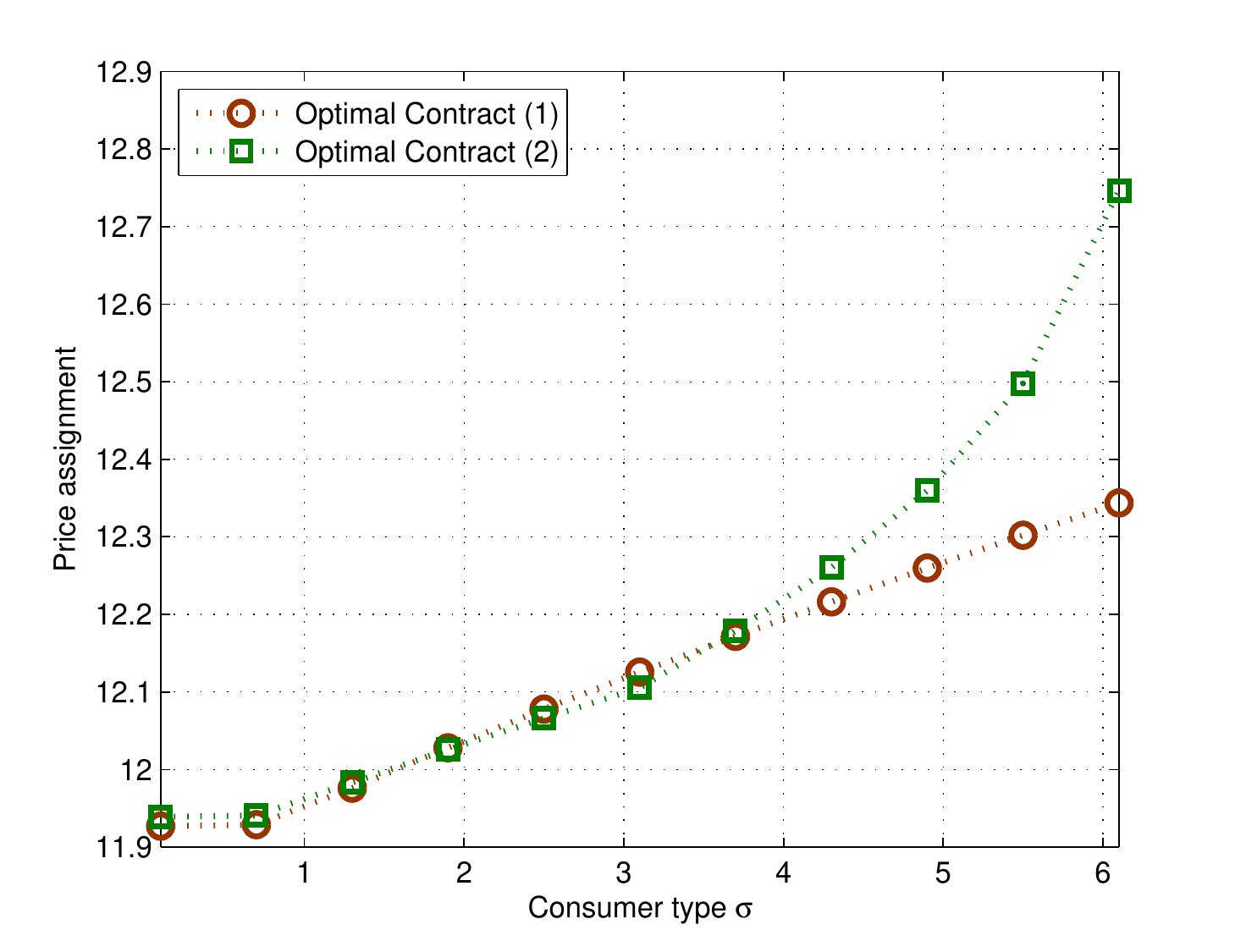}
		\vspace{-2mm}
	\caption{Price assignments for discrete-consumer-type model.}
	\label{fig:d-price}
\end{figure}

In discrete-consumer-type model, we assume the number of consumer types $I = 11$.
The set of consumer types is $\Sigma = \{0.1,0.7,1.3,\cdots,6.1\}$.

We run the simulation of the optimal contract in two cases. In Case (1), the numbers of consumers in each type are identical. In Case (2), the numbers are distributed in a mountain shape, which means the probability of medium is large.

\begin{figure}[tbp]
	\centering
		\includegraphics[width=83mm]{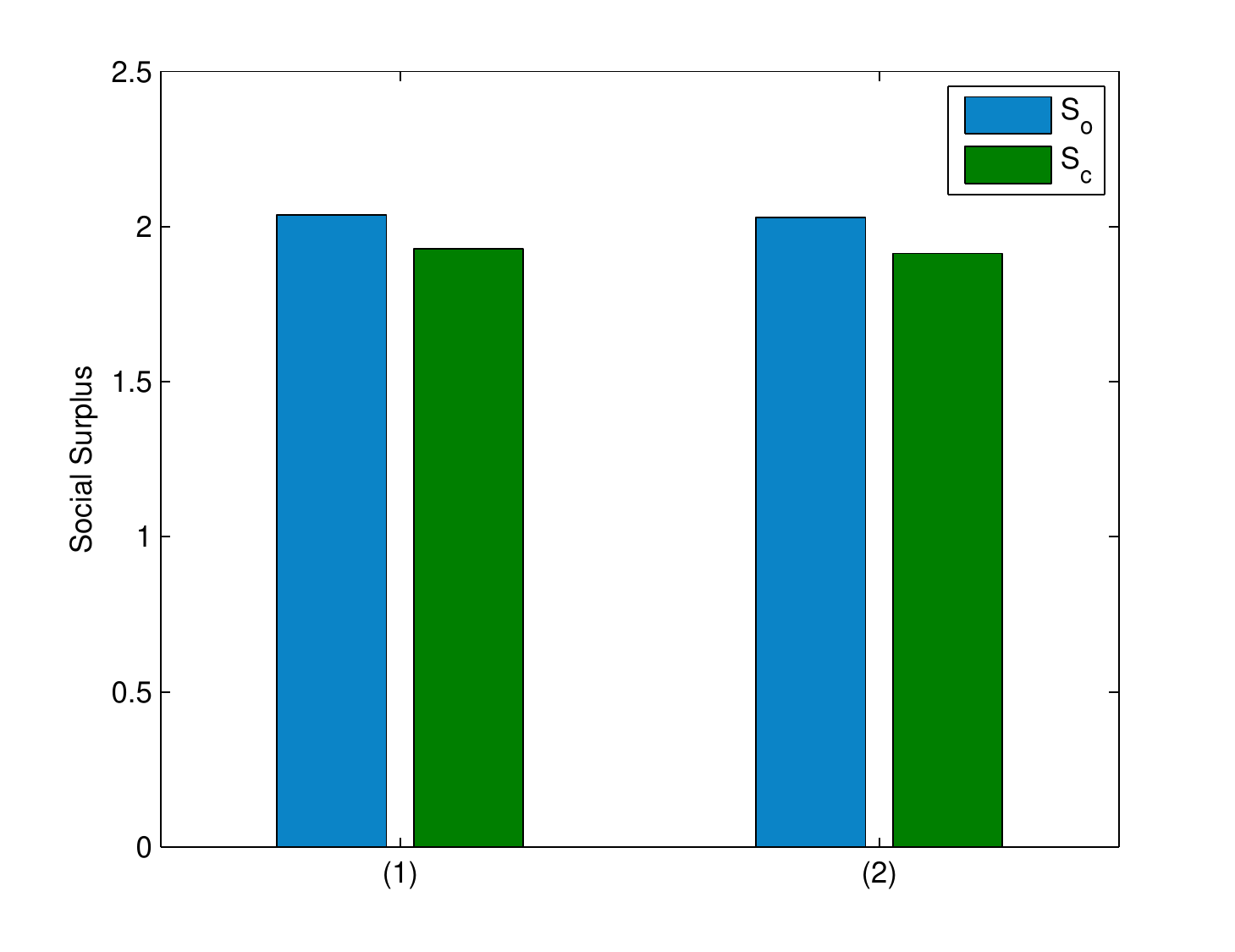}
		\vspace{-2mm}
	\caption{Comparison between optimal contract and social surplus maximization scheme.}
	\label{fig:d-social}
\end{figure}

\begin{figure}[tbp]
	\centering
		\includegraphics[width=83mm]{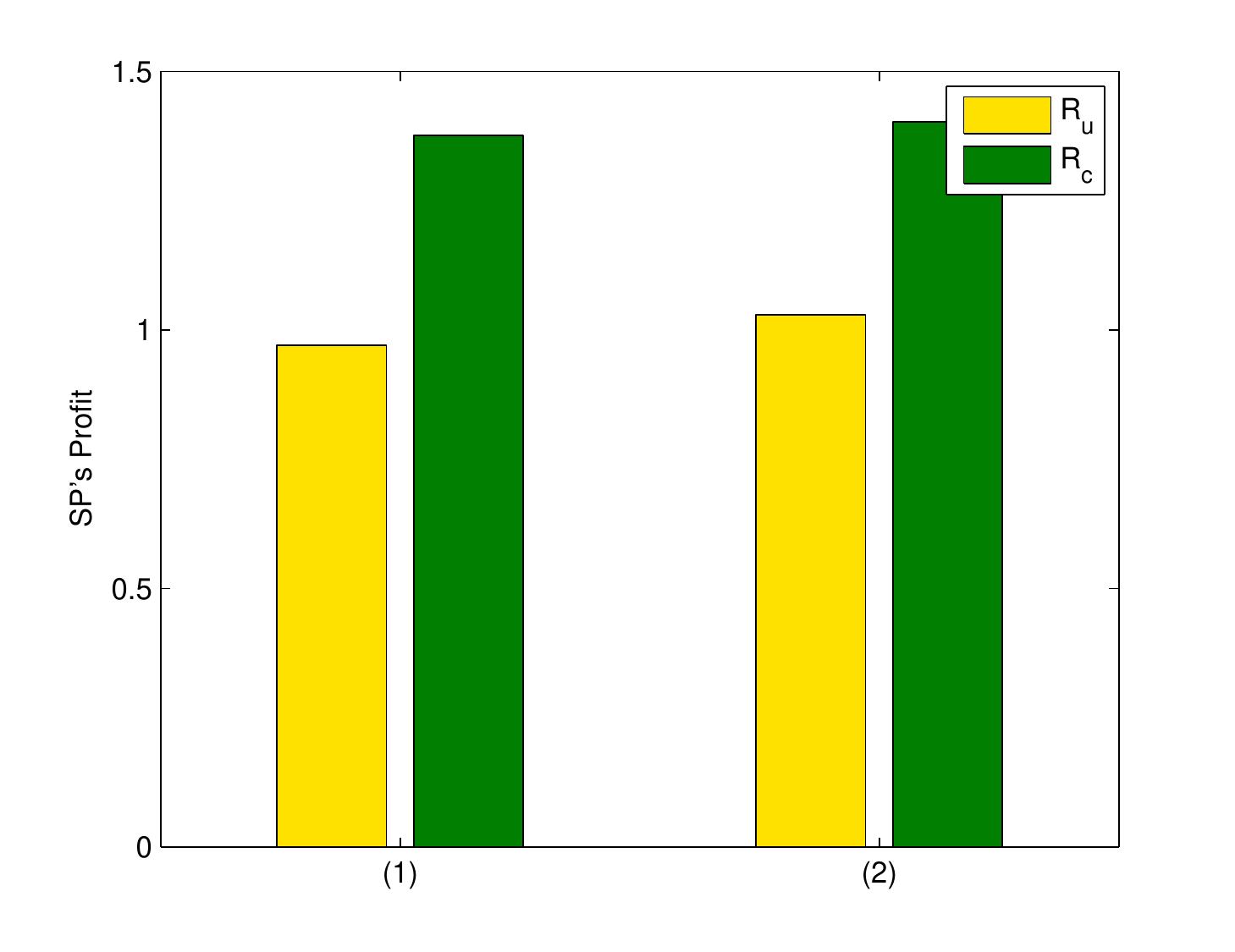}
		\vspace{-2mm}
	\caption{Comparison between optimal contract and the monthly-period scheme.}
	\label{fig:d-spprofit}
\end{figure}

We define the social surplus generated by the contract with period $t_i$, denoted by $S(\sigma_i, t_i)$, as the aggregate utilities of SP and the consumer with type $\sigma_i$, i.e.,
\begin{equation}
S(\sigma_i,t) \triangleq V(\sigma_i,t_i) - C(t_i).\notag
\end{equation}
Figure \ref{fig:d-period} and Figure \ref{fig:d-price} show the period and price assignments in the optimal contract.
The blue curve in Figure \ref{fig:d-period} presents the social optimal period assignments, which maximize the social surplus.
Specifically, we have $t_i = \arg\max_{\substack{t}}S(\sigma_i,t)$.
From Figure \ref{fig:d-period}, we can see that the social optimal period assignments are always smaller than that in the optimal contract. This is because the optimal contract is aimed to maximize the SP's overall profit rather than the social surplus. The SP prefers to increase the period assigned to the higher type consumers in order to increase the interest of the lower type consumers in the short period contract items. Hence, by increasing the price of the short period contract items, the SP can increase the profit.

Figure \ref{fig:d-social} shows the social surplus in the optimal contract and social optimal assignments. The bars $S_o$ denote the social surplus in the social optimal period assignments while the social surplus of our proposed optimal contract is represented by the bars $S_c$.
We can see that $S_o$ is larger than $S_c$ in both Case (1) and Case (2), since the optimal contract is aimed to maximize the SP's profit rather than social surplus.
However, our proposed optimal contract can achieve around $93\%$ of the maximum social surplus.
%Fig.5 shows the social surplus and the profit of the SP in the optimal contract and social optimal assignments. The bars $S_o$ and $R_o$ denote the social surplus and the profit of the SP in the social optimal period assignment and price assignment. The social surplus and the profit of SP of our proposed optimal contract is represented by the bars $S_c$ and $R_c$ correspondingly. From the figure, we can see that $S_o$ is larger than $S_c$ in both Case (1) and Case (2), since the optimal contract is aimed to maximize the SP's profit rather than social surplus. For the same reason, we have $R_c$ always larger than $R_o$.

The comparison between our proposed optimal contract and conventional monthly-period scheme in terms of SP's profit is shown in Figure \ref{fig:d-spprofit}.
The bars $R_u$ and $R_c$ denote the profits of the SP in the monthly-period scheme and our optimal contract, respectively.
We can see that our optimal contract can increase the SP's profit by 41\% and 37\% for Case (1) and Case (2) correspondingly.
%This is because the optimal contract increases the period assigned to the consumers with larger consumer types in order to increase the prices of the both the long period contract items and short period contract items.
\wei{This is because the optimal contract increases the period assigned to the consumers with larger consumer types in order to increase the interest of the smaller type consumers in the short period contract items.
Hence, by increasing the price of the long
period contract items and decreasing the period of the short period items, the SP can increase its profit.}
%the prices of the both the long period contract items and short period contract items.
%Hence, the SP's profit is increased.

\begin{figure}[tbp]
	\centering
		\includegraphics[width=83mm]{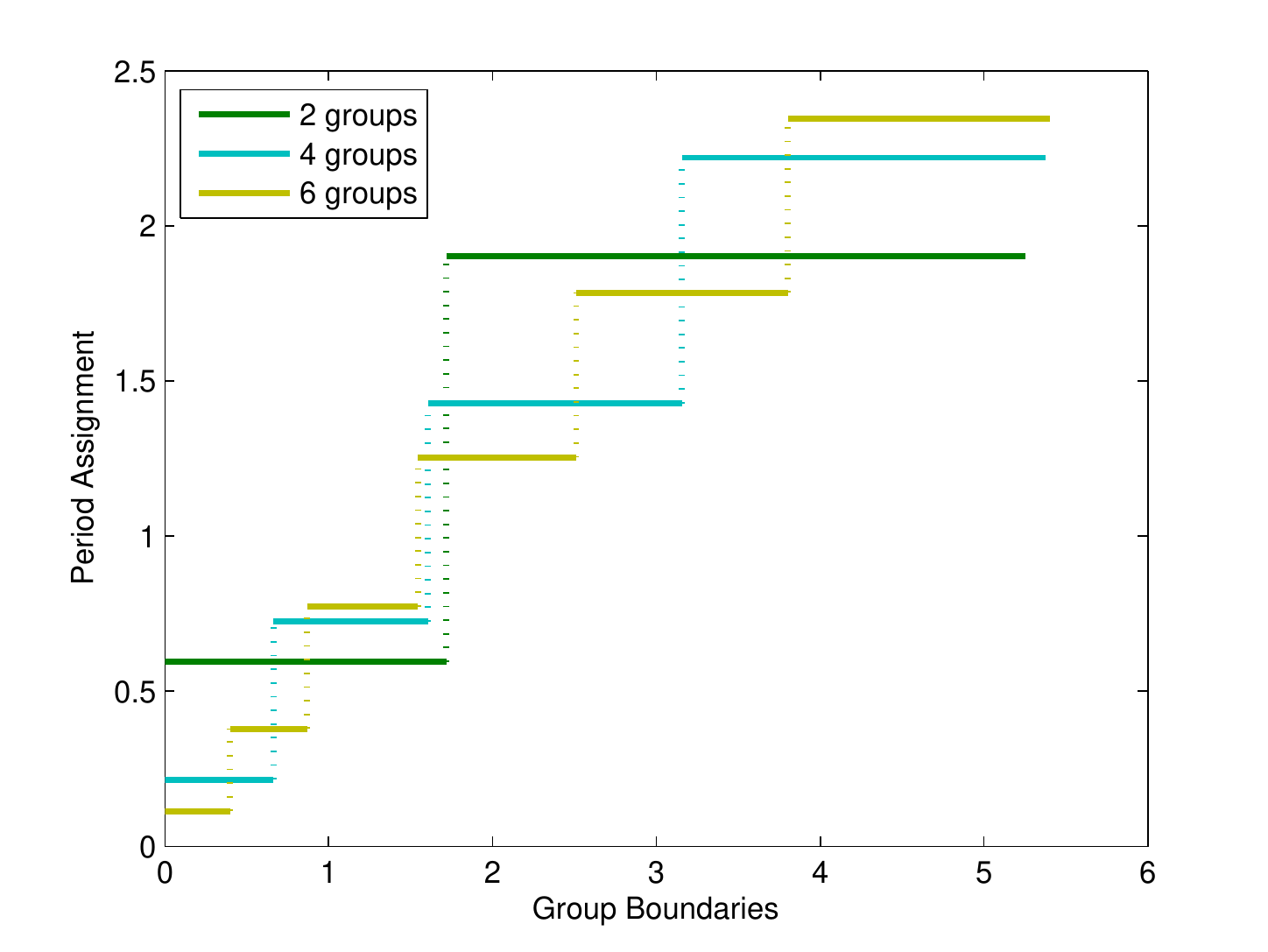}
\vspace{-3mm}
	\caption{Period assignments in continuous-consumer-type model.}
	\label{fig:c-period}
\vspace{-2mm}
\end{figure}

\begin{figure}[tbp]
	\centering
		\includegraphics[width=83mm]{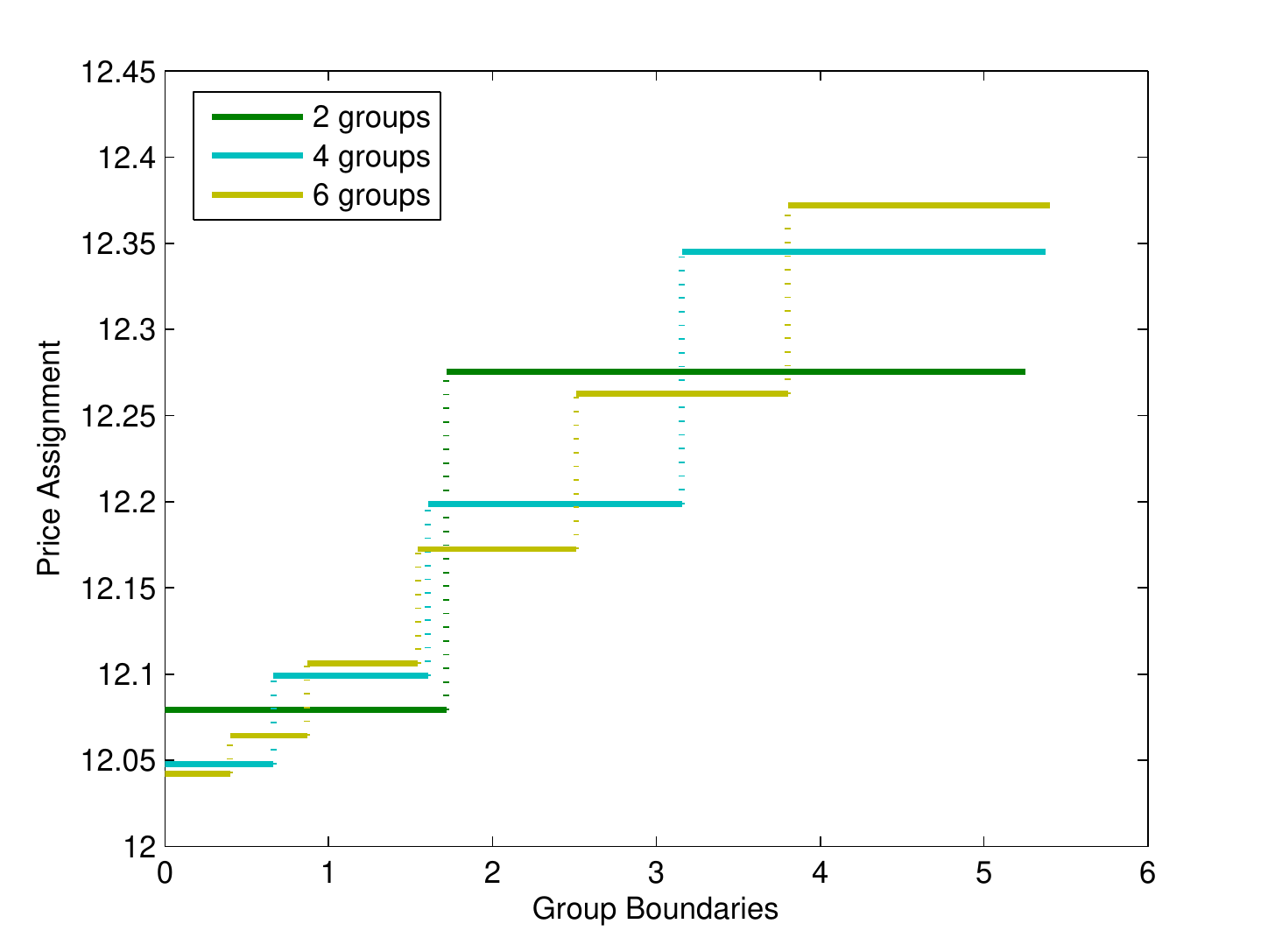}
\vspace{-3mm}
	\caption{Price assignments in continuous-consumer-type model.}
	\label{fig:c-price}
\vspace{-2mm}
\end{figure}

%%%%%%%%%%%%%%%%%%%%%%%%%%%%%%%%%%%%%%%%%%%%%%%%%%%%%%%%%%%%%%%%
\subsection{Continuous-Consumer-Type}
%%%%%%%%%%%%%%%%%%%%%%%%%%%%%%%%%%%%%%%%%%%%%%%%%%%%%%%%%%%%%%%%
Next, we implement the proposed contract for continuous-consumer-type model.
Figure \ref{fig:c-period} and Figure \ref{fig:c-price} show the period and price assignments in the contract with different numbers of contract items. The x-axis represents the group boundaries.
In these two figures, the distribution of consumer types (i.e., $g(\sigma)$) follows uniform distribution with $\sigma_{min}=0$ and $\sigma_{max}=6$.
%From the figures we can find that the larger number of groups leads to the larger period and price assignments.
%This is because, as we increase the number of groups, the SP can increase the valuation of the consumers with larger consumer-type by increasing the period assigned to them. As a result, the SP can increase the price of the long period contract item, then the price of the short period contract can also be increased.

From the figures we can find that as the number of group increases, more consumers are under served.
% (e.g., in Figure \ref{fig:c-period}, the number of consumers under served is $NG(4.9)$ in $1$-group case and it increases to $NG(5.2)$ in $2$-group case),
%2) the period assigned to consumers with large consumer types increase (e.g., in Figure \ref{fig:c-period}, the period assigned to the consumer with type $\sigma=5$ in $2$-group case is $1.9$ and it increases to $2.3$ in $4$-group case),
%and 3) the period assigned to consumers with small consumer types decrease (e.g., in Figure \ref{fig:c-period}, the period assigned to the consumer with type $\sigma=0$ in $2$-group case is $0.6$ and it decreases to $0.25$ in $4$-group case).
%For example, $t_1$ and $\pi_1$ for $2$-group case is smaller than $t_1$ and $\pi_1$ for $1$ group case.
From IP Property in Proposition \ref{pop:proposition1}, with a given period length increment, the consumers with larger consumer type have a larger valuation increment than the consumers with smaller types.
Therefore, with more groups, the SP can serve consumers with a wider range, increase the price of contract item designed for consumers with large types and decrease the cost of serving the consumers with small types by increasing period assigned to consumers with large types and decreasing the period assigned to consumers with small types.
In all these ways, the SP can increase its overall profit.

\begin{figure}[tbp]
	\centering
		\includegraphics[width=92mm]{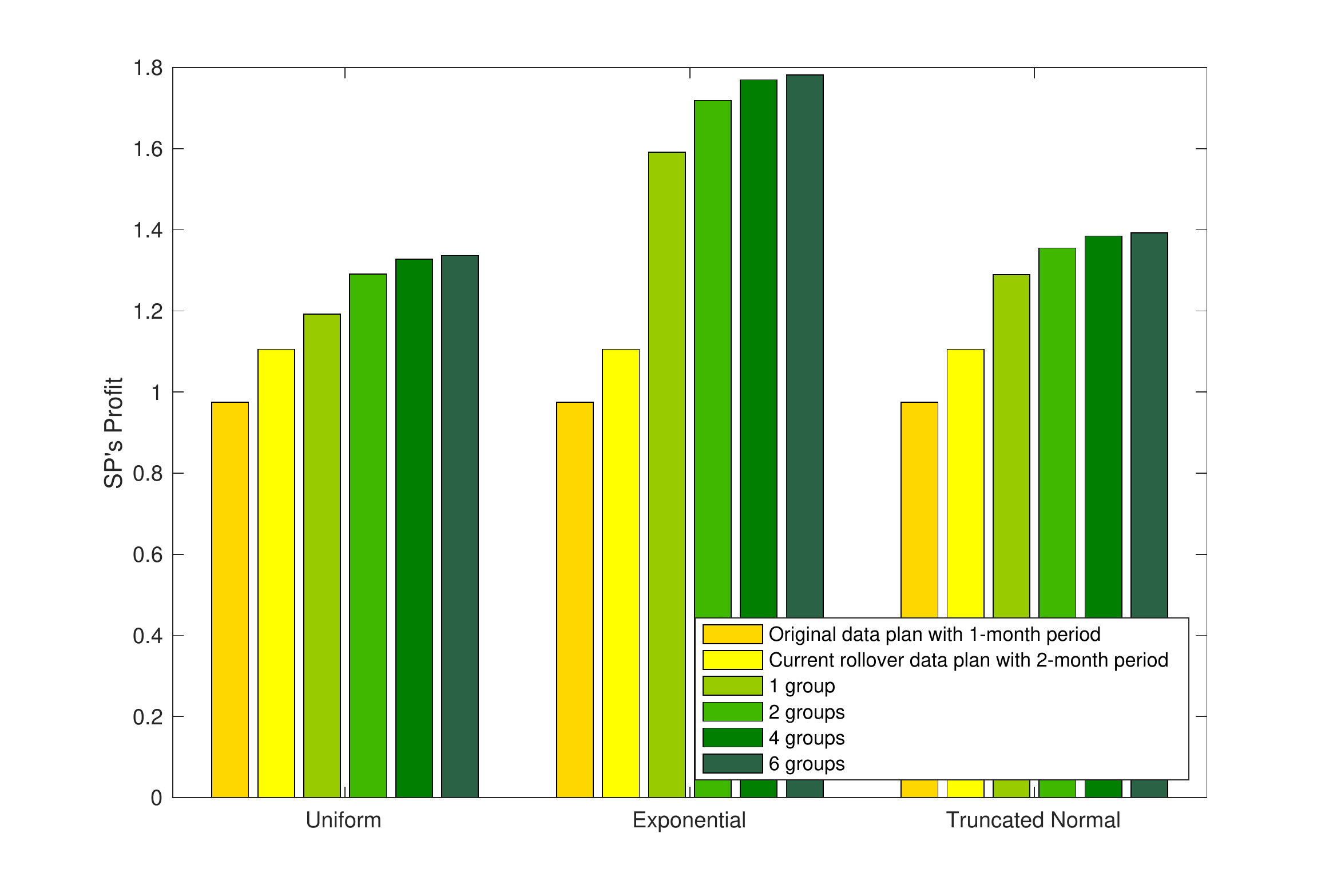}
\vspace{-6mm}
	\caption{Comparison among the original data plan with $1$-month period, current rollover data plan with $2$-month period and optimal contract in continuous-consumer-type model with different distributions of consumer types.}
	\label{fig:c-spproft}
\vspace{-3mm}
\end{figure}

Figure \ref{fig:c-spproft} represents the comparison between our proposed contracts with different group numbers and two state-of-art data plans in terms of SP's profit.
They are the original one-month data plan and the rollover data plan provided by AT\&T.
The rollover scheme can be seen as a contract with period of two months.
We run the simulation under the cases of 1) uniformly distributed consumer types, 2) exponentially distributed consumer types and 3) truncated normally distributed consumer types.
%This is because the flexibility is increased with the number of groups.
Comparing with the original data plan with $1$-month period, our contract with $6$ groups can increase the SP's profit by $37\%$, $92\%$ and $43\%$ in case 1), case 2) and case 3), respectively.
Besides, comparing with the rollover data plan with $2$-month period, our contract with $6$ groups can increase the SP's profit by $21\%$, $61\%$ and $26\%$ in case 1), case 2) and case 3), respectively.
From the figure we can find that the SP's profit is increasing with the number of groups.
Specifically, by increasing the number of groups from $1$ to $6$, the SP can increase its profit by $12$\%, $12$\% and $8$\% in case 1), case 2) and case 3), respectively;
by increasing the number of groups from $2$ to $6$, the SP can increase its profit by by $3.6$\%, $4.9$\% and $3.2$\% in case 1), case 2) and case 3), respectively.
Therefore, the SP needs to optimize the number of contract items instead of just offering one or two contract items.
Moreover, the increment from $4$ groups to $6$ groups is very small, i.e., the profit of the SP achieved by the contract with $4$ groups can reach over $98\%$ of that by the contract with $6$ groups.
Since more groups will lead to more operation costs and is not consumer friendly, we can conclude that $4$ groups is a suitable choice of group numbers, and it is in line with the number of contract items in real life \cite{CMHK}.

%%%%%%%%%%%%%%%%%%%%%%%%%%%%%%%%%%%%%%%%%%%%%%%%%%%%%%%%%%%%%%%%
%%%%%%%%%%%%%%%%%%%%%%%%%%%%%%%%%%%%%%%%%%%%%%%%%%%%%%%%%%%%%%%%
\section{Conclusion}
\label{sec:conclusion}
%%%%%%%%%%%%%%%%%%%%%%%%%%%%%%%%%%%%%%%%%%%%%%%%%%%%%%%%%%%%%%%%
%%%%%%%%%%%%%%%%%%%%%%%%%%%%%%%%%%%%%%%%%%%%%%%%%%%%%%%%%%%%%%%%
In this paper, we introduce the design of data plans with different lengths of period, in order to provide more time flexibility to consumers and increase the SP's profit.
We design a contract that contains a set of period-price combinations for each of the discrete-consumer-type model and the continuous-consumer-type model.
In the discrete-consumer-type model, each combination is intended for a consumer type, while in continuous-consumer-type model, each combination is intended for a range of consumer types.
We design the IC and IR constraints of the contracts, under which the consumer will select the contract item designed for him rather than the others.
We find the sufficient and necessary conditions of the feasible contract and design an optimal (sub-optimal) contract for the SP to maximize its overall profit in each model.
In the future, we will extend this work for two perspectives.
The first one is the user mobility. We will consider a city-wise SP, who can deploy service in several  cities, then the SP's price differentiation problem in different cities needs to be considered based on user mobilities.
The second one is the multi-dimensional data plan setting. For example, the SP can offer a two-dimensional contract, which includes both the length of the data period  and the volume of the data cap.

\section{Acknowledgments}

This work is supported by the National Natural Science
Foundation of China (Grant No. 61771162). Lin Gao
is the corresponding
author.

\bibliographystyle{ieeetr}	% (uses file "plain.bst")
%%\bibliography{reference}

%%%%%%%%%%%%%%%%%%%%%%%%%%%%%%%%%%%%%%%%%%%%%%%%%%%%%%%%%%%%%%%%%
%\bibliographystyle{IEEEtran}
%\bibliography{IEEEabrv,mybibfile}
%%%%%%%%%%%%%%%%%%%%%%%%%%%%%%%%%%%%%%%%%%%%%%%%%%%%%%%%%%%%%%%%%%

\begin{IEEEbiography}
[{\includegraphics[width=1in,height=1.25in,clip,keepaspectratio]{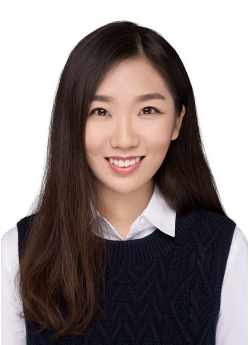}}]
{Yi Wei} (S'14) received her Ph.D. degree in the Department of Information Engineering at the Chinese University of Hong Kong in 2017. Her research interests include smart data pricing and interference alignment in wireless communication. She is a student member of IEEE.
\end{IEEEbiography}

\begin{IEEEbiography}
[{\includegraphics[width=1in,height=1.25in,clip,keepaspectratio]{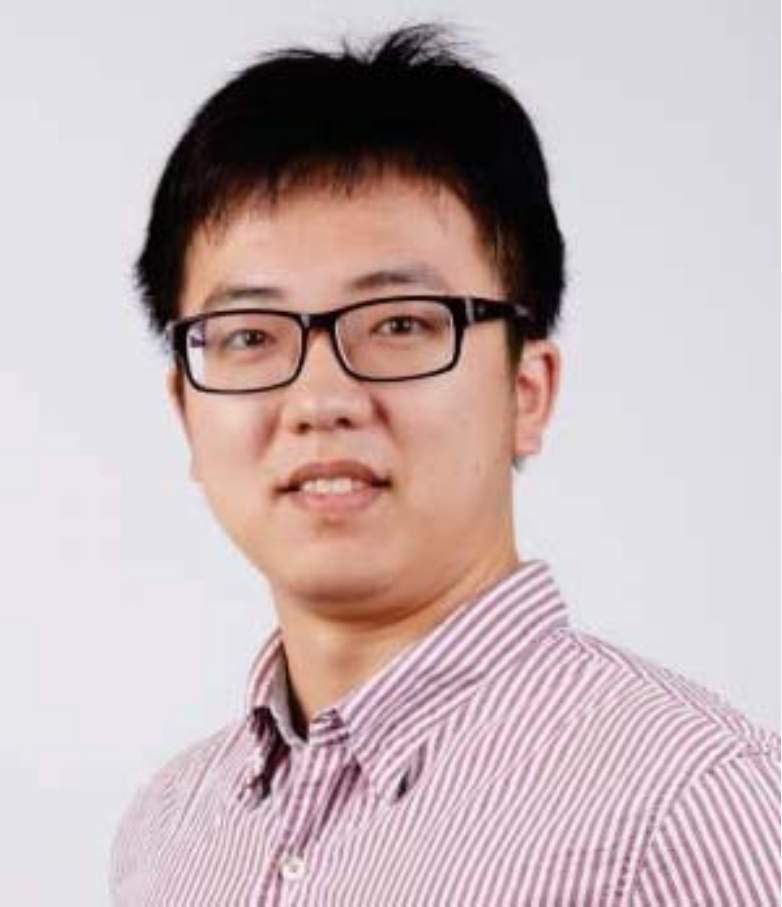}}]
{Junlin Yu} (S'14) received his Ph.D. degree in the Department of Information Engineering at the Chinese University of Hong Kong in 2017. His research interests include behavioral economical studies in wireless communication networks and optimization in mobile data trading. He is a student member of IEEE.
\end{IEEEbiography}

\begin{IEEEbiography}
[{\includegraphics[width=1in,height=1.25in,clip,keepaspectratio]{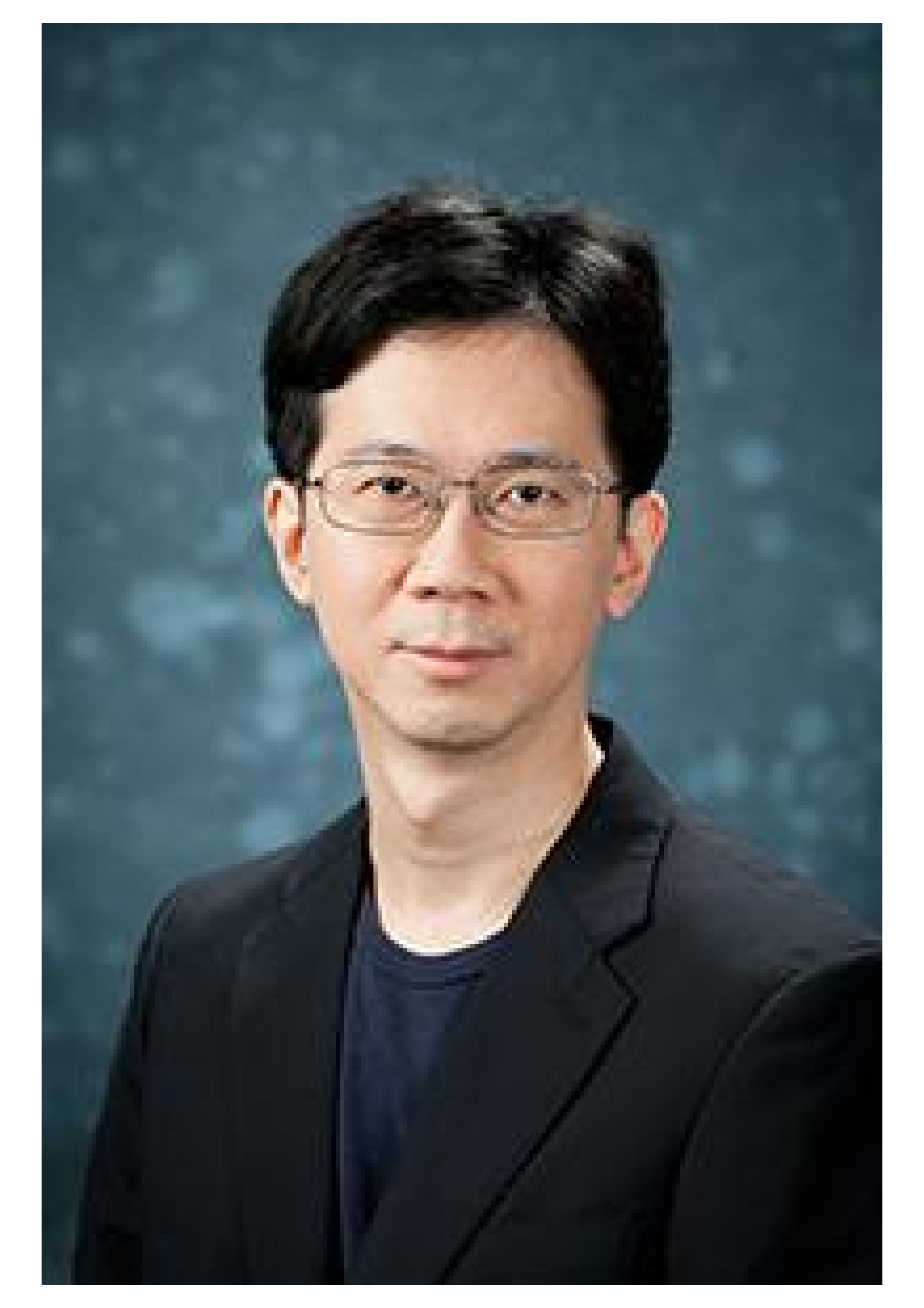}}]
{Tat-Ming Lok} (SM'03) received the B.Sc. degree in electronic engineering from the Chinese University of Hong Kong, Shatin, Hong Kong, in 1991 and the M.S.E.E. and Ph.D. degrees in electrical engineering from Purdue University, West Lafayette, IN, USA in 1992 and 1995 respectively. He was a Postdoctoral Research Associate with Purdue University. He then joined the Chinese University of Hong Kong, where he is currently an Associate Professor. His research interests include communication theory, communication networks, signal processing for communications, and wireless systems. He has served on Technical Program Committees of different international conferences, including the IEEE International Conference on Communications, IEEE Vehicular Technology Conference, IEEE Globecom, IEEE Wireless Communications and Networking Conference, and IEEE International Symposium on Information Theory. He was a co-chair of the Wireless Access Track of the IEEE Vehicular Technology Conference in 2004. He also served as an Associate Editor for the IEEE TRANSACTIONS ON VEHICULAR TECHNOLOGY from 2002 to 2008. He has been serving as an Editor for the IEEE TRANSACTIONS ON WIRELESS COMMUNICATIONS since 2015.
\end{IEEEbiography}

\begin{IEEEbiography}
[{\includegraphics[width=1in,height=1.25in,clip,keepaspectratio]{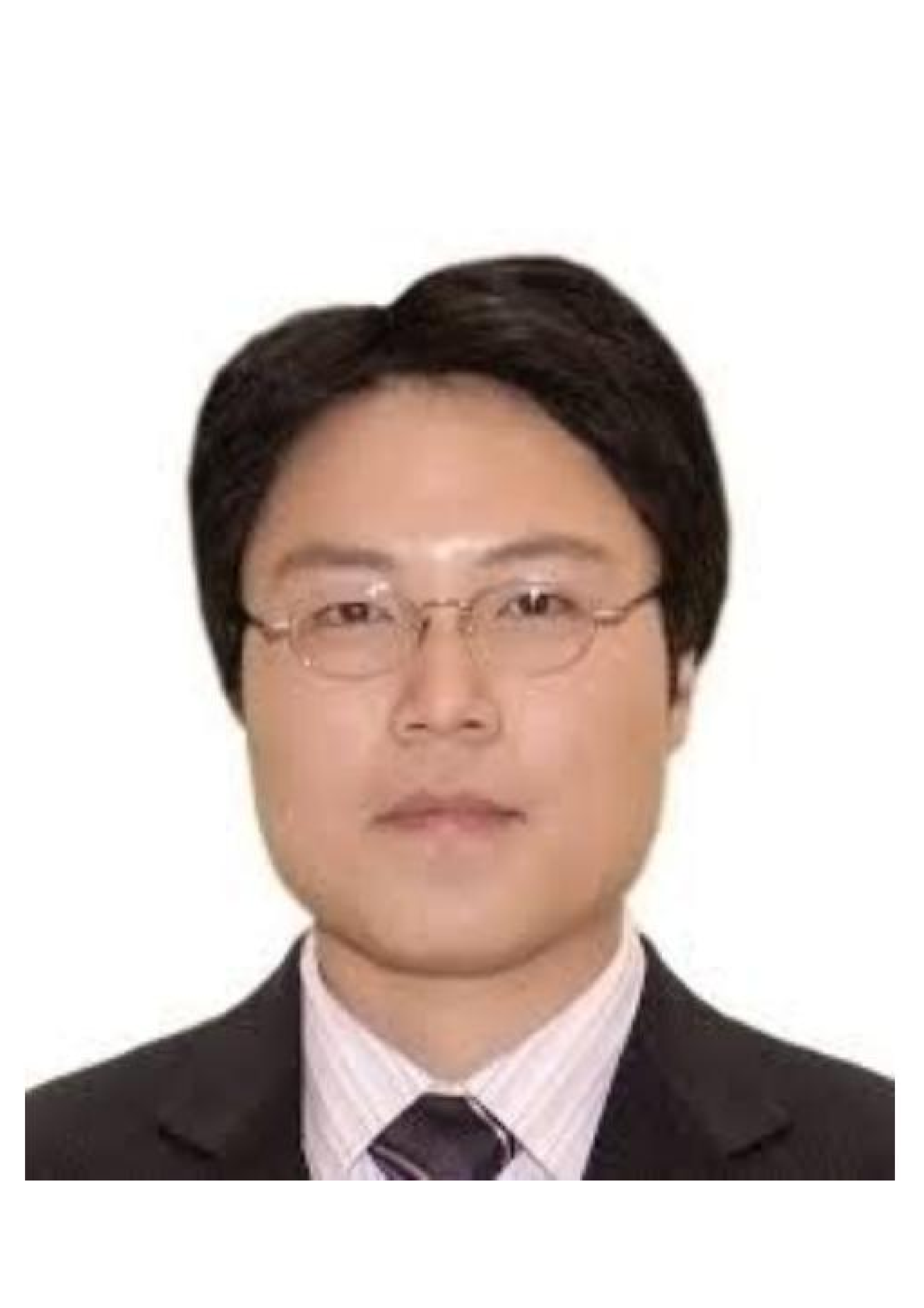}}]
{Lin Gao} (S'08-M'10-SM'16)
is an Associate Professor
with the School of Electronic and Information
Engineering, Harbin Institute of Technology,
Shenzhen, China. He received the Ph.D.
degree in Electronic Engineering from Shanghai
Jiao Tong University in 2010.
His main research
interests are in the area of network economics
and games, with applications in wireless communications
and networking.
He was a co-recipient of three Best Paper Awards from WiOpt 2013, 2014, 2015, and one Best Paper Award Finalist from IEEE INFOCOM 2016.
He received the
IEEE ComSoc Asia-Pacific Outstanding Young
Researcher Award in 2016.
\end{IEEEbiography}

\clearpage

\thispagestyle{empty}

\appendices

%%%%%%%%%%%%%%%%%%%%%%%%%%%%%%%%%%%%%%%%%%%%%%%%%%%%%%%%%%%%%%%%
\section{Proof of Proposition \ref{pop:proposition1}}
\label{sec:proofpp1}
%%%%%%%%%%%%%%%%%%%%%%%%%%%%%%%%%%%%%%%%%%%%%%%%%%%%%%%%%%%%%%%%

	First, we have
	\begin{align}
	    V_{\sigma t}(\sigma,t) &=\frac{\alpha}{2t^{1.5}}\!\int_{\frac{\sqrt{t}\Delta{q}}{\sigma}}^{+\infty}xf(x|0,1)\ud{x} + \frac{{\alpha\Delta{q}}^2}{2\sqrt{t}\sigma^2}f(\frac{\sqrt{t}\Delta{q}}{\sigma}|0,1)\notag\\
	    &>0.\notag
	\end{align}
	 Since $\sigma > \sigma'$ and $t > t'$, we can rewrite \eqref{eq:ip1} as:
	\begin{align}
	&V(\sigma,t)-V(\sigma,t')-\Big(V(\sigma',t)-V(\sigma',t')\Big) \notag\\
	=&\int_{t'}^{t}V_{t}(\sigma,x)\ud{x} - \int_{t'}^{t}V_{t}(\sigma',x)\ud{x}\notag\\
	=&\int_{t'}^{t}\Big(\int_{\sigma'}^{\sigma}V_{\sigma t}(y,x)\ud{y}\Big)\ud{x}>0.\notag
	\end{align}

%%%%%%%%%%%%%%%%%%%%%%%%%%%%%%%%%%%%%%%%%%%%%%%%%%%%%%%%%%%%%%%%
\section{Proof of Lemma \ref{lm:lemma1}}
\label{sec:prooflm1}
%%%%%%%%%%%%%%%%%%%%%%%%%%%%%%%%%%%%%%%%%%%%%%%%%%%%%%%%%%%%%%%%

We prove the lemma by contradiction. If $\sigma_i > \sigma_j$ and $t_j>t_i$ hold at the same time, then from the IP property we have:
\begin{align}
&V(\sigma_i,t_j) -V(\sigma_i,t_i)>V(\sigma_j,t_j)-V(\sigma_j,t_i)\notag\\
\Rightarrow &V(\sigma_i,t_j)+V(\sigma_j,t_i)>V(\sigma_i,t_i)+V(\sigma_j,t_j),\notag
\end{align}
which violates the IC constraint:
\begin{displaymath}
\begin{array}{ll}
&\begin{array}{ll}
V(\sigma_i,t_i) - \pi_i \geq V(\sigma_i,t_j)-\pi_j\\
V(\sigma_j,t_j) - \pi_j \geq V(\sigma_j,t_i)-\pi_i
\end{array} \Big\}\Rightarrow \\
&V(\sigma_i,t_i) +V(\sigma_j,t_j) \geq V(\sigma_i,t_j)+V(\sigma_j,t_i).\notag
\end{array}
\end{displaymath}

%%%%%%%%%%%%%%%%%%%%%%%%%%%%%%%%%%%%%%%%%%%%%%%%%%%%%%%%%%%%%%%%
\section{Proof of Lemma \ref{lm:lemma2}}
\label{sec:prooflm2}
%%%%%%%%%%%%%%%%%%%%%%%%%%%%%%%%%%%%%%%%%%%%%%%%%%%%%%%%%%%%%%%%

\begin{proof} We prove the right direction first and then the left:
\begin{enumerate}
	\item
	From the IC constraint, if $t_i > t_j$, we have:
	\begin{align}
	&V(\sigma_j,t_j)-\pi_j\geq V(\sigma_j,t_i)-\pi_i\notag\\
	\Rightarrow &V(\sigma_j,t_j)-V(\sigma_j,t_i) \geq \pi_j - \pi_i.\notag
	\end{align}
	Since $V_t(\sigma,t) >0$ and $t_i>t_j$, we can find:
	\begin{equation}
    V(\sigma_j,t_j) - V(\sigma_j,t_i)<0
    \Rightarrow \pi_j -\pi_i < 0.\notag
	\end{equation}
	
	\item
	From the IC constraint,  we have:
	\begin{align}
	&V(\sigma_i,t_i)-\pi_i\geq V(\sigma_i,t_j)-\pi_j\notag\\
	\Rightarrow &V(\sigma_i,t_i)-V(\sigma_i,t_j) \geq \pi_i - \pi_j.\notag
	\end{align}
	If $\pi_i > \pi_j$, then
	\begin{equation}
	V(\sigma_i,t_i)>V(\sigma_i,t_j).\notag
	\end{equation}
	Since $V_t(\sigma,t) >0$ and $V(\sigma_i,t_i)>V(\sigma_i,t_j)$, we can find $t_i > t_j$.
\end{enumerate}
\end{proof}

%%%%%%%%%%%%%%%%%%%%%%%%%%%%%%%%%%%%%%%%%%%%%%%%%%%%%%%%%%%%%%%%
\section{Proof of Theorem \ref{th:theorem1}}
\label{sec:app1}
%%%%%%%%%%%%%%%%%%%%%%%%%%%%%%%%%%%%%%%%%%%%%%%%%%%%%%%%%%%%%%%%
We first prove the sufficiency of the conditions in Theorem \ref{th:theorem1} and then the necessity of them.
\begin{enumerate}
	\item Sufficiency.

	We can use mathematical induction to prove the sufficiency of the conditions. We use $\mathscr{C}_d(i)$ to denote the subset of contract $\mathscr{C}_d$ containing the last $i$ contract items, i.e., $\mathscr{C}_d(i) = \{(\sigma_j,t_j)|j =I-i+1,\ldots,I \}$.

	We first prove $\mathscr{C}_d(i)$ is feasible. Since there is only one consumer type $\sigma_I$ in this contract, we only need to justify the IR constraint, which can be proved directly from the condition \eqref{eq:t12} in Theorem \ref{th:theorem1}.

	Then, we prove if $\mathscr{C}_d(i)$ is feasible, $\mathscr{C}_d(i+1)$ is also feasible. We have the following conditions if $\mathscr{C}_d(i)$ is feasible.
	\begin{align}
	&V\!(\sigma_{I\!-\!i\!+\!1},\!t_{I\!-\!i\!+\!1}) \!-\!\pi_{I\!-\!i\!+\!1}\! \geq \! V(\sigma_{I\!-\!i\!+\!1},\!t_m)\!-\! \pi_m,\!\forall m \in \!\mathcal{I}_{i}\footnotemark, \label{eq:c1}\\
	&V\!(\sigma_{m},\!t_{m}) \!-\!\pi_{m}\! \geq\!  V\!(\sigma_{m},\!t_{I\!-\!i\!+\!1})\!-\! \pi_{I\!-\!i\!+\!1},\!\forall m\in \!\mathcal{I}_{i}, \label{eq:c5}\\
	&V(\sigma_m,t_m) - \pi_m \geq 0,~~\forall m\in \mathcal{I}_{i}.\label{eq:c2}
	\end{align}
    From the conditions \eqref{eq:t13} and \eqref{eq:t14} of Theorem \ref{th:theorem1}, we have:
	\begin{align}
	&\pi_{I\!-\!i}\! \geq\! \pi_{I\!-\!i+1}\! +\! V\!(\sigma_{I\!-\!i+1},\!t_{I\!-\!i})\!-\! V\!(\sigma_{I\!-\!i+1},\!t_{I\!-\!i+1}),\label{eq:c4}\\
	&\pi_{I\!-\!i}\! \leq\! \pi_{I\!-\!i+1}\! +\! V(\sigma_{I\!-\!i},t_{I\!-\!i})\!-\! V(\sigma_{I-i},t_{I\!-\!i+1}).\label{eq:c3}
	\end{align}
	\footnotetext{The set $\mathcal{I}_{i}=\{I-i+1,I-i+2\ldots,I\}$.}
	With the above conditions and the IP property, we are going to prove that the IC and IR constraints for the contract $\mathscr{C}_d(i+1)$ are satisfied.

	IC constraints:
	\begin{align}
	&V\!(\sigma_{I-i},\!t_{I-\!i})\!-\!\pi_{I-\!i} \!\geq\! V\!(\sigma_{I\!-\!i},\!t_m)\! -\! \pi_m, \forall m \in\! \mathcal{I}_{i}, \label{eq:r2}\\
	&V\!(\sigma_m,\! t_m\!)\! -\! \pi_m \!\geq\! V\!(\sigma_m,\! t_{I-i})\! -\! \pi_{I-i}, \forall m \in\! \mathcal{I}_{i}, \label{eq:r3}
	\end{align}
	IR constraint:
	\begin{equation}
	V\!(\sigma_{I-i},\!t_{I-\!i})\!-\! \pi_{I-\!i} \geq 0. \label{eq:r1}
	\end{equation}
	If the above constrains are satisfied, $\mathscr{C}_d(i+1)$ is feasible.

    By adding up \eqref{eq:c1} and \eqref{eq:c3}, we have:
    \begin{align}
    V\!(\sigma_{I\!-\!i},t_{I\!-\!i}) \!-\! \pi_{I\!-\!i}
    \!\geq\! V\!(&\sigma_{I\!-\!i},t_{I\!-\!i+\!1}) \!+\!V\!(\sigma_{I-i+1},t_m)\notag\\
    &-V(\sigma_{I\!-\!i+\!1}, t_{I\!-\!i+\!1})\!-\! \pi_{m},\notag
    \end{align}
    for all $m \!\in \!\mathcal{I}_{i}$. From the IP property, we have:
    \begin{align}
    V(\sigma_{I-i+1},t_m)&-V(\sigma_{I-i+1}, t_{I-i+1})\notag\\
    \geq &V(\sigma_{I-i},t_m)-V(\sigma_{I-i},t_{I-i+1}),\notag
    \end{align}
    for all $m \in \mathcal{I}_{i}$, since $\sigma_{I-i+1} > \sigma_{I-i}$ and $t_{m} \geq t_{I-i+1}$. By adding up the above two equations, \eqref{eq:r2} is proved.

    By adding up \eqref{eq:c5} and \eqref{eq:c4}, we have:
    \begin{align}
    V\!(\sigma_m,t_m) - \pi_m
    \geq\! V\!(&\sigma_m,t_{I-i+1}) \!+\!V\!(\sigma_{I-i+1},t_{I-i})\notag\\
    &-V(\sigma_{I-i+1}, t_{I-i+1})- \pi_{I-i},\notag
    \end{align}
    for all $m\! \in\! \mathcal{I}_{i}$. From the IP property, we have:
    \begin{align}
    V(\sigma_m,t_{I-i+1})&-V(\sigma_m, t_{I-i})\notag\\
    \geq &V(\sigma_{I-i+1},t_{I-i+1})-V(\sigma_{I-i+1},t_{I-i}),\notag
    \end{align}
    for all $m \in \mathcal{I}_{i}$, since $\sigma_m \geq \sigma_{I-i+1}$ and $t_{I-i+1} \geq t_{I-i}$. By adding up the above two equations, \eqref{eq:r3} is proved.

    From \eqref{eq:r2}, \eqref{eq:c2} and the property $V_{\sigma}(\sigma,t) < 0$, we have:
    \begin{align}
    V(\sigma_{I-i},t_{I-i}) -\pi_{I-i}
    \geq &V(\sigma_{I-i},t_m) - \pi_m\notag\\
    \geq &V(\sigma_m,t_m) - \pi_m \geq 0,\notag
    \end{align}
    and \eqref{eq:r1} is proved.

	\item Necessity.

	Lemma \ref{lm:lemma1} shows the necessity of the condition \eqref{eq:t11} in Theorem \ref{th:theorem1}. The condition \eqref{eq:t12} in Theorem \ref{th:theorem1} can be derived from the IR constraint. The conditions \eqref{eq:t13} and \eqref{eq:t14} can be proved by the IC constraints for types $\sigma_i$ and $\sigma_{i+1}$ (i.e., $V(\sigma_{i+1}, t_{i+1})-\pi_{i+1} \geq V(\sigma_{i+1}-\pi_i)$ and $V(\sigma_i, t_i)-\pi_i \geq V(\sigma_i, t_{i+1})-\pi_{i+1}$).

\end{enumerate}

%%%%%%%%%%%%%%%%%%%%%%%%%%%%%%%%%%%%%%%%%%%%%%%%%%%%%%%%%%%%%%%%
\section{Proof of Theorem \ref{th:theorem2}}
\label{sec:app2}
%%%%%%%%%%%%%%%%%%%%%%%%%%%%%%%%%%%%%%%%%%%%%%%%%%%%%%%%%%%%%%%%
Condition \eqref{eq:t234} is equivalent to the following two conditions
\begin{align}
&\pi_{k} \geq \pi_{k+1} + V(\sigma_{k}^{[max]},t_k)- V(\sigma_{k}^{[max]},t_{k+1}).\label{eq:t23}\\
&\pi_{k} \leq \pi_{k+1} + V(\sigma_{k}^{[max]},t_k)- V(\sigma_{k}^{[max]},t_{k+1}).\label{eq:t24}
\end{align}
We first prove that the constraints in \eqref{eq:t23} and \eqref{eq:t24} are sufficient and necessary conditions for the following two constraints:
\begin{align}
&\pi_{k} \geq \pi_{k+1} \!+\! V(\sigma,t_k)\!-\! V(\sigma,t_{k+1}), ~\forall \sigma \in [\sigma_{k}^{[max]}, \sigma_{k+1}^{[max]}].\label{eq:t3}\\
&\pi_{k} \leq \pi_{k+1} \!+\! V(\sigma,t_k)\!-\! V(\sigma,t_{k+1}),  ~\forall \sigma \in [\sigma_{k-1}^{[max]}, \sigma_{k}^{[max]}].\label{eq:t4}
\end{align}
\begin{enumerate}
	\item Sufficiency.
	
From \eqref{eq:t23}, we have
\begin{align}
V(\sigma_{k}^{[max]}, t_{k+1})-V(\sigma_{k}^{[max]},t_k)\geq \pi_{k+1}-\pi_{k}.\notag
\end{align}
The IP property implies that
\begin{align}
&V(\sigma,t_{k+1})-V(\sigma,t_k)\notag\\
&~~~~~~~\geq V(\sigma_{k}^{[max]},t_{k+1})-V(\sigma_{k}^{[max]},t_k),\notag\\
&~~~~~~~~~~~~~~~~~~~~~~~~~~~~~~~~~~~~~\forall \sigma \in [\sigma_{k}^{[max]}, \sigma_{k+1}^{[max]}],\notag \\
\Rightarrow &V(\sigma, t_{k+1})-V(\sigma,t_k) \geq \pi_{k+1} - \pi_k,\notag\\
&~~~~~~~~~~~~~~~~~~~~~~~~~~~~~~~~~~~~~\forall \sigma \in [\sigma_{k}^{[max]}, \sigma_{k+1}^{[max]}].\notag
\end{align}
Hence, \eqref{eq:t3} is proved. In addition, from \eqref{eq:t24}, we have
\begin{align}
V(\sigma_{k}^{[max]}, t_{k+1})-V(\sigma_{k}^{[max]},t_k)\leq \pi_{k+1}-\pi_{k}.\notag
\end{align}
The IP property implies that
\begin{align}
&V(\sigma,t_{k+1})-V(\sigma,t_k)\notag\\
&~~~~~~\leq V(\sigma_{k}^{[max]},t_{k+1})-V(\sigma_{k}^{[max]},t_k),\notag\\
&~~~~~~~~~~~~~~~~~~~~~~~~~~~~~~~~~~~~~\forall \sigma \in [\sigma_{k-1}^{[max]}, \sigma_{k}^{[max]}],\notag \\
\Rightarrow &V(\sigma, t_{k+1})-V(\sigma,t_k)\leq\pi_{k+1}-\pi_k,\notag\\
&~~~~~~~~~~~~~~~~~~~~~~~~~~~~~~~~~~~~~\forall \sigma \in [\sigma_{k-1}^{[max]}, \sigma_{k}^{[max]}].\notag
\end{align}
Hence, \eqref{eq:t4} is proved.

From the above derivations, we know that the conditions in \eqref{eq:t23} and \eqref{eq:t24} are the sufficient conditions of \eqref{eq:t3} and \eqref{eq:t4}.
	
\item Necessity
	
	Since $\sigma_{k}^{[max]} \!\in\! [\sigma_{k}^{[max]}, \sigma_{k+1}^{[max]}]$ and $\sigma_{k}^{[max]} \in [\sigma_{k-1}^{[max]},$ $\sigma_{k}^{[max]}]$, the necessity of the conditions in \eqref{eq:t23} and \eqref{eq:t24} to the conditions in \eqref{eq:t3} and \eqref{eq:t4} can be obtained directly.

\end{enumerate}

By now, we have proved that the conditions in Theorem \ref{th:theorem2} are equivalent to the conditions \eqref{eq:t21}, \eqref{eq:t22}, \eqref{eq:t3} and \eqref{eq:t4}.
From Theorem \ref{th:theorem1}, we know that the IC and IR constraints are equivalent to the conditions in \eqref{eq:t21}, \eqref{eq:t22}, \eqref{eq:t3} and \eqref{eq:t4}. Therefore, Theorem \ref{th:theorem2} is proved.
%Next, we need to prove that the IC and IR constraints are equivalent to the conditions in \eqref{eq:t21}, \eqref{eq:t22}, \eqref{eq:t3} and \eqref{eq:t4}. The proof is similar to the proof of Theorem 1, due to space limitation, we skip the proof here.

%Since the proof of Lemma 5 is similar to the proof of Theorem 1, due to space limitation, we skip the proof here.

\thispagestyle{empty}

%%%%%%%%%%%%%%%%%%%%%%%%%%%%%%%%%%%%%%%%%%%%%%%%%%%%%%%%%%%%%%%%
\section{Proof of Lemma \ref{lm:lemma5}}
\label{sec:app3}
%%%%%%%%%%%%%%%%%%%%%%%%%%%%%%%%%%%%%%%%%%%%%%%%%%%%%%%%%%%%%%%%
Without loss of generality, we can let $t\Delta q^2 = x \sigma^2$, where $x \geq 0$.

When $x >1$, we have $\sigma^2 < t\Delta q^2$ and the value of $\frac{t\Delta q^2(\sigma^2-t\Delta q^2)}{\sigma^3(\sigma^2+t\Delta q^2)}$ is negative. Hence, \eqref{eq:max} is satisfied.

When $x \!\in \! [0,1]$, we rewrite the formula $\frac{t\Delta q^2(\sigma^2-t\Delta q^2)}{\sigma^3(\sigma^2+t\Delta q^2)}$ as $\frac{x(1-x)}{\sigma(1+x)}$. The second order derivative of $\frac{x(1-x)}{1+x}$ is $-\frac{4}{{(1+x)}^3}$, which is negative. Hence, the optimal solution $\hat{x}$ that leads to the maximum value of $\frac{x(1-x)}{1+x}$ satisfies
\begin{equation}
\frac{\partial \frac{x(1-x)}{1+x}}{\partial x}\bigg|_{x = \hat{x}} =0 \Rightarrow \frac{1-2\hat{x}-{\hat{x}}^2}{{(1+\hat{x})}^2} = 0\Rightarrow \hat{x} = \sqrt{2}-1.\notag
\end{equation}
Therefore, the maximum value of $\frac{x(1-x)}{1+x}$ is $3-2\sqrt{2}$. In other words, the maximum value of $\frac{t\Delta q^2(\sigma^2-t\Delta q^2)}{\sigma^3(\sigma^2+t\Delta q^2)}$ is $\frac{3-2\sqrt{2}}{\sigma}$ and \eqref{eq:max} is satisfied.

%%%%%%%%%%%%%%%%%%%%%%%%%%%%%%%%%%%%%%%%%%%%%%%%%%%%%%%%%%%%%%%%
\section{Proof of Proposition \ref{pop:proposition3}}
\label{sec:app4}
%%%%%%%%%%%%%%%%%%%%%%%%%%%%%%%%%%%%%%%%%%%%%%%%%%%%%%%%%%%%%%%%
The first order derivative of $\sum_{k=i}^jQ_k(\sigma)$ \rtwo{with} respect to $\sigma$ is
\begin{align}
&\frac{\partial \sum_{k=i}^jQ_k(\sigma)}{\partial \sigma} = \sum_{k=i}^j\frac{\partial Q_k(\sigma)}{\partial \sigma}\notag\\
 &=Ng(\sigma)\Big(\sum_{k=i}^jH_k(\sigma)+C(t_{j+1})-C(t_{i})\Big),\notag
\end{align}
where $H_k(\sigma)$ is defined in \eqref{eq:H}.
Since the first order derivative of $H_k(\sigma)$ is non-positive for all $k$, we have $\frac{\partial \sum_{k=i}^jH_k(\sigma)}{\partial \sigma} \leq 0$.
Hence, $\sum_{k=i}^jH_k(\sigma)$ crosses zero at most once. Together with the facts that \rtwo{i) $C(t_{j+1})-C(t_i)$ is a constant with respect to $\sigma$ and ii) $Ng(\sigma)$ is always positive, we have the result that $\sum_{k=i}^jQ_k(\sigma)$ is unimodal with respect to $\sigma$.}

%%%%%%%%%%%%%%%%%%%%%%%%%%%%%%%%%%%%%%%%%%%%%%%%%%%%%%%%%%%%%%%%
\section{Proof of Proposition \ref{pop:proposition4}}
\label{sec:app5}
%%%%%%%%%%%%%%%%%%%%%%%%%%%%%%%%%%%%%%%%%%%%%%%%%%%%%%%%%%%%%%%%
The statement is trivial if $\hat{\sigma}_1 =\hat{\sigma}_2$, thus we focus on the case of $\hat{\sigma}_1 > \hat{\sigma}_2$.

The statement can be proved if for arbitrary $\sigma_1 < \sigma_2$, we can find a $\sigma^*$ such that $\sum_{k=1}^2Q_k(\sigma^*)> \sum_{k=1}^2Q_k(\sigma_k)$.
There are two possible cases of $\sigma_2$: 1) $\sigma_2 \geq \hat{\sigma}_1$, and 2) $\sigma_2 < \hat{\sigma}_1$.

For the case that $\sigma_2 \geq \hat{\sigma}_1$.
Since $\hat{\sigma}_1$ is the optimal solution of $Q_1(\sigma_1)$, we have $Q_1(\hat{\sigma}_1)\geq Q_1(\sigma_1)$ for any $\sigma_1$.
Since $\sigma_2>\hat{\sigma}_1 > \hat{\sigma}_2$ and $Q_2$ is a unimodal function, we have $\frac{\partial Q_2(\sigma_2)}{\partial \sigma_2} \leq 0$ for any $\sigma_2 > \hat{\sigma}_2$, which means that $Q_2(\hat{\sigma}_1)\geq Q_2(\sigma_2)$.
Therefore, by letting $\sigma^* = \hat{\sigma}_1$, we have $\sum_{k=1}^2Q_k(\sigma^*)> \sum_{k=1}^2Q_k(\sigma_k)$.

For the case that $\sigma_2 < \hat{\sigma}_1$, by letting $\sigma^* = \sigma_2$, we have $\sum_{k=1}^2Q_k(\sigma^*)> \sum_{k=1}^2Q_k(\sigma_k)$. This is because $\sigma_1<\sigma_2<\hat{\sigma}_1$ and $Q_1$ is a unimodal function, which implies that 1) $\frac{\partial Q_1(\sigma_1)}{\partial \sigma_1} \geq 0$ for any $\sigma_1 < \sigma_2$, and 2) $Q_1(\sigma^*) = Q_1(\sigma_2)\geq Q_1(\sigma_1)$.

\thispagestyle{empty}

%%%%%%%%%%%%%%%%%%%%%%%%%%%%%%%%%%%%%%%%%%%%%%%%%%%%%%%%%%%%%%%%
\section{Proof of Proposition \ref{pop:proposition5}}
\label{sec:proofpp5}
%%%%%%%%%%%%%%%%%%%%%%%%%%%%%%%%%%%%%%%%%%%%%%%%%%%%%%%%%%%%%%%%
First, we rewrite the formula $\frac{x}{1-e^{-x}}$ as
\begin{equation}
\frac{x}{1-e^{-x}} = 1+\frac{e^{-x}+x-1}{1-e^{-x}}.\notag
\end{equation}
Proposition \ref{pop:proposition5} is proved if $e^{-x}+x-1\geq 0$.
Since the first order derivative of $e^{-x}+x-1$ is $1-e^{-x}$, which is positive for any $x>0$, the minimum value of $e^{-x}+x-1$ is then lower bounded by $e^{-0}+0-1 = 0$.

%% use section* for acknowledgment
%\ifCLASSOPTIONcompsoc
%  % The Computer Society usually uses the plural form
%  \section*{Acknowledgments}
%\else
%  % regular IEEE prefers the singular form
%  \section*{Acknowledgment}
%\fi
%
%
%The authors would like to thank...
%
%
%\ifCLASSOPTIONcaptionsoff
%  \newpage
%\fi

\end{document}